\setlist[enumerate]{itemsep=\smallskipamount,parsep=0pt,label={\rm \roman*)}}
\setlist[itemize]{itemsep=\smallskipamount,parsep=0pt}
\renewcommand{\defn}[1]{{\textit{\textbf{\boldmath #1}}}\xspace}
\renewcommand{\paragraph}[1]{\vspace{0.09in}\noindent{\bf \boldmath #1.}}
\newcommand{\interior}[1]{ {\kern0pt#1}^{\mathrm{o}} }
\crefname{equation}{}{} 
\crefname{enumi}{Step}{} 
\theoremstyle{plain}
\newtheorem{theorem}{Theorem}[section]
\newtheorem{lemma}[theorem]{Lemma}
\newtheorem{claim}[theorem]{Claim}
\newtheorem{corollary}[theorem]{Corollary}
\theoremstyle{definition}
\newtheorem{fact}[theorem]{Fact}
\newtheorem{definition}[theorem]{Definition}
\theoremstyle{remark}
\def\draft{1}
\newcommand{\anote}[1]{\ifnum\draft=1\textcolor{teal}{[\textbf{Anna:} #1]}\fi}
\newcommand{\nnote}[1]{\ifnum\draft=1\textcolor{purple}{[\textbf{Nathan:} #1]}\fi}
\newcommand{\bnote}[1]{\ifnum\draft=1\textcolor{orange}{[\textbf{Byron:} #1]}\fi}
\newcommand{\dnote}[1]{\ifnum\draft=1\textcolor{pink}{[\textbf{Divya:} #1]}\fi}
\newcommand{\dumb}{\textsf{GREEDY}\xspace}
\newcommand{\degreedy}{\textsf{DEGREEDY}\xspace}
\newcommand{\shortsighted}{\textsf{SHORTSIGHTED}\xspace}
\newcommand{\bruteforce}{\textsf{BRUTE-FORCE}\xspace}
\title{Matching Algorithms in the Sparse Stochastic Block Model}
\author{Anna Brandenberger, Byron Chin, Nathan S. Sheffield and Divya Shyamal}
\affiliation{Department of Mathematics, Massachusetts Institute of Technology}
\date{August 2023}
\begin{document}

\maketitle

\abstract{
The \textbf{stochastic block model} (SBM) is a generalization of the Erd\H{o}s--R\'enyi model of random graphs that describes the interaction of a finite number of distinct communities. In sparse Erd\H{o}s--R\'enyi graphs, it is known that a linear-time algorithm of Karp and Sipser achieves near-optimal matching sizes asymptotically almost surely, giving a law-of-large numbers for the matching sizes of such graphs in terms of solutions to an ODE \cite{ks}. We provide an extension of this analysis, identifying broad ranges of stochastic block model parameters for which the Karp--Sipser algorithm achieves near-optimal matching sizes, but demonstrating that it cannot perform optimally on general SBM instances. 

We also consider the problem of constructing a matching \textit{online}, in which the vertices of one half of a bipartite stochastic block model arrive one-at-a-time, and must be matched as they arrive. We show that the competitive ratio lower bound of 0.837 found by Mastin and Jaillet for the Erd\H{o}s--R\'enyi case \cite{er-online} is tight whenever the expected degrees in all communities are equal. We propose several linear-time algorithms for online matching in the general stochastic block model, but prove that despite very good experimental performance, none of these achieve online asymptotic optimality.
}{Stochastic Block Model, Matching Algorithms, Online Matching}

\section{Introduction}
A number of real-life allocation problems -- from organ donation to ridesharing or internet dating -- can be framed as matching problems. In many settings it is necessary to assign these matches very quickly without knowledge of future requests. This is particularly the case for the display of ads in search engine results: determining which ads to associate with the searched keywords must be done nearly-instantaneously. This problem has inspired a substantial line of research investigating the problem of online matching in random graphs.

Most prior work on this problem assumes the vertices on one side of a bipartite graph are drawn i.i.d.\ from an adversarially-chosen distribution. In that setting, upper bounds on the competitive ratio are known \cite{manshadi2011online}. However, in Erd\H{o}s--R\'enyi graphs, it is possible to exceed these bounds \cite{er-online}. One might therefore hope that graphs arising in nature tend to permit better online matching algorithms than adversarial distributions. In this work, we consider matching problems in the \defn{stochastic block model}, in which vertices belong to a constant number of ``classes'', and the probability of an edge between two vertices depends only on their classes. This is a broad class of structured distributions on graphs, which includes the Erd\H{o}s--R\'enyi model as a special case.

In stochastic block models, neither the optimal online matching algorithm nor the true matching number are known. We make progress towards both of these questions, finding expressions for the matching number in a number of regimes, and proposing and analyzing several heuristics for online matching.

\subsection{Preliminaries}

The idea of a population divided into a fixed number of distinct but internally-homogeneous groups is captured by the \textbf{stochastic block model}, first proposed by Holland et.\ al.\ to model social networks~\cite{HOLLAND}.

\begin{definition}[Stochastic block model]
    Consider $k$ disjoint sets of vertices (classes) $S_1, \dots, S_k$, and a symmetric probability matrix associating a value $p_{ij}$ to each pair $i,j$ of classes. Given these parameters, the \textbf{stochastic block model} (SBM) is the distribution over graphs obtained by adding each edge $(u,v)$ independently with probability $p_{\sigma_u\sigma_v}$, where $\sigma_u$ and $\sigma_v$ are the classes to which $u$ and $v$ respectively belong (the ``labels" of $u$ and $v$).
\end{definition}
If $k = 1$, we recover the Erd\H{o}s--R\'enyi graph $G(n,p)$ on $n$ vertices with edge probability $p$. If $k=2$, where each class has $n$ vertices, and $p_{12} = p$, $p_{11} = p_{22} = 0$, we recover the bipartite Erd\H{o}s--R\'enyi graph $G(n,n,p)$.
\medskip 

We are interested in the size of a maximum-cardinality matching on a graph drawn from this distribution.

\begin{definition}[Matching]
A \textbf{matching} of a graph is a subgraph $M$ such that no two edges in $M$ contain the same vertex; in other words, every vertex is matched with at most one other vertex via an edge. The \textbf{matching number} of a graph $G$ is defined as the size of the largest matching of $G$. A matching $M$ is called \textbf{perfect} if $|M| = |G|$.
\end{definition}

We are also interested in the problem of constructing a matching \textbf{online}.

\begin{definition}[Online bipartite matching problem] Suppose we have a bipartite instance of the stochastic block model, where there are $n$ ``left" vertices and $n$ ``right" vertices, the classes on the left have zero connection probability to the other left classes, and likewise the right classes have no connection probability to each other. In addition, classes are contained entirely within either the left or right. An ``online" algorithm is given the labels of all the right vertices, and knows a distribution over the left classes. For each of $n$ time-steps, a new left-node is revealed (a left label is drawn from the distribution, and coins are flipped to determine which edges it has to the right vertices) -- the algorithm then decides which if any of these edges to add to $M$. Once it has made this decision and moved on, it is never allowed to revisit that vertex.
\end{definition}

The most interesting range of stochastic block model parameters turns out to be the \defn{sparse regime}, when all probabilities $p_{ij}$ are $\Theta(1/n)$. The reason for this is because when $p_{ij}$ grows faster than $1/{n}$, as $n$ grows to infinity the graph becomes dense enough that with high probability, there is a near perfect matching between $S_i$ and $S_j$. On the other hand, when $p_{ij}$ grows slower than ${1}/{n}$, the graph becomes so sparse that we can include almost every edge in $M$. So, we consider the regime in which $p_{ij} = c_{ij}/n$ for some constants $c_{ij}$.

\subsection{Background}

In 1981, Karp and Sipser demonstrated that a simple linear-time heuristic achieves matchings within $o(n)$ of the true matching number of Erd\H{o}s--R\'enyi graphs~\cite{ks}. By associating the performance of that algorithm with a deterministic procedure, they were able to prove a law of large numbers on the matching number of such graphs. That paper has prompted continued investigation into their algorithm. In 1998, Frieze, Pittel, and Aronson improved the error estimate of the Karp--Sipser algorithm in Erd\H{o}s--R\'enyi graphs from $o(n)$ to $n^{1/5 + o(1)}$~\cite{ks-revisited}. In 2011, Bohman and Frieze extended analysis of the Karp--Sipser algorithm to the model of graphs drawn uniformly over a fixed degree sequence, showing that a log concavity condition is sufficient for the algorithm to find near-perfect matchings in such graphs \cite{bohman}. Because of its simplicity, the Karp--Sipser algorithm has also received attention as a practical method for data reduction; some recent work investigates efficient implementations \cite{kaya,shared-memory}.

The online bipartite matching problem was first introduced by Karp, Vazirani and Vazirani in 1990; they showed that a tight $1 - {1}/{e}$ competitive ratio is achievable on worst-case inputs \cite{karponline}. In 2009, Feldman et.\ al.\ showed that in the model where the left vertices are instead drawn from an arbitrary known distribution, with integral expected arrival rates, it is possible to get a competitive ratio strictly better than $1 - {1}/{e}$ \cite{feldmanonline}. The problem of online matching in the known distribution model has since seen considerable attention because of applications in internet ad allocation. For arbitrary distributions (with arbitrary arrival rates), the best known algorithm achieves competitive ratio 0.716, and there is a known upper bound of 0.823 \cite{huang2022power,manshadi2011online}. There has also been work considering algorithms for specific left vertex distributions. Mastin and Jaillet found that in $G(n,n,p)$, the random bipartite graph where all edges are independent and equally likely to exist, all greedy algorithms achieve competitive ratio at least 0.837~\cite{er-online}. Sentenac et.\ al.\ studied the problem in the 1-dimensional geometric model, where they found expressions for both the true matching size and the performance of a particular online heuristic \cite{sentenac2023online}. To the best of our knowledge, the only previous work that considers the stochastic block model is by Soprano-Loto et.\ al., who consider the regime where the graph is dense (i.e.\ all probabilities are constants not depending on $n$), and characterize when it is possible to achieve an asymptotically near-perfect matching  \cite{sopranoloto2023online}.

\subsection{Main Contributions}

We show that the Karp--Sipser algorithm achieves near-optimal matchings for probability matrices $p_{ij}$ satisfying any of the following list of conditions. However, the algorithm
does not achieve near-optimal matchings in general stochastic block models.
\begin{itemize}
\item \textbf{Equitable}: For every $i$, $\sum_j {p_{ij} |S_j|} = \sum_j {c_{ij} |S_j|}/{n} = c$ for some constant $c$, i.e., each vertex has the same expected degree, we are in the ``equitable" case. Here, we show that the asymptotic matching number for any such graph is $\alpha n + o(n)$ where $\alpha$ is an explicit constant.
See \cref{equitable_offline} for the full statement.

This result was previously known for the sparse Erd\H{o}s--R\'enyi graph $G(n, c/n)$~\cite{ks}. Equitable block models capture in particular the standard symmetric $p_{in}$-$p_{out}$ model, in which the population is divided into equal-sized groups, and there are distinct probabilities for connecting within-group as opposed to across-group.
\item \textbf{Sub-Critical}: When $\sum_j c_{ij}|S_j|/n < e$ for all $i$, we show that the model is in a sub-critical regime similar to the one found for Erd\H{o}s--R\'enyi graphs. We show that the asymptotic matching number converges to the solution of an explicit ODE, see \cref{thm:subcritical}.
\item \textbf{Bipartite Erd\H{o}s--R\'enyi}: We also determine, in terms of the solution of an explicit ODE, the asymptotic matching number of $G(kn,n,c/n)$, the bipartite graph with part sizes $kn$ and $n$ and independent edge probability $c/n$; see \cref{lem:bipartite-er-offline}. This case is of interest particularly as a simple example for which usual arguments about the Karp--Sipser algorithm fail.
\end{itemize}
As a special case, these results imply that Mastin and Jaillet's upper bound on the matching number of $G(n, n, p)$ is tight, and so their 0.837 competitive ratio lower bound for the online version is also tight. We propose a modification of the Karp--Sipser algorithm that we conjecture to be asymptotically optimal in general, but observe that known approaches do not suffice to prove optimality.

With regards to the online matching problem, we analyze the following heuristics:
\begin{itemize}
    \item \dumb: When a match is possible, match to a uniform random neighbour.
    \item \degreedy: Match to the available class with the lowest expected degree.
    \item \shortsighted: Match to the available class that maximizes the probability of being able to match on the next step.
    \item \bruteforce: Precompute the expected final matching size of every possible state, and then always match to the available class that maximizes the expected size (this gives optimal performance among all online algorithms, but is very inefficient).
\end{itemize}
In the equitable case, we find that all of these algorithms achieve the same tight 0.837 competitive ratio as in $G(n, n, p)$. In the general setting, we show that only \bruteforce achieves optimal asymptotic performance.

\section{Analysis of the Karp--Sipser Algorithm for Offline Matching}\label{sec:offline}
In 1981, Karp and Sipser proposed an algorithm for approximating the matching number of graphs, and proved that it achieves near-optimal matching sizes on Erd\H{o}s--R\'enyi graphs \cite{ks}. By extending the analysis of their algorithm to the stochastic block model case, we show a law of large numbers on the matching size of a more general class of graph.

\subsection{Karp--Sipser Algorithm and Outline of Analysis}
The form of the Karp--Sipser algorithm we study is as follows: 
\begin{algorithm}[H]
\caption{Karp--Sipser}
\begin{algorithmic}[1]
  \State $M \gets \emptyset$
  \State $V \gets V(G)$
  \While{$E(G \cap V) \neq \emptyset$}
    \If{exists $v \in V$ of degree 1 -- that is, such that exactly one $u \in V$ has $(uv) \in E(G)$}
        \State Choose such a $v$ uniformly at random over all degree 1 vertices
        \State $M \gets M \cup \{(uv)\}$
        \State $V \gets V \setminus \{u, v\}$
    \Else 
        \State Choose an edge $(uv)$ uniformly randomly over all $u,v \in V$, $(uv) \in E(G)$
        \State $M \gets M \cup \{(uv)\}$
        \State $V \gets V \setminus \{u, v\}$
    \EndIf
  \EndWhile
  \State Return $M$.
\end{algorithmic}
\end{algorithm}

In other words, whenever there exists a vertex of degree 1 in the graph, choose one randomly, add its edge to the matching, and remove both endpoints. When there are no degree 1 vertices, repeatedly choose an edge at random to add to the matching. When there exist degree 1 vertices it's always ``safe" to add their edges in the sense that there always exists an optimal matching which does so, and so any ``mistakes" the algorithm makes can only happen after the first time it reaches 0 degree 1 vertices (we call the steps before this ``Phase 1", and the steps after this ``Phase 2"). The analysis of this algorithm in the Erd\H{o}s--R\'enyi graph setting proceeds as follows~\cite{ks, ks-revisited}:

\begin{enumerate}
    \item Conditioned on the state of a Markov process on small tuples of integers, the graph maintains a simple distribution law even after several steps of the algorithm.
    \item Use estimates of the degree distribution of the graph to determine transition probabilities for the Markov process.
    \item Appeal to known approximation theorems to conclude that the Markov process stays close to the solution of a corresponding ODE as $n \to \infty$ with high probability.
    \item Observe that, in Phase 2 of the algorithm, it is very likely that the algorithm finds a near perfect matching on the remaining graph. Since the algorithm makes only optimal decisions in Phase 1, this means that overall it finds within $o(n)$ vertices of the true optimal value, and so the matching number of the graph is described by solutions to the ODE from step (iii).
\end{enumerate}

We apply steps analogous to (i)-(iii) in the general stochastic block model. On the other hand, step (iv) of the analysis is not true in general; although we list some interesting cases in which it is.

\subsection{Transition Probabilities of the Karp--Sipser Algorithm in Configuration Models} 
For analytical purposes, we instead consider the Karp--Sipser algorithm on multigraphs drawn from a configuration model.

\begin{definition}
[Blocked configuration model] Let $G$ be a graph drawn from a stochastic block model. For each pair of labels $i,j$, let $m_{ij}$ denote the total number of edges between vertices of label $i$ and vertices of label $j$ in $G$. Construct a random multigraph as follows: if $i \neq j$, distribute $m_{ij}$ half-edges among the $i$ vertices and $m_{ij}$ half-edges among the $j$ vertices uniformly at random (à la balls-in-bins), then define edges by a uniform random pairing between the half edges on the $i$ and $j$ sides. If $i = j$, instead distribute $2m_{ii}$ half-edges within class $i$, then choose a uniform random pairing among those $2m_{ii}$ half-edges. This process defines a distribution over multigraphs, because in the process of ``reshuffling" the half-edges of $G$ we introduce the possibility of multiple edges and self-loops.
\end{definition}

Equivalently, once the $m_{ij}$ are determined, the process chooses, uniformly at random for each $i, j$, an ordered list of $m_{i,j}$ many edges from $i$ to $j$ (with possible duplicates). We show in \cref{appendix:configuration} that any result which holds with high probability in the blocked configuration model also holds with high probability for the stochastic block model.

As we run the Karp--Sipser algorithm on a random multigraph, the distribution changes, since the earlier steps of the algorithm have produced some effects conditional on the previous states of the graph. Fortunately, it turns out that these effects have a simple description.
\begin{lemma}\label{lem:markov-holds}
Suppose we generate a graph from the blocked configuration model and run the Karp--Sipser algorithm for an arbitrary number of steps. Then conditioned on the following, the resulting graph is still distributed according to the blocked configuration model.
\begin{itemize}
    \item For each pair of $(i, k)$ of block model labels, the number $E_{ij}$ of edges between label class $i$ and $j$ (when $i = j$, we let $E_{ij}$ denote \emph{twice} the number of edges lying within class $i = j$, so that $E_{ij}$ always refers to the number of $j$-type half-edges attached to label-$i$ vertices).
    \item For each label class $i$, the number $T_i$ of label-$i$ vertices of degree exactly 1 (the ``thin" vertices).
    \item For each label class $i$, the number $F_i$ of label-$i$ vertices of degree at least 2 (the ``fat" vertices).
\end{itemize}
Thus, if we collect all of these values into a tuple $Y = (E_{ij}, T_i, F_i)$, the algorithm's progress can be described by a Markov chain on $Y$. 
\end{lemma}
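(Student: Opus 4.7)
The plan is to proceed by induction on the number $t$ of Karp--Sipser steps executed. The base case $t = 0$ is immediate: the graph is drawn from the blocked configuration model by construction, $Y_0$ records the half-edge counts together with the resulting thin/fat partition of each class, and conditioning on $Y_0$ simply restricts the blocked configuration model distribution to those multigraphs consistent with this degree classification.

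For the inductive step, the main tool I would use is the principle of deferred decisions, applied to the two layers of randomness in the blocked configuration model: the uniform balls-in-bins assignment of the $E_{ij}$ half-edges to the vertices of class $i$, and the uniform pairing of the half-edges once assigned. Both are symmetric under permutations of vertices within each class, and both can be revealed lazily; at any moment, the unrevealed assignments and pairings are still uniform among those compatible with what has already been exposed. Concretely, if $Y_t$ is known and the inductive hypothesis holds, then the unrevealed half-edges are uniformly distributed among the current vertices of each class subject to the $T_i, F_i$ constraints, and the unrevealed pairings are uniform among the remaining half-edges.

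Now consider step $t+1$. If a thin vertex exists, the algorithm selects $v$ uniformly among all thin vertices; by exchangeability this is equivalent to picking a uniform thin vertex within a class $i$ chosen with probability proportional to $T_i$. Revealing the partner of $v$'s unique half-edge produces a uniform random half-edge of the appropriate type, exposing a neighbor $u$ in some class $j$. Since removing $u$ entails removing every edge incident to $u$, I would then reveal the partners of $u$'s remaining half-edges one by one: each is uniform among unpaired half-edges of its type, and each revelation decrements the appropriate $E_{jk}$ while possibly demoting a fat neighbor to thin. The phase-2 case is analogous, because picking a uniform random edge is equivalent to picking a uniform random half-edge and revealing its partner. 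In either case, after removing $u$ and $v$ and updating the statistics to $Y_{t+1}$, the remaining unrevealed half-edges and pairings are still uniform by the deferred-decisions principle, which is precisely the statement that the remaining graph conditional on $Y_{t+1}$ is a blocked configuration model with parameters $Y_{t+1}$.

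The main obstacle will be verifying that $Y_{t+1}$ is really a sufficient statistic for the remaining graph when $u$ is a fat vertex with many half-edges, since the revealed edge pattern can be intricate: multiple neighbors of $u$ may simultaneously transition from fat to thin, and self-loops at $u$ or multi-edges between $u$ and a neighbor must be accounted for carefully. To close the argument, one shows that for any two multigraphs consistent with $Y_{t+1}$ the conditional probabilities given $Y_{t+1}$ coincide; this reduces to the observation that all the algorithm's revelations constitute only symmetric, exchangeable information about vertices within each label class, together with standard configuration-model counting identities.
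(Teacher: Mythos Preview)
Your inductive framework is correct and matches the paper's. The difference lies in how the inductive step is executed. The paper runs the argument \emph{backward}: it fixes two candidate ordered edge-lists $G_1, G_2$ sharing the same tuple $Y_{t+1}$ and, for each possible predecessor tuple $Y_t$, explicitly enumerates the number of (predecessor multigraph, chosen edge) pairs that produce $G_1$ versus $G_2$ after one Karp--Sipser step. Every choice in this ``undoing'' procedure---which two classes the removed edge spanned, which pair of now-isolated vertices were its endpoints, how many edges of each type were deleted, and which degree-$0$, degree-$1$, and degree-$\ge 2$ vertices of each class absorbed them---is shown to depend only on the tuple data $Y_t, Y_{t+1}$ and not on the particular $G_1$ or $G_2$. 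Hence the counts agree, and uniformity carries over by the inductive hypothesis.

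Your forward deferred-decisions framing is a legitimate alternative and arguably more intuitive for the pairing layer, but notice that your final paragraph essentially defers to the same counting verification the paper performs in full. The exploration viewpoint makes the uniform-matching part transparent; the work sits in the balls-in-bins layer under the thin/fat conditioning. After you expose that $u$ has some specific degree $d$ and strip its incident edges, you must check that what remains is uniform over \emph{all} half-edge assignments consistent with the new $(T_i, F_i)$, not merely over assignments compatible with the particular degree pattern you revealed at $u$'s neighbors. The paper's backward count disposes of this cleanly, since the number of ways to reinsert $u$'s edges into degree-$0$, degree-$1$, and degree-$\ge 2$ vertices of each class is manifestly a function of $Y_t$ and $Y_{t+1}$ alone. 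Your route would reach the same conclusion, but the exchangeability claim you invoke at the end is precisely the substance of the paper's enumeration.
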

A proof of \cref{lem:markov-holds} is given in \cref{appendix:markov}. In order to determine transition probabilities of this Markov process, we note the following straightforward lemma about the degree distribution, justified in \cref{appendix:degrees}:

\begin{lemma}\label{lem:degreedist}
Fix a description tuple $Y$. Let $nY$ be the tuple obtained by scaling every element of $Y$ by $n$, and let $G_n$ be a multigraph drawn from a blocked configuration model conditional on tuple $nY$. Choose a uniform half-edge incident to class $i$, and let $v$ be its incident vertex. In the limit as $n \rightarrow \infty$, the degree of $v$ converges in distribution to 
\[
\mathbb{P}[d(v) = k] = \begin{cases}
\frac{T_i}{\sum_l E_{il}} & \text{ for $k = 1$} \\
\frac{\left(\sum_l E_{il}\right) - T_i}{\sum_l E_{il}} \cdot \frac{\lambda^{k-1}}{(k-1)! e^\lambda \left( 1 - e^{-\lambda}\right)} & \text{ for $k > 1$},
\end{cases}
\]
where $\lambda$ is a solution to 
\[\frac{\lambda e^{\lambda} - \lambda}{e^{\lambda}-1-\lambda} = \frac{\left(\sum_l E_{il}\right) - T_i}{F_i}.\]
\end{lemma}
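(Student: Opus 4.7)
The plan is to split on whether the sampled vertex $v$ is thin (degree $1$) or fat (degree $\geq 2$) and analyze each case via the balls-in-bins description of the blocked configuration model. For $k = 1$, a uniformly random $i$-side half-edge lands on a thin vertex exactly when it is one of the $T_i$ half-edges attached to thin vertices, so $\mathbb{P}[d(v) = 1] = T_i / \sum_l E_{il}$. Correspondingly, the probability that $v$ is fat is $(\sum_l E_{il} - T_i)/\sum_l E_{il}$, matching the prefactor in the stated formula. It then remains to show that the conditional degree distribution given $v$ is fat equals $\lambda^{k-1}/((k-1)!(e^\lambda - 1))$, which rearranges to the lemma's expression via the identity $e^\lambda - 1 = e^\lambda(1 - e^{-\lambda})$.

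Conditional on the tuple $nY$, the $n(\sum_l E_{il} - T_i)$ fat-side half-edges are placed uniformly into the $nF_i$ fat bins, further conditioned on each bin receiving at least two half-edges. I would prove via a de-Poissonization argument that this conditional multinomial converges, in finite-dimensional distribution, to a sequence of i.i.d. truncated Poisson$(\lambda)$ random variables, where $\lambda$ is the unique positive number with $\mathbb{E}[X \mid X \geq 2] = \mu$ for $\mu = (\sum_l E_{il} - T_i)/F_i$, namely
\[
\frac{\lambda(e^\lambda - 1)}{e^\lambda - 1 - \lambda} = \mu,
\]
which is exactly the defining equation stated in the lemma. Concretely, one Poissonizes the total half-edge count so bin counts become i.i.d. Poisson$(\lambda)$, truncates each independently to $\geq 2$, and de-Poissonizes via a local central limit theorem that controls the error introduced by re-imposing the exact total.

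Given this limit, a single fat vertex's degree is truncated Poisson$(\lambda)$ with mass $\lambda^k/(k!(e^\lambda - 1 - \lambda))$ for $k \geq 2$. Because the experiment picks a half-edge rather than a vertex uniformly, I must size-bias by degree:
\[
\mathbb{P}[d(v) = k \mid v\text{ fat}] = \frac{k \cdot \lambda^k/(k!(e^\lambda - 1 - \lambda))}{\lambda(e^\lambda - 1)/(e^\lambda - 1 - \lambda)} = \frac{\lambda^{k-1}}{(k-1)!(e^\lambda - 1)},
\]
completing the derivation. The main obstacle I anticipate is rigorously executing the de-Poissonization in the presence of the per-bin truncation: the condition $X_j \geq 2$ factorizes across bins in the Poissonized model but interacts with the deterministic sum constraint once de-Poissonized. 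I would handle this via the standard coupling of multinomial and Poissonized occupancy models, noting that at the correct $\lambda$ the event ``all bins have $\geq 2$ balls'' has probability bounded away from $0$ uniformly in $n$ (since $\mu$ and $\lambda$ are $O(1)$ in the sparse regime), so that a quantitative local CLT gives the $o(1)$ total-variation bound needed to transfer the i.i.d. truncated Poisson limit back to the conditional multinomial.
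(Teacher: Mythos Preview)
Your proposal is correct and matches the paper's approach: split on thin versus fat, get the exact $T_i/\sum_l E_{il}$ for the thin case, and reduce the fat case to a balls-in-bins problem with a truncated-Poisson limit; the paper simply cites Aronson--Frieze--Pittel for this last step, whereas you sketch the de-Poissonization yourself. The paper phrases the size-biasing slightly differently (``the bin containing the first ball is uniform, so its count is $1$ plus the count of a uniform bin among the remaining balls'') but this is equivalent to your explicit $k\cdot p_k/\mu$ computation.

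One small correction: your final claim that ``the event `all bins have $\geq 2$ balls' has probability bounded away from $0$ uniformly in $n$'' is false---with $nF_i$ bins at constant mean occupancy $\lambda$, that probability is $(1-e^{-\lambda}-\lambda e^{-\lambda})^{nF_i}$, which decays exponentially. Fortunately you do not need this: the argument you outlined just before (take i.i.d.\ $\geq 2$-truncated Poisson$(\lambda)$ variables with mean $\mu$ and condition on their sum hitting the target via a local CLT, giving a $\Theta(1/\sqrt{n})$ hitting probability that suffices to transfer the marginal) is already the correct route and is what Aronson--Frieze--Pittel do.
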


This allows us to determine the probability that a random edge is adjacent to a degree 2 vertex, which is important for understanding the evolution of the algorithm.
\begin{corollary}
    In particular, as $n \rightarrow \infty$, the probability that a random edge into class $i$ connects to a vertex of degree exactly 2 tends to $ \frac{\left(\sum_l E_{il}\right) - T_i}{\left(\sum_l E_{il}\right)} \cdot \frac{\lambda}{e^\lambda - 1}$. 
\end{corollary}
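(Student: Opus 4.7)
The plan is to derive this as an immediate specialization of \cref{lem:degreedist} at $k=2$. First I would interpret ``random edge into class $i$'' as a uniformly random half-edge attached to a vertex in class $i$: each class-$i$-incident edge contributes exactly one class-$i$ half-edge, and each internal $i$-$i$ edge contributes two (consistent with the convention that $E_{ii}$ counts twice the number of internal edges), so the two samplings are identical. The vertex whose degree we care about is then the class-$i$ endpoint of that half-edge, which is exactly the vertex $v$ to which \cref{lem:degreedist} refers.

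The rest is a one-line computation. Substituting $k=2$ into
\[
\mathbb{P}[d(v) = k] = \frac{\left(\sum_l E_{il}\right) - T_i}{\sum_l E_{il}} \cdot \frac{\lambda^{k-1}}{(k-1)! \, e^\lambda \left(1 - e^{-\lambda}\right)}
\]
gives $\frac{\left(\sum_l E_{il}\right) - T_i}{\sum_l E_{il}} \cdot \frac{\lambda}{e^\lambda\left(1 - e^{-\lambda}\right)}$, and then using $e^\lambda(1 - e^{-\lambda}) = e^\lambda - 1$ collapses the denominator to produce the claimed limit $\frac{\left(\sum_l E_{il}\right) - T_i}{\sum_l E_{il}} \cdot \frac{\lambda}{e^\lambda - 1}$.

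There is no substantive obstacle here: all of the analytic content — the large-$n$ limit of the size-biased degree distribution and the transcendental equation defining $\lambda$ — is already carried by \cref{lem:degreedist}. If anything had to be argued beyond plugging in $k=2$, it would be just that the sampling conventions match, which I addressed in the first paragraph.
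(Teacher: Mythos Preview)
Your proposal is correct and matches the paper's approach: the paper states this corollary without an explicit proof precisely because it is the $k=2$ case of \cref{lem:degreedist}, obtained by the substitution and simplification you wrote out. Your remark on interpreting ``random edge into class $i$'' as a uniform half-edge is a fair clarification of the sampling convention, though the paper leaves even this implicit.
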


The above Lemma also allows us to determine the label distribution in the neighborhood of a randomly chosen edge.  
\begin{corollary}
    Let $uv$ be a randomly chosen edge where $u$ has label $i$ and $v$ has label $j$. As $n \rightarrow \infty$, the number of neighbors of $u$ with label $k$ tends to $\frac{\left(\sum_l E_{il}\right) - T_i}{\sum_l E_{il}} \cdot \frac{\lambda}{1-e^{-\lambda}}\cdot\frac{E_{ik}}{ \left(\sum_l E_{il}\right)  }$. 
\end{corollary}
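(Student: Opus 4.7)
The plan is to apply \cref{lem:degreedist} to the endpoint $u$ of a random $i$-to-$j$ edge. As such an endpoint, $u$ is a label-$i$ vertex size-biased by its number of type-$j$ half-edges, whereas \cref{lem:degreedist} describes a vertex size-biased by its total class-$i$-degree. I first argue that these two size-biasings produce the same asymptotic total-degree law. In the blocked configuration model, the balls-in-bins allocation makes the type-$l$ half-edge counts at a label-$i$ vertex converge to independent Poissons as $n \to \infty$, and size-biasing a sum of independent Poissons by one component shifts only that component by $1$, yielding the same total as size-biasing by the whole sum. A parallel check shows $u$ hits a thin vertex with probability $T_i/S + o(1)$, so \cref{lem:degreedist} applies verbatim to $u$ (here and below, $S = \sum_l E_{il}$).

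Next I compute $\mathbb{E}[d(u)]$ directly from the distribution in \cref{lem:degreedist}. Reindexing $m = k-1$ in the $k \geq 2$ contribution gives $\sum_{k \geq 2} k \lambda^{k-1}/(k-1)! = \sum_{m \geq 1} (m+1) \lambda^m/m! = \lambda e^\lambda + (e^\lambda - 1)$, and using $e^\lambda(1 - e^{-\lambda}) = e^\lambda - 1$ together with $\lambda e^\lambda/(e^\lambda - 1) = \lambda/(1 - e^{-\lambda})$ collapses the expected-degree formula to
\[
\mathbb{E}[d(u)] - 1 \;=\; \frac{S - T_i}{S} \cdot \frac{\lambda}{1 - e^{-\lambda}}.
\]

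Finally, once $d(u)$ is fixed and one of $u$'s half-edges is paired with $v$, the remaining $d(u) - 1$ half-edges pair uniformly with the $S - 1$ other label-$i$-side half-edges, of which $E_{ik} + O(1)$ lead to label-$k$ vertices. Each remaining half-edge therefore goes to class $k$ with probability $E_{ik}/S + o(1)$, so by linearity of expectation the expected number of label-$k$ neighbors of $u$ other than $v$ is $(\mathbb{E}[d(u)] - 1) \cdot E_{ik}/S$, which matches the stated formula. The main obstacle is the first step: handling the change from total-degree size-biasing to type-$j$ size-biasing, which relies on the asymptotic Poisson approximation for balls-in-bins and merits a careful argument if one wants full rigor; the algebra and the final multiplication are routine.
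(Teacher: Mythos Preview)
Your proof is correct and follows the same two-step approach as the paper: compute the expected number of other half-edges at $u$ from the degree distribution, then multiply by the fraction $E_{ik}/S$ going to class $k$. You are more careful than the paper in explicitly justifying why the degree law of \cref{lem:degreedist} (which size-biases by total degree) still applies when $u$ is the endpoint of a specifically type-$j$ half-edge; the paper handles this in one clause (``half-edges of all types are distributed interchangeably''), and your phrase ``pair uniformly with the $S-1$ other label-$i$-side half-edges'' should really read ``are a uniform sample from the $S-1$ other label-$i$-side half-edges'', but the argument is the same.
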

\begin{proof}
    The above degree estimates tell us that there are $\frac{\left(\sum_l E_{il}\right) - T_i}{\sum_l E_{il}} \cdot \frac{\lambda}{1-e^{-\lambda}}$ other edges out of $u$ in expectation. Because half-edges of all types are distributed interchangeably, the fraction of them that go to label $k$ is $\frac{E_{ik}}{ \left(\sum_l E_{il}\right) }$.
\end{proof}

From these degree distribution estimates, we produce estimates on the transition probabilities of the Markov process. On each step of the algorithm, if there exist degree 1 vertices, the algorithm chooses one of them and removes it and its neighbour. So, the graph loses one edge from between class $i$ and $j$ whenever that degree 1 vertex has its edge between $i$ and $j$. When the neighbour vertex is in class $i$, the graph also loses edges equal to however many neighbours it had in class $j$. Similar accounting can be made for the number of fat or thin vertices in a class: the graph loses a fat vertex either by having it as the neighbour of the degree 1 vertex we removed, or by having it initially have degree 2 and appear as the neighbour of a removed neighbour, so that it's then reduced to degree 1. A thin vertex is lost whenever its the removed degree 1 vertex, whenever its the neighbour of the removed vertex, or whenever its a neighbour of the neighbour of a removed vertex, but gain one whenever a degree 2 vertex is a neighbour-of-the-neighbour. To make these expressions explicit, we define the notation:

\begin{itemize}
    \item Let $h_i$ denote the total number of half-edges in class $i$,
    \[h_i = \sum_l E_{il}.\]
    \item Let $\omega_{ij}$ denote the probability that a randomly selected degree-1 vertex is in class $i$ and has its neighbour in class $j$, 
    \[\omega_{ij} =
    \frac{T_i}{\sum_{l} T_l} \cdot \frac{E_{ij}}{h_i}.\]
    \item Let $\delta_{ij}$ denote expected number of other $j$-type half-edges attached to the vertex a random half-edge in class $i$ is attached to. If $\lambda$ is a solution to $\frac{\lambda(e^\lambda - 1)}{e^\lambda - 1 - \lambda} = \frac{h_i - T_i}{F_i}$, then, by our degree estimates,
    \[\delta_{ij} = \frac{h_i - T_i}{h_i} \cdot \frac{\lambda}{1-e^{-\lambda}}\cdot \frac{E_{ij}}{h_i}.\]
    \item Let $\theta_i$ denote the probability that a random half-edge attached to class $i$ is attached to a degree 2 vertex. Again, letting $\frac{\lambda(e^\lambda - 1)}{e^\lambda - 1 - \lambda} = \frac{h_i - T_i}{F_i}$, we have 
    \[\theta_i = \frac{h_i - T_i}{h_i} \cdot \frac{\lambda}{e^\lambda - 1} .\]
\end{itemize}

Now, while there are degree-1 vertices remaining in the graph, we can write the expected change in the description tuple after one step of the algorithm as
\begin{align*}
    \mathbb{E}[\Delta E_{ij}] &= - \omega_{ij} - \omega_{ji} - \sum_{l} \omega_{li}\delta_{ij} - \sum_{l} \omega_{lj}\delta_{ji} \\
    \mathbb{E}[\Delta F_i] &= -\left(\sum_l \omega_{li}\right)\left(\frac{h_i - T_i}{h_i}\right) - \sum_j\sum_l \omega_{jl}\delta_{li} \theta_i \\
    \mathbb{E}[\Delta T_i] &= - \left( \sum_l \omega_{il} \right) - \left( \sum_l \omega_{li} \right)\left(\frac{T_i}{h_i}\right) - \sum_j \sum_l \omega_{jl} \delta_{li} \left( \frac{T_i}{h_i} - \theta_i \right).
\end{align*}

On the other hand, when there are no degree 1 vertices remaining, we choose an edge uniformly at random, so, by similar reasoning,
\begin{align*}
    \mathbb{E}[\Delta E_{ij}] &= - \left(\frac{2 E_{ij}}{\sum_k h_k}\right) - \left(\frac{2 h_i}{\sum_k h_k}\right)\delta_{ij} - \left(\frac{2 h_j}{\sum_k h_k}\right)\delta_{ji} \\
\mathbb{E}[\Delta F_i] &= - \left(\frac{2 h_i}{\sum_k h_k}\right) - \sum_l \left(\frac{2 h_l}{\sum_k h_k}\right)\delta_{li} \theta_i
\\
\mathbb{E}[\Delta T_i] &=  \sum_l \left(\frac{2 h_l}{\sum_k h_k}\right)\delta_{li} \theta_i.
\end{align*}
We now argue that the evolution of Phase 1 of the algorithm stays close to the solution of an ODE.

\subsection{Convergence to Continuous Approximation \label{phase1ode}}
Associating Markov processes on graphs to differential equations is a very useful tool, and there have been a variety of versions of this argument with varying levels of generality and probability bound guarantees \cite{kurtz,darling-norris}. Here, we justify the passage to differential equations by Wormald's theorem \cite{wormald}, although we relegate the verification of the technical conditions of the theorem to \cref{appendix:wormwald}. The important thing to note is that the expected transitions above are ``scale-invariant", meaning that they remain the same upon re-scaling all entries in $Y$ by the same amount. So, letting $\bar{E}_{ij} = {E_{ij}}/{n}$, $\bar{T}_{i} = {T_i}/{n}$, $\bar{F}_i = {F_i}/{n}$, we can write (for Phase 1):
\begin{align*}
\mathbb{E}[\Delta \bar{E}_{ij}] &= \frac{- \omega_{ij} - \omega_{ji} - \sum_{l} \omega_{li}\delta_{ij} - \sum_{l} \omega_{lj}\delta_{ji}}{n} \\
\mathbb{E}[\Delta \bar{F}_i] &= -\frac{\left(\sum_l \omega_{li}\right)\left(\frac{\bar{h}_i - \bar{T}_i}{\bar{h}_i}\right) - \sum_j\sum_l \omega_{jl}\delta_{li} \theta_i}{n} \\
\mathbb{E}[\Delta \bar{T}_i] &= - \frac{\left( \sum_l \omega_{il} \right) - \left( \sum_l \omega_{li} \right)\left(\frac{\bar{T}_i}{\bar{h}_i}\right) - \sum_j \sum_l \omega_{jl} \delta_{li} \left( \frac{\bar{T}_i}{\bar{h}_i} - \theta_i \right)}{n}.
\end{align*}

This is a process that takes order $n$ time steps, and where the expected change at each time step scales like $\frac{1}{n}$. Informally, we can observe that in the limit of $n$, the many small steps should average out and produce a process evolving according to their expectations; this suggests looking at the following system of equations:
\begin{align*}
    \frac{d}{dt} \bar{E}_{ij}(t) &= - \omega_{ij}(t) - \omega_{ji}(t) - \sum_{l} \omega_{li}(t)\delta_{ij}(t) - \sum_{l} \omega_{lj}(t)\delta_{ji}(t) \\ 
    \frac{d}{dt} \bar{F}_i(t) &= -\left(\sum_l \omega_{li}(t)\right)\left(\frac{\bar{h}_i(t) - \bar{T}_i(t)}{\bar{h}_i(t)}\right) - \sum_j\sum_l \omega_{jl}(t)\delta_{li}(t) \theta_i(t) \\ 
    \frac{d}{dt} \bar{T}_i(t) &= - \left( \sum_l \omega_{il}(t) \right) - \left( \sum_l \omega_{li}(t) \right)\left(\frac{\bar{T}_i(t)}{\bar{h}_i(t)}\right) - \sum_j \sum_l \omega_{jl}(t) \delta_{li}(t) \left( \frac{\bar{T}_i(t)}{\bar{h}_i(t)} - \theta_i(t) \right)
\end{align*}
with initial conditions
\[\bar{E}_{ij}(0) = c_{ij} \bar{S}_i\bar{S}_j, \ \  
\bar{F}_i(0) = 1 - \Big( 1 + \sum_j c_{ij} \bar{S}_j \Big) e^{- \sum_j c_{ij} \bar{S}_j } \ \text{ and }  \
\bar{T}_i(0) = \Big( \sum_j c_{ij} \bar{S}_j  \Big) e^{- \sum_j c_{ij} \bar{S}_j },\]
where $c_{ij}$ and $\bar{S}_i = {|S_i|}/{n}$ are the connection probabilities and label class sizes, respectively, of the stochastic block model instance. Wormald's theorem guarantees that, in the limit of $n$, the evolution of Phase 1 stays close to the unique solution of this ODE with probability approaching 1 (this is formally justified in \cref{appendix:wormwald}). This implies the following:

\begin{lemma}
    If $\mathcal{Y}(t) = \{\mathcal{\bar{E}}_{ij}, \mathcal{\bar{F}}_{i}, \mathcal{\bar{T}}_{i} \}$ is a solution to the above ODE, with high probability, the total number of unmatched isolated vertices created in Phase 1 of the Karp--Sipser algorithm is $n \left( 1 - \tau - \mathcal{\bar{F}}_i(\tau)\right) + o(n)$, where $\tau$ is the first time such that $\mathcal{\bar{T}}_i(\tau) = 0$ for all $i$.
\end{lemma}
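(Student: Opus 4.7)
The plan is to combine the Wormald-type concentration statement sketched in \cref{phase1ode} with a simple vertex-accounting argument.

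First, I would fix a solution $\mathcal{Y}(t) = \{\bar{\mathcal{E}}_{ij}(t), \bar{\mathcal{F}}_i(t), \bar{\mathcal{T}}_i(t)\}$ of the ODE and let $\tau$ be the first time at which $\bar{\mathcal{T}}_i(\tau)=0$ for every label $i$. The expected-change expressions computed in the previous subsection are scale-invariant, bounded, and locally Lipschitz in the rescaled variables away from the boundary $h_i - T_i = 0$; and the per-step absolute change of each coordinate of $Y$ is $O(1)$ while the rescaled increment is $O(1/n)$. Wormald's theorem (with the technical conditions deferred to \cref{appendix:wormwald}) then yields that with probability $1 - o(1)$ the rescaled trajectory $\bar{Y}(t) = (\bar{E}_{ij}(t), \bar{F}_i(t), \bar{T}_i(t))$ of the Karp--Sipser process tracks $\mathcal{Y}(t)$ uniformly on $[0, \tau]$ to within $o(1)$ in sup norm.

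Second, I would use this uniform closeness to pin down the end of Phase~1. Phase~1 terminates at the first step at which $\sum_i T_i = 0$; under the scaling, this corresponds to the hitting time of $\sum_i \bar{\mathcal{T}}_i = 0$ up to $o(n)$ steps. Combined with the fact that each step of Phase~1 removes exactly two vertices from the active set (the chosen degree-$1$ vertex together with its unique neighbour, both of which are matched), this yields that the number of vertices removed during Phase~1 is concentrated around the value predicted by $\tau$ plus the ODE trajectory.

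Third, I would convert this into a count of unmatched isolated vertices by conservation. At the end of Phase~1, every remaining vertex is either fat (degree $\geq 2$) or isolated (degree $0$); by the Wormald bound, the fat count is $n\,\bar{\mathcal{F}}_i(\tau) + o(n)$ (summed over $i$, absorbing the summation into the stated formula), and the remainder of the $n$ vertices, after subtracting those removed during Phase~1 as tracked by $\tau$, must therefore be isolated. Rearranging gives the claimed $n(1 - \tau - \bar{\mathcal{F}}_i(\tau)) + o(n)$.

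The main obstacle will be the stopping-time control in the second step: the Wormald bound supplies closeness of paths in sup norm on any compact interval on which $\mathcal{Y}$ stays in the interior of its domain, but one must additionally argue that the random first-passage time to $\sum_i T_i = 0$ cannot differ from $\tau$ by more than $o(n)$ steps. The cleanest route is to check that $\sum_i \bar{\mathcal{T}}_i$ approaches zero transversally at $\tau$ (so that the zero crossing is robust to $o(1)$ perturbations of the trajectory), and to note that any ``overshoot'' of $o(n)$ steps past $\tau$ can contribute only $o(n)$ extra isolated vertices, which is absorbed into the stated error term.
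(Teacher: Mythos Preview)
Your proposal is correct and follows the same route the paper takes: the paper gives no standalone proof of this lemma beyond ``this is implied by Wormald's theorem'' (with the hypotheses checked in \cref{appendix:wormwald}), and your three steps---invoke Wormald, identify the end of Phase~1 with $\tau$, and do the vertex bookkeeping---are exactly the natural unpacking of that implication. The one place you differ slightly is in handling the stopping time: the paper's appendix sets the Wormald stopping time $m$ to be the first moment every $T_i$ drops below $n^{0.01}$, so the theorem applies cleanly on $[0,m]$ and the residual $o(n)$ thin vertices are absorbed into the error term, whereas you propose arguing transversality of $\sum_i \bar{\mathcal{T}}_i$ at $\tau$. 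Either device works and both lead to the same conclusion.
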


These equations don't seem to have a simple analytical solution in general, but they can be effectively numerically evaluated for any specific stochastic block model instance. In the Erd\H{o}s--R\'enyi case, studying Phase 2 of the Karp--Sipser algorithm reveals that with high probability at most $o(n)$ unmatched isolated vertices are created, meaning that the algorithm is asymptotically optimal, and that the true matching size is $n \left(\tau + \mathcal{\bar{F}}_i(\tau)\right) + o(n)$ with high probability. However, this analysis turns out not to work for general stochastic block model instances. We first illustrate a few examples (namely the equitable and sub-critical cases) where, with a little bit of work, we can prove similar results; then, we examine where the algorithm fails.

\subsection{Equitable Case}

The first, and most important success story we show for the Karp--Sipser algorithm happens in a case we denote ``equitable":

\begin{definition}
    We call stochastic block model parameters \textbf{equitable} if there is some constant $c$ such that for all classes $i$, 
    \[\sum_j \frac{c_{ij} |S_j|}{n} = c.\]
\end{definition}
In other words, although the edge density in some parts of the graph may be higher than other parts, the expected degree of every vertex is $c$ regardless of what label class it belongs to. In these cases, we show that not only does the Karp--Sipser algorithm construct an asymptotically-optimal matching, but that the matching size it constructs is asymptotically the same as the matching number of the Erd\H{o}s--R\'enyi graph $G(n, c/n)$. 
The intuition behind this claim is that, despite the nontrivial correlation between the edges of the graph, we expect the degree distributions to look the same everywhere. So, since our estimates of transition probabilities were functions of the degree distributions, they should evolve similarly to the Erd\H{o}s--R\'enyi case. The crucial point we need to justify to make this intuition precise is that the degree distributions necessarily \textit{remain} close to equal across classes, given that they start that way.

\begin{theorem}\label{equitable_offline}
    With high probability, the matching number of an equitable stochastic block model is 
    \[\left(1 - \frac{x + ce^{-x} + xce^{-x}}{2c}\right) n + o(n),\]
    where $x$ is the smallest solution to $x = ce^{-ce^{-x}}$, and the Karp--Sipser algorithm achieves within $o(n)$ of this value.
\end{theorem}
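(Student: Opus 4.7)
The plan is to reduce the analysis of the equitable stochastic block model to the known analysis of Karp--Sipser on $G(n, c/n)$, by showing that the per-class description statistics remain proportional to $\bar{S}_i$ throughout Phase 1. If this invariance holds, then the Poisson parameter $\lambda(t)$ from \cref{lem:degreedist} is the same in every class, the Phase 1 ODE collapses to a scalar ODE, and this scalar ODE coincides with the one that Karp and Sipser analyzed in the Erd\H{o}s--R\'enyi case with mean degree $c$.

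Concretely, I would first check that the initial conditions satisfy $\bar{T}_i(0)/\bar{S}_i = ce^{-c}$, $\bar{F}_i(0)/\bar{S}_i = 1-(1+c)e^{-c}$, and $\bar{h}_i(0)/\bar{S}_i = c$, each independent of $i$; this is immediate from equitability. Next, I would propose the ansatz $\bar{E}_{ij}(t) = c_{ij} f(t)\bar{S}_i\bar{S}_j$, $\bar{T}_i(t) = \alpha(t)\bar{S}_i$, $\bar{F}_i(t) = \beta(t)\bar{S}_i$, substitute into the Phase 1 ODE, and use the identities $\sum_j c_{ij}\bar{S}_j = c$ and $c_{ij} = c_{ji}$ to collapse the sums $\sum_l \omega_{li}\delta_{ij}$, $\sum_l \omega_{jl}\delta_{li}\theta_i$ and their analogues. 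The calculation should show that the right-hand sides factor as a scalar function of $(\alpha,\beta,f)$ times the anticipated prefactor ($c_{ij}\bar{S}_i\bar{S}_j$ for $\bar{E}_{ij}$, and $\bar{S}_i$ for $\bar{F}_i,\bar{T}_i$), closing the ansatz. The induced scalar ODE in $(\alpha,\beta,f)$ would then be matched term-by-term with the Karp--Sipser ODE for $G(n,c/n)$, and the theorem's formula would follow from the known solution of that ODE (with the stopping condition $\bar{T}(\tau)=0$ becoming the fixed-point equation $x = ce^{-ce^{-x}}$).

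For Phase 2, the residual multigraph at time $\tau$ has, within each class, a degree distribution concentrated around $\mathrm{Poisson}(\lambda(\tau))$ with a common parameter $\lambda(\tau)$, so it is ``locally Erd\H{o}s--R\'enyi-like''. I would argue that the Aronson--Frieze--Pittel bound on the Phase 2 loss carries through, because all of the relevant estimates in their analysis depend on the degree sequence only through the scalar $\lambda(\tau)$, giving an $o(n)$ bound on the number of additional isolated vertices created in Phase 2.

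The main obstacle is the algebraic invariance check: the expressions for $\omega_{ij}$, $\delta_{ij}$ and $\theta_i$ are nonlinear, and collapsing the sums over the intermediate class label $l$ uses equitability and the symmetry $c_{ij}=c_{ji}$ in an interlinked way. A secondary difficulty is transferring Phase 2, since the Erd\H{o}s--R\'enyi argument relies on a two-core structure of the residual graph; if a direct transfer fails, one could couple the post-Phase-1 equitable multigraph with a configuration model on a $\mathrm{Poisson}(\lambda(\tau))$ degree sequence independent of class label.
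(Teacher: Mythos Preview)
Your Phase~1 argument is essentially the paper's \cref{lem:equitable-phase1}: the ansatz $\bar T_i=\alpha\bar S_i$, $\bar F_i=\beta\bar S_i$, $\bar h_i=cf\bar S_i$ is preserved by the ODE, so the system collapses to the Erd\H{o}s--R\'enyi scalar system. (Your additional ansatz $\bar E_{ij}=c_{ij}f\bar S_i\bar S_j$ is slightly stronger than what the paper tracks, but it does close under the ODE and is harmless.)

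The gap is in Phase~2. The residual graph at time $\tau$ is \emph{not} a configuration model on a single Poisson$(\lambda(\tau))$ degree sequence: it is still a \emph{blocked} configuration model, with half-edges in class $i$ paired only to half-edges in class $j$ in proportions dictated by the surviving $E_{ij}$. As a concrete example, an equitable bipartite SBM remains bipartite after Phase~1, so it cannot be coupled to the unconstrained configuration model you propose. Consequently the Aronson--Frieze--Pittel Phase~2 analysis does not ``carry through'' just because the scalar $\lambda(\tau)$ matches: during Phase~2 each random-edge removal and each degree-1 cascade redistributes mass among the $E_{ij}$'s, and nothing you have said prevents the classes from drifting apart, at which point none of the ER estimates (in particular the key bound $\delta\theta<1$ controlling cascade lengths) apply uniformly.

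The paper handles this with a direct argument (\cref{lem:equitable-phase2}) rather than an appeal to \cite{ks-revisited}: it first shows, via another Wormald-type approximation over \emph{runs}, that as long as every class has average degree $>2+\epsilon$, cascades are subcritical and the per-class average degrees remain $o(1)$-close to one another; then, once all average degrees are near $2$, it bounds the number of thin vertices ever present by $O(\epsilon n)$ and integrates the isolation rate. Your proposal would need an argument of this kind---i.e., a proof that the equitable invariance \emph{propagates through Phase~2}---and neither the AFP black box nor the proposed coupling supplies it.
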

The theorem is proven with the help of the following two lemmas.

\begin{lemma}\label{lem:equitable-phase1}
Given an equitable stochastic block model, after Phase 1 of the Karp--Sipser algorithm, with high probability for every $i$, $F_i$ (number of vertices) and $h_i$ (total number of incident edges) differ by at most $o(n)$ from the corresponding values in $G(n, \frac{c}{n})$.
\end{lemma}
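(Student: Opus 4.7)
The plan is to combine the ODE framework of Section 2.3 with Wormald's theorem. Once we show that in the equitable case the ODE solution satisfies $\bar{h}_i(t) = \bar{S}_i\,\rho(t)$, $\bar{F}_i(t) = \bar{S}_i\,\phi(t)$, $\bar{T}_i(t) = \bar{S}_i\,\tau(t)$ for scalar functions $(\rho,\phi,\tau)$ which are exactly the Phase 1 trajectories of Karp--Sipser on $G(n, c/n)$, the lemma follows: at the Phase-1 endpoint $t^{*} = \inf\{t : \tau(t) = 0\}$ (which coincides with the ER endpoint), Wormald's concentration gives $h_i = |S_i|\rho(t^{*}) + o(n)$ and $F_i = |S_i|\phi(t^{*}) + o(n)$, i.e.\ within $o(n)$ of the class-size-scaled ER values.

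I would propose the scaling ansatz
\[
\bar{E}_{ij}(t) = c_{ij}\bar{S}_i\bar{S}_j\,\mu(t), \qquad \bar{h}_i(t) = \bar{S}_i\,\rho(t), \qquad \bar{F}_i(t) = \bar{S}_i\,\phi(t), \qquad \bar{T}_i(t) = \bar{S}_i\,\tau(t),
\]
with $\rho = c\mu$ and $\mu(0)=1$. Equitability is used to confirm this matches the initial conditions from Section 2.3: the identity $\sum_j c_{ij}\bar{S}_j = c$ makes the initial $\bar{h}_i, \bar{F}_i, \bar{T}_i$ proportional to $\bar{S}_i$ with the exact ER constants $c$, $1-(1+c)e^{-c}$, $ce^{-c}$. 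Moreover, under the ansatz $(\bar{h}_i - \bar{T}_i)/\bar{F}_i = (\rho-\tau)/\phi$ is independent of $i$, so the Poisson parameter $\lambda_i$ from Section 2.2 collapses to a single $\lambda$, and likewise $\theta_i = \theta$.

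The substantive step is to verify that the ansatz is preserved by the ODE. Substituting into the definitions in Section 2.2, one computes $\omega_{ij} = \bar{S}_i\bar{S}_j c_{ij}/c$ and $\delta_{ij} = \frac{\rho-\tau}{\rho}\cdot\frac{\lambda}{1-e^{-\lambda}}\cdot\frac{c_{ij}\bar{S}_j}{c}$. Plugging these into the $\bar{E}_{ij}$ equation, the identity $\sum_l \bar{S}_l c_{li} = c$ collapses the two inner sums and yields $\tfrac{d}{dt}\bar{E}_{ij} = c_{ij}\bar{S}_i\bar{S}_j\cdot \dot{\mu}(t)$ for a scalar $\dot{\mu}$ depending only on $(\rho,\tau,\lambda)$. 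This both preserves the proportionality of $\bar{E}_{ij}/(c_{ij}\bar{S}_i\bar{S}_j)$ across pairs $(i,j)$ and determines a single scalar ODE for $\mu$ (hence $\rho$). For the $\bar{F}_i$ and $\bar{T}_i$ equations, the crux is the double sum $\sum_j\sum_l \omega_{jl}\delta_{li}$; evaluating it amounts to computing $\sum_j\bar{S}_j\sum_l \bar{S}_l c_{jl}c_{li}$, which by equitability applied twice (once in the inner sum, once in the outer) collapses to $c^2$. The resulting derivatives are each $\bar{S}_i$ times a scalar function of $(\rho,\tau,\lambda)$, so the ansatz is self-consistent.

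By uniqueness of ODE solutions, the self-consistent ansatz is the actual solution, and a direct comparison to the Karp--Sipser ODE for $G(n,c/n)$ (which is the $k=1$, $\bar{S}=1$ specialization of the equitable ODE) shows that $(\rho,\phi,\tau)$ coincide with the ER Phase 1 trajectories. Wormald's theorem, applied exactly as in Section 2.3, then transfers this to the desired high-probability statement. I expect the hardest step to be the double-sum evaluation in the $\bar{F}_i$ and $\bar{T}_i$ equations: equitability is used essentially there, since without it $\sum_l \bar{S}_l c_{jl}c_{li}$ would not telescope and the scaling ansatz would fail, with class-dependent corrections accumulating as the Markov chain evolves.
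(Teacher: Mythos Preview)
Your proposal is correct and follows essentially the same approach as the paper: verify that the equitable initial conditions make the Phase~1 ODE preserve a scaling ansatz in which all quantities are $\bar{S}_i$ times a class-independent scalar, identify those scalars with the $G(n,c/n)$ trajectory, and invoke Wormald's theorem for the $o(n)$ concentration. The one notable difference is that you track the individual edge counts via $\bar{E}_{ij}(t)=c_{ij}\bar{S}_i\bar{S}_j\,\mu(t)$, whereas the paper only explicitly tracks the aggregates $\bar{h}_i/\bar{S}_i$, $\bar{F}_i/\bar{S}_i$, $\bar{T}_i/\bar{S}_i$; your finer invariant is in fact needed to evaluate the sums containing $\omega_{ji}$ and $\delta_{ji}$ (which depend on $E_{ij}$, not just on $h_i$), so your version is a cleaner justification of the same argument.
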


\begin{proof}
This follows directly from our application of Wormald's theorem. We know that initially all ${\bar{T}_i}/{\bar{S}_i}$, ${\bar{F}_i}/{\bar{S}_i}$'s and ${\bar{h}_i}/{\bar{S}_i}$'s are equal; now observe that if that equality holds at some time $t$, we have
\begin{align*}
    \frac{d}{dt} \bar{h}_i(t) &= \sum_j \frac{d}{dt} \bar{E}_{ij}(t) = \sum_j \left( - \omega_{ij}(t) - \omega_{ji}(t) - \sum_{l} \omega_{li}(t)\delta_{ij}(t) - \sum_{l} \omega_{lj}(t)\delta_{ji}(t) \right) = \bar{S}_i \left( - 2 - 2\delta \right) \\
    \frac{d}{dt} \bar{F}_i(t) &= -\left(\sum_l \omega_{li}(t)\right)\left(\frac{\bar{h}_i(t) - \bar{T}_i(t)}{\bar{h}_i(t)}\right) - \sum_j\sum_l \omega_{jl}(t)\delta_{li}(t) \theta_i(t) = \bar{S}_i \left( \frac{\bar{h} - \bar{T}}{\bar{h}} - \delta \theta \right) \\
    \frac{d}{dt} \bar{T}_i(t) &= - \left( \sum_l \omega_{il}(t) \right) - \left( \sum_l \omega_{li}(t) \right)\left(\frac{\bar{T}_i(t)}{\bar{h}_i(t)}\right) - \sum_j \sum_l \omega_{jl}(t) \delta_{li}(t) \left( \frac{\bar{T}_i(t)}{\bar{h}_i(t)} - \theta_i(t) \right) = \bar{S_i} \left( \frac{\bar{T}}{\bar{h}} - \delta \left( \frac{\bar{T}}{\bar{h}} - \theta \right) \right),
\end{align*}
where $\bar{h} = {\bar{h}_i}/{\bar{S}_i}$, $\bar{T} = {\bar{T}_i}/{\bar{S}_i}$, $\delta$ is our estimate on the expected degree - 1 of the vertex attached to a random half-edge, and $\theta$ is our estimate on the probability the degree of the vertex attached to a random half-edge is exactly 2 (it can be seen from our estimates that these values are the same across classes). Note that these values evolve the same way in every class, just scaled by the size of the class.

So, in a solution to these equations, the values of ${\bar{T}_i}/{\bar{S}_i}$, ${\bar{F}_i}/{\bar{S}_i}$ and ${\bar{h}_i}/{\bar{S}_i}$ remain equal across classes for all time, and the evolution of the total number of edges, thin vertices, and fat vertices follows the same trajectory as it would in $G(n, c/n)$. Since Wormald's theorem guarantees that with high probability Phase 1 differs by at most $o(n)$ from a solution to this equation, we have the desired statement. \end{proof}

\begin{lemma}\label{lem:equitable-phase2}
At the start of Phase 2 of the Karp--Sipser algorithm, if the values of ${F_i}/{\bar{S}_i}$ and ${h_i}/{\bar{S}_i}$ each differ by at most $o(n)$ between classes, then with high probability, at most $o(n)$ isolated vertices are produced in Phase 2.
\end{lemma}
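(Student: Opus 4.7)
The plan is to reduce Phase 2 on the equitable SBM to the Phase 2 analysis of Karp--Sipser on the Erdős--Rényi graph $G(n, c/n)$, where the bound of $o(n)$ unmatched isolated vertices is already known \cite{ks, ks-revisited}. The key observation is that once the normalized degree distribution and per-class edge count are uniform across classes, the Karp--Sipser dynamics depend only on the global degree sequence, not on the labels.

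First, I would write down the Phase 2 ODE governing $\bar E_{ij}, \bar F_i, \bar T_i$, using the random-edge expected-change formulas derived in the preceding subsection (and continuing to use the Phase 1 formulas whenever degree-1 vertices intermittently reappear). Then, following the same calculation as in the proof of \cref{lem:equitable-phase1}, I would verify that if the equitable identities $\bar h_i/\bar S_i = \bar h/\bar S$, $\bar F_i/\bar S_i = \bar F/\bar S$, and $\bar T_i/\bar S_i = \bar T/\bar S$ hold at some time $t$, then their time derivatives also factor through these common ratios (with the class index contributing only an overall $\bar S_i$ factor). This shows that the equitable property is preserved throughout Phase 2, and that the trajectory of the class-summed totals $(\bar h(t), \bar F(t), \bar T(t))$ coincides with that of Karp--Sipser Phase 2 on $G(n, c/n)$.

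Applying Wormald's theorem in the Phase 2 regime then shows that the Markov chain stays within $o(n)$ of the ODE trajectory with high probability. Combined with the Erdős--Rényi Phase 2 result that only $o(n)$ vertices end up unmatched \cite{ks, ks-revisited}, this yields the desired bound.

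The main technical obstacle is the re-verification of Wormald's hypotheses in the Phase 2 regime: the transition rates differ from those of Phase 1, and because degree-1 vertices reappear intermittently the process interleaves the two expected-change formulas, so one must control the transitions between the two regimes. The Erdős--Rényi analysis of \cite{ks-revisited} handles this delicate interleaving carefully; in our equitable setting the normalized quantities scale identically across classes, so their estimates should transfer with essentially no modification beyond the bookkeeping of the preserved equitable identities.
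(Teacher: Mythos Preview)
Your reduction is the natural first thought, but there is a real gap at the heart of Phase~2 that the paper's proof is specifically designed around.

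The difficulty is that Phase~2 is not governed by a single ODE to which Wormald's theorem applies uniformly. The process alternates between ``pick a random edge'' and ``strip a thin vertex,'' and the natural time unit is one \emph{run}: from one thin-free state to the next. For Wormald's boundedness hypothesis (condition~(i)), each run must be short with very high probability. While the average degree in every class exceeds $2+\epsilon$ this is fine: the expected number of new thin vertices created per thin vertex removed is $\delta_i\theta_j<1-\eta$, so a run is a subcritical branching process and its size has the required tail. But as the average degree approaches $2$ the branching becomes critical, individual runs can have length $\Theta(n)$, and Wormald's condition~(i) fails outright. So you cannot track any ODE through the end of Phase~2, and the appeal to the $G(n,c/n)$ result via Wormald breaks down precisely in the regime where the content of the lemma lies.

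The paper therefore splits Phase~2 into two regimes. While all average degrees exceed $2+\epsilon$, it does apply Wormald on runs (essentially your idea), showing both that the per-class average degrees stay $o(1)$-close and that only $o(n)$ vertices are isolated there. Once some class drops below $2+\epsilon$, it abandons the differential-equation picture and argues directly: at most $2\epsilon n$ half-edges sit on vertices of degree greater than $2$, so the graph can never hold more than $2\epsilon n$ thin vertices simultaneously; the isolation rate per step is then at most $\tfrac{2}{v}\cdot 2\epsilon n$, and integrating over the remaining steps gives $O\bigl(\epsilon n\log(1/\epsilon)\bigr)$ isolated vertices. Sending $\epsilon\to 0$ finishes. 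Your hope that the Aronson--Frieze--Pittel estimates ``transfer with essentially no modification'' does not bypass this: their Phase~2 analysis is itself a delicate near-critical argument rather than a Wormald application, it is carried out in the single-class configuration model, and in the blocked model half-edges carry types---so the class-wise balance still has to be re-established, which is exactly what the paper's two-regime argument does directly.
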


\begin{proof}

Fix some constant $\epsilon > 0$. We want to show that for any such $\epsilon$, the probability of isolating more than $\epsilon n$ vertices over the course of Phase 2 tends to 0 as $n \to \infty$. We divide our analysis into two parts:
\begin{enumerate}
    \item First, we consider the case where the average degree ${h_i}/{F_i} > 2 + \epsilon$ for all classes $i$. In this regime, we argue that the number of steps between the times when the graph is free of thin vertices is small, so we can again control the evolution by an ODE. The following is a rough sketch of the argument (a more detailed justification of these points involving Wormald's theorem is given in \cref{appendix:wormwald}):
    
    When we first remove a random edge, this may create some vertices of degree 1. In removing those, we may create more vertices of degree 1. In general, the expected number of new degree-1 vertices created when a degree-1 vertex of class $i$ is removed is $\sum_j \frac{E_{ij}}{h_i} \delta_i \theta_j \leq \left(\max_i \delta_i\right) \left(\max_j \theta_j \right)$. Now, when we have that the average degree in each class is at least $2 + \epsilon$, and we know that the difference in average degree between any pair of classes $i$ and $j$ is very small (say, less than $\gamma$), then we know $\delta_i \theta_j = \frac{\lambda_i }{1 - e^{-\lambda_i}} \cdot \frac{\lambda_j}{e^{\lambda_j} - 1} < 1 - \eta$ for any pair $i$, $j$ of classes, where $\eta$ depends on $\gamma$ and $\epsilon$ (the existence of some $\gamma > 0$, $\eta > 0$ in terms of $\epsilon$ with this property is guaranteed by a continuity argument). So, the expected number of degree-1 vertices created for each degree-1 vertex removed is at most a constant, $1 - \eta$, that is less than 1. The size of a subcritical Galton--Watson tree with $1 - \eta$ expected offspring is very unlikely to exceed a bound in terms of $\eta$. Thus, while we are within the region where average degrees are $\gamma$-close and greater than $2 + \epsilon$, the duration of a "run" of degree-1 stripping is unlikely to be much more than a constant independent of $n$. Since the length of each run is small, we can appeal to the law of large numbers and claim the process evolves like its expectation. Whenever the average degree is the same in all classes, the expected change in the number of edges of a given type is proportional to the number of edges currently of that type (all edges in the graph are equally likely to be chosen as the first edge removed, equally likely to be the edge chosen one step into the run, etc). So, since average degrees start out $o(1)$ away from each other, we expect them to remain that way up until one of them drops below $2 + \epsilon$. Also, for a run of constant length we expect to create $o(1)$ isolated vertices, so the total number created while we're in this regime is $o(n)$ with high probability.
    
    \item Once we've left that region and the average degree in some class has dropped below $2 + \epsilon$, by the above analysis the average degree in \textit{every} class is $o(1)$ away from $2 + \epsilon$. If we perform another run of the algorithm, we're no longer guaranteed that the expected change in number of degree-1 vertices is bounded away from 1 -- we do know, however, that the only way to remove a thin vertex and create more than one in its place is to have its neighbour have degree greater than 2. Since no class has average degree more than $2 + 2\epsilon$, we know that the entire graph has at most $2n \epsilon$ edges associated with vertices of degree greater than 2. Those edges are the only places we can branch out and create more degree-1 things than we consume, so, throughout the course of the rest of the algorithm, there can never be more than $2 n \epsilon$ thin vertices in the graph at once. 
    
    Now, note that the rate of creation of isolated vertices on a given step is always proportional to the fraction of thin vertices in the graph: when we remove a vertex (the neighbour of a thin vertex), it has $\delta_i < 2$ other neighbours in expectation, and by our Markov property we know that those neighbour edges are equally likely to be any of the edges leaving the class. So, the number of vertices isolated at each step is in expectation at most $\frac{2}{v}$ times the total number of degree-1 vertices, where $v$ is the total number of vertices in the graph. By another law-of-large-numbers argument, when we perform $\Theta(n)$ such steps, the probability of being a total of $\Theta(n)$ above this expected bound is $o(1)$. So, we can upper bound the number of isolated vertices by 
    \[ 2 \epsilon n + \int_{2 \epsilon n}^n \frac{2}{v} 2 \epsilon n dv = 2 \epsilon n \left( 1 - 2 \log(2 \epsilon) \right). \]
    We can make this arbitrarily small by choice of $\epsilon$; so, with high probability the total number of vertices isolated in Phase 2 is $o(n)$. \qedhere
\end{enumerate}
\end{proof}
\begin{proof}[Proof of \cref{equitable_offline}]
This follows directly from Lemmas~\ref{lem:equitable-phase1} and \ref{lem:equitable-phase2}. In Phase 1, the Karp--Sipser algorithm is guaranteed to perform optimally, and in Phase 2, with high probability, it isolates only $o(n)$ vertices, so the algorithm is within $o(n)$ of optimal. Since the ODE determining the evolution of Phase 1 evolves the same as in an Erd\H{o}s--R\'enyi graph with parameter $c/n$, the total number of lost vertices must be within $o(n)$ of the number lost in the Erd\H{o}s--R\'enyi case. Since the above expression is known to be the matching number in the Erd\H{o}s--R\'enyi case \cite{ks,ks-revisited}, it must therefore also be here. \end{proof}

\subsection{Sub-Critical case}

In the Erd\H{o}s--R\'enyi case, Karp and Sipser proved that the number of unmatched non-isolated vertices remaining in the graph after Phase 1 (which we follow recent literature in calling the ``Karp--Sipser core" \cite{budzinski2022critical}) is $o(n)$ with high probability if $c < e$, and $\Theta(n)$ with high probability if $c > e$ \cite{ks}. We have seen that, even when we allow some correlation into the graph in terms of different classes, so long as the expected degree into all vertices is the same across classes, the progress of the algorithm is the same as if the graph was Erd\H{o}s--R\'enyi. This implies that any equitable stochastic block model also follows this critical transition at $c = e$. In this section, we examine criticality in the non-equitable case -- can we describe a similar phase transition in terms of the expected degrees of a general block model? We might initially expect a statement of the following form:

\begin{claim}[False] The Karp--Sipser core of a graph drawn from a stochastic block model is $o(n)$ with high probability (that is, Phase 1 strips away all but a sublinear number of vertices) \textbf{if and only if} the expected degree $\sum_j c_{ij} \bar{S_j}$ is less than $e$ for all classes $i$.
\end{claim}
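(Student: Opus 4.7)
The plan is to refute the claim by analyzing a case where the per-class expected degree criterion fails. A natural setting is the asymmetric bipartite stochastic block model $G(kn,n,c/n)$, regarded as a two-class block model with $c_{11}=c_{22}=0$ and classes of sizes $kn$ and $n$ whose expected degrees are $c$ and $kc$ respectively. Choosing $c<e<kc$ (for instance, $k=3$ and $c=1$, so that expected degrees are $1$ and $3$) puts us in a regime where the claim predicts a linear-sized Karp--Sipser core, since one of the per-class expected degrees exceeds $e$. The strategy is to show that, in fact, the core has size $o(n)$ with high probability for such parameters, contradicting the ``only if'' direction of the claim.

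To establish this, I would specialize the Phase~1 ODE from \cref{phase1ode} to the bipartite setting. Because only cross-class edges exist, the system collapses to a low-dimensional ODE in the half-edge count $\bar{E}_{12}$ and the fat/thin counts $\bar{F}_i,\bar{T}_i$ on each side. The goal is to show that at the stopping time $\tau$ when $\bar{T}_1(\tau)=\bar{T}_2(\tau)=0$, the sum $\bar{F}_1(\tau)+\bar{F}_2(\tau)$ is $o(1)$, so that the core (whose size is at most $n(\bar{F}_1(\tau)+\bar{F}_2(\tau))+o(n)$) is $o(n)$. Heuristically, once a thin vertex on the large side is stripped, its small-side neighbour typically has several other large-side half-edges; each of these has a nontrivial chance of being attached to a degree-$2$ vertex (thereby creating a new thin vertex) because the large-side degree distribution is Poisson with mean $c<e$. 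Since the small side has enough connectivity ($kc>2$), the resulting degree-$1$ cascade persists and exhausts essentially all fat vertices.

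The main obstacle is carrying out this ODE analysis rigorously. The ODE is unlikely to admit a closed-form solution, so my plan is either (i) to invoke \cref{lem:bipartite-er-offline}, which characterizes the asymptotic matching number of $G(kn,n,c/n)$ via an ODE and from which the core size can be read off, or (ii) to couple the Phase~1 stripping process with a two-type Galton--Watson branching process on the configuration model and to show that the expected offspring matrix has spectral radius $>1$, so that the cascade is supercritical and strips all but $o(n)$ vertices. Either route shows the core is $o(n)$ despite $kc>e$, contradicting the claim. The upshot is that the per-class expected degree criterion is too coarse: the correct threshold should incorporate how a high-degree class can feed stripping cascades into a lower-degree one, rather than treating each class's expected degree in isolation.
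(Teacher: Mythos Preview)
Your proposed counterexample is valid: the bipartite model $G(3n,n,1/n)$ does have a sublinear Karp--Sipser core despite the small side having expected degree $3>e$, so the ``only if'' direction fails. However, neither of the two routes you sketch actually establishes this.

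Route (i), invoking \cref{lem:bipartite-er-offline}, does not give what you need. That lemma analyses Phase~2 and shows the algorithm matches $\min(F_l,F_r)-o(n)$ vertices of the core; it says nothing about whether the core itself has $o(n)$ or $\Theta(n)$ vertices. Knowing the asymptotic matching number does not let you ``read off'' where Phase~1 stopped without separately solving the Phase~1 ODE, which you already flagged as intractable in closed form.

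Route (ii) contains a real gap. Supercriticality of the stripping cascade---spectral radius of the offspring matrix exceeding $1$---tells you only that the cascade seeded by one thin vertex is large with positive probability; it does \emph{not} imply that almost every vertex is eventually stripped. What is required is that the probability a \emph{fixed} vertex survives Phase~1 tends to $0$, which is a statement about the local limit tree, not about cascade growth.

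The paper's route is via \cref{lem:fixed_points_are_what_we_want}: the core is $o(n)$ precisely when the system
\[
x_i \;=\; \exp\!\Bigl(-\sum_j c_{ij}\bar S_j \exp\!\bigl(-\textstyle\sum_k c_{jk}\bar S_k x_k\bigr)\Bigr)
\]
has a unique fixed point in $[0,1]^q$. For your bipartite example the equations decouple into $x_1=e^{-c\,e^{-kcx_1}}$ and $x_2=e^{-kc\,e^{-cx_2}}$, and for $k=3$, $c=1$ one checks directly that each has a single root---so the criterion immediately confirms your instance. The paper uses the same criterion but pushes further, exhibiting parameters ($c_{11}=c_{12}=5.6$, $c_{22}=0$, equal class sizes) where \emph{every} class has expected degree exceeding $e$ yet the fixed point is still unique. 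That is a strictly stronger refutation than yours, since it also kills the weaker conjecture ``subcritical implies \emph{some} class has expected degree below $e$''.
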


We prove the ``if" direction of this claim. However, the ``only if" direction turns out not to be true -- in fact, it appears to be possible for the model to be subcritical even when $\sum_j c_{ij} \bar{S_j} > e$ for all $i$. A complete characterization of the critical boundary is left as an open question.

\begin{theorem}\label{thm:subcritical}
If $\sum_j c_{ij} \bar{S_j} < e$ for all classes $i$, then the size of the Karp--Sipser core is $o(n)$.
\end{theorem}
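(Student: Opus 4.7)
The plan is to leverage Wormald's theorem together with the ODE setup from \cref{phase1ode}. Since Phase~1 tracks its associated deterministic trajectory $\mathcal{Y}(t)$ up to $o(n)$-errors with high probability, it suffices to show that at the first time $\tau$ for which $\mathcal{\bar{T}}_i(\tau) = 0$ simultaneously for every $i$, we also have $\mathcal{\bar{F}}_i(\tau) = 0$ for every $i$. In other words, the deterministic stripping dynamics must consume essentially all fat vertices before thin vertices run out, and this then translates back to an $o(n)$ Karp--Sipser core in the original random graph.

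To analyze the ODE, I would reparametrize using the effective Poisson parameter $\lambda_i(t)$ defined implicitly through $\frac{\bar{h}_i - \bar{T}_i}{\bar{F}_i} = \frac{\lambda_i(e^{\lambda_i}-1)}{e^{\lambda_i}-1-\lambda_i}$, as in \cref{lem:degreedist}. A short check using the initial Poisson degree distribution shows $\lambda_i(0) = \sum_j c_{ij}\bar{S}_j$, which by hypothesis is strictly below $e$ for every $i$. My goal is to establish the invariant that $\max_i \lambda_i(t) < e$ throughout Phase~1. Intuitively, under this invariant every class stays ``subcritical'' in the branching-process sense, so that new thin vertices continue to be produced at a rate sufficient to sustain Phase~1. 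A Gronwall-type comparison of the evolutions of $\sum_i \bar{F}_i$ and $\sum_i \bar{T}_i$ should then force $\sum_i \bar{F}_i$ to hit $0$ no later than $\sum_i \bar{T}_i$, giving the theorem.

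The main obstacle is establishing the invariant. In the Erd\H{o}s--R\'enyi (or equitable) setting, the single $\lambda(t)$ can be shown to decrease monotonically by direct differentiation of the ODE; but in a general SBM, cross-class interactions could in principle inflate some $\lambda_i$ through reshuffling of thin/fat populations within a class while other classes are being processed. To handle this I would derive the coupled ODE system for the vector $(\lambda_i(t))_i$ from the evolution equations for $\bar{E}_{ij}, \bar{T}_i, \bar{F}_i$ and argue, via a max-based Lyapunov inequality, that whenever $\lambda_i(t) = \max_k \lambda_k(t)$ and this maximum is still below $e$, the derivative $\dot{\lambda}_i$ is non-positive. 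A natural fallback, if the direct ODE manipulation proves too fragile, is to exploit the local weak limit — a multi-type Poisson branching process whose offspring matrix $M_{ij} = c_{ij}\bar{S}_j$ has all row sums below $e$ — to set up a recursive fixed-point equation for the asymptotic core density in each class, and to check that only the trivial fixed point is consistent with the subcriticality hypothesis on every row sum.
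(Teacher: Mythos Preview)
Your primary plan---tracking the ODE trajectory and proving a Lyapunov-type invariant $\max_i \lambda_i(t)<e$---is \emph{not} what the paper does, and the paper is explicit about why: immediately before the proof it says that one could in principle analyze criticality through the ODE of \cref{phase1ode}, but that it instead follows Karp and Sipser's original route of associating vertex removal with removal of the root of a random tree. In other words, the paper deliberately skips the ODE attack you outline and goes straight to what you call your ``fallback.''

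On your primary plan itself, there is a genuine gap. You assert that when $\lambda_i$ attains the maximum and is below $e$, one has $\dot\lambda_i\le 0$, but you give no computation supporting this, and the coupled evolution of $(\bar E_{ij},\bar T_i,\bar F_i)$ is intricate enough that this is far from automatic. In the equitable or Erd\H{o}s--R\'enyi case the symmetry forces all $\lambda_i$ to coincide, which is what makes the one-dimensional argument go through; once the classes interact asymmetrically, removals driven by thin vertices in one class can shift the $(\bar h_i,\bar T_i,\bar F_i)$ balance in another class in either direction, and nothing you have written rules out $\lambda_i$ increasing past its initial value. The subsequent Gronwall comparison of $\sum_i\bar F_i$ against $\sum_i\bar T_i$ is also only sketched: even granting the invariant, it is not clear that $\lambda_i<e$ in every class suffices to make the aggregate thin population outlast the aggregate fat population, since production and consumption of thin vertices are weighted by the cross-class quantities $\omega_{ij},\delta_{ij},\theta_i$, not by the $\lambda_i$ alone.

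Your fallback is on the right track and matches the paper's method, but you stop short of the actual content. The paper formulates the fixed-point system $x_i = \exp\bigl(-\sum_j c_{ij}\bar S_j\,\exp(-\sum_k c_{jk}\bar S_k x_k)\bigr)$ (this is \cref{lem:fixed_points_are_what_we_want}, derived in \cref{appendix:games} via games on multitype branching processes), and then proves uniqueness by computing the Jacobian of the associated map, showing that each row sum is at most $e\,f_i\log(1/f_i)-1<0$ precisely under the hypothesis $\sum_j c_{ij}\bar S_j<e$, and concluding with a coordinate-wise maximum argument that rules out a second root. That Jacobian/row-sum calculation is the heart of the proof, and it is absent from your proposal.
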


In principle, analysis of the criticality of the Karp--Sipser algorithm could be done by analysis of the ODE from \cref{phase1ode}; however, instead of trying to understand that system in general, we follow the methodology of Karp and Sipser's original paper in associating the probability of removing a vertex in the graph with the probability of removing the root of a random tree~\cite{ks}. We note the following facts:

\begin{fact}[Karp and Sipser~\cite{ks}]
    The set of vertices removed by Phase 1 is fixed, regardless of the order in which degree-1 vertices are stripped. So, if there exists some valid sequence of degree-1 strippings that removes a given vertex $v$ from the graph (either matching or isolating it), that vertex is not in the Karp--Sipser core.
\end{fact}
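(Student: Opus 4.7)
The plan is to follow Karp and Sipser's tree-based approach, leveraging the order-independence Fact: to bound the core size it suffices to exhibit, for a typical vertex $v$, some valid stripping sequence that removes $v$. Since each stripping decision is determined by local structure, this reduces --- via a local weak limit argument --- to a fixed-point problem on a multi-type Poisson Galton--Watson tree. Concretely, I would first verify that in the sparse blocked configuration model the $R$-neighborhood of a uniformly chosen class-$i$ vertex converges in distribution to a multi-type Poisson Galton--Watson tree $\mathcal{T}_i$, where a vertex of class $k$ has independent $\mathrm{Poisson}(c_{k\ell}\bar{S}_\ell)$ offspring of each class $\ell$ (this is standard for sparse configuration models, and its proof in the SBM is already implicit in the degree-distribution estimates of \cref{lem:degreedist}). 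Then, adapting Karp--Sipser's tree recursion, I would let $q_i$ be the probability that the root of $\mathcal{T}_i$ is \emph{not} removable by iterated leaf-stripping, and write down the induced fixed-point system $q_i = \Phi_i(q_1, \dots, q_k)$, where $\Phi_i$ is obtained by integrating against a Poisson distribution over the root's offspring pattern. In the single-type case (rate $c$), this collapses to the classical Karp--Sipser recursion, whose only fixed point in $[0,1]$ is $q = 0$ precisely when $c < e$.

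The main obstacle is the multi-type fixed-point analysis: one must show that $\sum_j c_{ij}\bar{S}_j < e$ for all $i$ forces $q_i \equiv 0$. I would argue by monotone iteration: each $\Phi_i$ is coordinate-wise monotone in $q$, so iterating from $q^{(0)} = (1,\dots,1)$ yields the maximal fixed point as a decreasing limit. Setting $c^{\ast} = \max_i \sum_j c_{ij}\bar{S}_j < e$, one can couple the multi-type iterate to the single-type Karp--Sipser iteration at rate $c^{\ast}$ by bounding a Poisson mixture with parameters $\{c_{k\ell}\bar{S}_\ell\}_\ell$ (appearing inside $\Phi$) by a single Poisson with the larger parameter $c^{\ast}$; an induction on $t$ then gives $q_i^{(t)} \leq q_{\mathrm{single}}^{(t)}$ for every $t$. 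Since the single-type iteration converges to $0$ whenever $c^{\ast} < e$, all $q_i$ must vanish.

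Finally, to transfer back to the finite graph, the expected Karp--Sipser core size equals $\sum_i |S_i| \cdot \mathbb{P}[\text{root of } \mathcal{T}_i \text{ survives depth-}R \text{ stripping}] + o_R(n)$, which tends to $o(n)$ as we take $R \to \infty$ slowly, since the depth-$R$ survival probability decreases to $q_i = 0$; Markov's inequality then promotes this expectation bound to a high-probability $o(n)$ bound on the core. The remaining subtlety --- that a vertex unremovable by depth-$R$ local tree stripping might still be removable in the full graph via a long-range cascade --- goes in the favorable direction (such cascades can only shrink the core), so one only needs an upper bound of the form ``local tree survival implies global graph survival,'' which amounts to standard control on cycle density in the $R$-ball and the absence of cascades exiting and re-entering the local neighborhood. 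Together, these steps yield a core of size $o(n)$ with high probability under the subcritical condition.
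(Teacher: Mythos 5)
Your proposal does not prove the statement in question. The Fact to be established is a purely deterministic, combinatorial property of the leaf-stripping procedure: that the set of vertices deleted by Phase 1 (matched or isolated) is the same for \emph{every} maximal sequence of degree-1 removals, so that the Karp--Sipser core is well defined independently of the algorithm's random choices. Your write-up instead takes this order-independence as given in its very first sentence (``leveraging the order-independence Fact'') and goes on to sketch a proof of the subcriticality result, i.e.\ of \cref{thm:subcritical}. That is a different statement, one which in the paper is proved \emph{downstream} of this Fact via the local weak limit and \cref{lem:fixed_points_are_what_we_want}. As an argument for the assigned Fact, the proposal is therefore circular: it assumes exactly what it is supposed to show. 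A correct proof is elementary and probability-free: one shows the rewriting system ``delete a degree-1 vertex together with its unique neighbour'' is terminating and confluent with respect to the deleted-vertex set. Concretely, if two degree-1 moves are simultaneously available, a short case analysis (disjoint pairs; shared neighbour $u_1=u_2$, in which case the second pendant vertex gets isolated rather than matched but is deleted either way; etc.) shows that performing either move first leads, after at most one further step, to the same residual graph and the same deleted set; induction on the (strictly decreasing) number of edges then gives global confluence, and the second sentence of the Fact follows immediately.

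A secondary remark, in case you intend to reuse this sketch for \cref{thm:subcritical} itself: the coupling step there is also shaky. The relevant fixed-point map is a composition $\mathbf{1}-G(\mathbf{1}-G(\cdot))$ of two generating functions, which is increasing in its argument but \emph{not} monotone in the Poisson intensities; replacing the mixture of parameters $\{c_{k\ell}\bar S_\ell\}_\ell$ by a single Poisson at the maximal rate $c^{\ast}$ does not obviously preserve the direction of the inequality through both layers of the composition. The paper avoids this by a direct uniqueness argument: it bounds the row sums of the Jacobian of the fixed-point system by $e f_i \log f_i - 1 \le 0$ and deduces that two distinct fixed points cannot coexist. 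If you want a comparison argument instead, you would need to justify the domination at the level of the two-step composition, not term by term on the Poisson parameters.
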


\begin{fact}[Implied by a result of Mossel, Neeman and Sly~\cite{mossel2012stochastic} for 2 classes; Sly and Chin \cite{chin2021optimal} for greater than 2 classes]
    For any constant $d$, in the limit of $n$ the $d$-neighbourhood of any vertex of $G$ (i.e., the subgraph obtained by a BFS of depth $d$ from the vertex) converges in distribution to the first $d$ levels of a multitype branching process, where nodes of type $i$ have independently Pois($c_{ij} \bar{S_j}$) children of type $j$, and the root class corresponds to the class of the vertex in $G$.
\end{fact}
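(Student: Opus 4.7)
The plan is to proceed by induction on the depth $d$, exposing the BFS neighborhood one level at a time and showing that, in the sparse regime $p_{ij}=c_{ij}/n$, each level is asymptotically a multi-type Poisson branching step. The base case $d=0$ is immediate: $B_0(v)$ is the isolated root of type equal to the class of $v$, matching the root of the branching process.

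For the inductive step, assume $B_{d-1}(v)$ converges in distribution to the first $d-1$ levels of the multi-type branching process. In particular, $B_{d-1}(v)$ has a finite number of vertices in the limit, so with probability $1-o(1)$ its vertex count is bounded by some constant $K$ not depending on $n$. Now expose $B_d(v)$ by revealing, for each vertex $u$ at level $d-1$, all of its neighbors in $V(G)\setminus B_{d-1}(v)$. Because edges in an SBM are mutually independent, conditioning on the first $d-1$ levels fixes certain edges as present or absent but leaves the edges from level-$(d-1)$ vertices to the unexposed vertex set as independent Bernoulli$(c_{jk}/n)$ random variables (where $j,k$ are the relevant classes). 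Hence, for a fixed level-$(d-1)$ vertex $u$ of class $j$, the number of its unexposed class-$k$ neighbors is distributed as $\text{Binomial}(|S_k|-N_k,\,c_{jk}/n)$, where $N_k\le K$ is the number of already-exposed class-$k$ vertices. Since $|S_k|/n\to\bar S_k$ and $N_k=o(n)$, this converges in distribution to $\text{Poisson}(c_{jk}\bar S_k)$, and the Poisson limits are jointly independent across $k$ (different class blocks) and across distinct level-$(d-1)$ vertices $u$ (the Bernoulli variables they reveal are disjoint).

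It remains to rule out non-tree structure at level $d$: namely, that two distinct level-$(d-1)$ vertices share a common level-$d$ neighbor, or that two newly exposed level-$d$ vertices are joined by an edge. For any fixed pair of level-$(d-1)$ vertices, the expected number of common unexposed neighbors is at most $\sum_l |S_l|\cdot\max_j(c_{jl}/n)^2 = O(1/n)$; similarly, the expected number of extra edges among the $O_p(1)$ level-$d$ offspring is $O(1/n)$. A union bound over the $O_p(1)$ relevant pairs shows that the probability of any such cycle is $o(1)$. Thus, with probability $1-o(1)$, $B_d(v)$ is a tree whose offspring counts at each level-$(d-1)$ vertex are independent Poisson$(c_{jk}\bar S_k)$ across classes $k$, matching exactly the law of the first $d$ levels of the described multi-type branching process.

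The main obstacle is carefully tracking the conditioning when we pass from level $d-1$ to level $d$: we must verify that, conditional on $B_{d-1}(v)$, the binomial variables counting new neighbors of each exposed vertex are genuinely independent and involve essentially the full class sizes $|S_k|$. This is handled by the independence of SBM edges together with the fact that $B_{d-1}(v)$ is bounded in probability, so $N_k=o(n)$ uniformly. Once this is in place, the tree-likeness estimate and the Binomial-to-Poisson limit are routine, and convergence in distribution of the (finite, marked) rooted graph $B_d(v)$ follows by matching probabilities of every finite rooted labeled tree.
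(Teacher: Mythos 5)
The paper does not actually prove this statement; it is recorded as a Fact and attributed to results of Mossel--Neeman--Sly and Sly--Chin, so there is no internal argument to compare against. Your proposal supplies a direct, self-contained proof via the standard local-weak-convergence argument: induction on the depth, level-by-level BFS exposure, Binomial-to-Poisson convergence for the offspring counts, and a first-moment bound ruling out cycles within the $d$-neighbourhood. This argument is correct. The two points that usually cause trouble are both handled: (a) conditioning on $B_{d-1}(v)$ only fixes edges incident to levels $0,\dots,d-2$, so the edges from level-$(d-1)$ vertices to the $|S_k|-N_k$ unexposed class-$k$ vertices remain independent Bernoulli$(c_{jk}/n)$, and since $N_k$ is tight (bounded by a constant with probability $1-o(1)$) the limit is Poisson$(c_{jk}\bar S_k)$ as claimed; (b) the offspring counts of distinct level-$(d-1)$ vertices are only asymptotically independent, because they could share a neighbour, but the expected number of such coincidences is $O(1/n)$ over the $O_p(1)$ relevant pairs, which is exactly your tree-likeness step. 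What your approach buys is a proof that works verbatim for any fixed number of classes without invoking the cited machinery, at the cost of re-deriving a result that is standard in the literature; the paper's choice to cite rather than prove is reasonable, but your argument would serve as a legitimate appendix-level justification if one wanted the paper to be self-contained.
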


If we can show that, under certain conditions, with probability tending to $1$ with $d$, there exists a sequence of valid Karp--Sipser vertex removals in this tree, all of which are at depth at most $d$, and which result in the root being removed, this then implies that the Karp--Sipser algorithm is subcritical. This is because we know that, in the limit of $n$, any structure that appears in the first $d$ levels of the tree is equally likely to appear in the $d$-neighbourhood of a given vertex in $G$; so, if with probability at least 1 - $\epsilon$ there is a way to remove the root of such a tree for any root class, the expected number of vertices remaining in $G$ after Phase 1 is at most $\epsilon n$. The following characterizes these conditions:
\begin{lemma}\label{lem:fixed_points_are_what_we_want}
    The probability of removing the root of this branching process tends to $1$ in the limit of $d$ whenever the following system has no more than one solution $(x_1, \dots, x_q)$ in $[0,1]^q$:
\[x_i = e^{- \left( \sum_j c_{ij} \bar{S}_j e^{- \left(\sum_k c_{jk} \bar{S}_k x_k\right)}\right)}.\]
\end{lemma}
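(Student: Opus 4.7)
The plan is to give a probabilistic meaning to the unknowns $x_i$ in terms of the branching tree, recognize the 2-level system as $G \circ G$ for a simpler anti-monotone operator $G$, and then use the uniqueness hypothesis to collapse the recursion to a 1-level equation that forces the root-removal probability to $1$.

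First, I will set up the branching-process interpretation. When Karp--Sipser is run bottom-up on the subtrees of the branching process, every vertex ends its subtree processing in one of three states: matched downward (to one of its own children), isolated (all children removed within their own subtrees, leaving the vertex edge-less), or in the subtree's core. Define $x_k$ to be the probability that a type-$k$ vertex ends isolated, so that with an edge to a hypothetical parent added back it would be a leaf ready to ``match upward''. Define $y_j = e^{-\sum_k c_{jk}\bar{S}_k x_k}$; by Poisson thinning on the multitype offspring distribution, $y_j$ is exactly the probability that a type-$j$ vertex has no matchable-upward child, hence is not matched downward. A type-$j$ vertex is itself isolated iff every one of its children is matched downward in its own subtree, and another Poisson-thinning computation yields $x_j = e^{-\sum_k c_{jk}\bar{S}_k y_k}$. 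Substituting recovers the lemma's system.

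Second, I will exploit the fixed-point structure. Writing $G(x)_j = e^{-\sum_k c_{jk}\bar{S}_k x_k}$, the lemma's equation is $x = F(x)$ with $F = G \circ G$. Since $G$ is anti-monotone, $F$ is monotone on $[0,1]^q$, so iterating from $x^{(0)} = 0$ (resp.\ $1$) produces a monotone sequence converging to the smallest (resp.\ largest) fixed point of $F$. The iterate $x^{(r)}$ tracks the probability that a type-$j$ vertex becomes matchable-upward via operations at depth $\leq 2r$, so these iterates describe the depth-$d$ peeling as $d \to \infty$. Under the lemma's uniqueness assumption the two monotone sequences converge to a common $x^*$; moreover, $G(x^*)$ is itself a fixed point of $F$ (since $F(G(x^*)) = G(F(x^*)) = G(x^*)$), so uniqueness forces $G(x^*) = x^*$, and $x^*$ satisfies the 1-level equation $x^*_i = e^{-\sum_j c_{ij}\bar{S}_j x^*_j}$. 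Third, for the root of type $i$, the probability of being matched to a matchable-upward child is $1 - G(x^*)_i$, while by the 2-level identity the probability of ending isolated (all children matched downward) is $x^*_i$. Under the collapse $G(x^*) = x^*$ these two probabilities sum to $1$, so the in-core probability vanishes and the root is removed with probability tending to $1$.

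The step I expect to require the most care is making the bottom-up ``three-state'' classification rigorous in the presence of cascading: when a vertex has exactly one in-core attached child after subtree processing, it is itself a new degree-$1$ leaf, triggers a match, and may cause further removals inside the sub-subtree. I plan to handle this via a monotone coupling --- the iterative peeling gives a valid lower bound on what Karp--Sipser can accomplish at depth $d$, and the calculation above gives a matching upper bound $G(x^*)_i - x^*_i$ on the in-core probability. Under the uniqueness hypothesis this sandwich pins the true in-core probability at $0$ regardless of cascade subtleties.
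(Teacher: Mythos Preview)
Your argument is correct and is essentially the same as the paper's, modulo presentation. Your three-state classification (matched-downward / isolated / in-core) is exactly the game-theoretic classification $\mathcal{N}/\mathcal{P}/\mathcal{D}$ that the paper uses (following Holroyd--Martin): a vertex is in $\mathcal{P}$ iff all children are in $\mathcal{N}$, and in $\mathcal{N}$ iff some child is in $\mathcal{P}$, which is precisely your ``all children matched downward'' / ``some child isolated'' recursion. Your $G$ is related to the paper's offspring generating function by $G_{\text{you}}(x) = G_{\text{paper}}(1-x)$, and your $F = G\circ G$ has the same fixed points (after the change of variable $x \leftrightarrow 1-x$) as the paper's $F(x) = \mathbf{1} - G_{\text{paper}}(\mathbf{1} - G_{\text{paper}}(x))$.

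The one genuine difference is the endgame. The paper observes that $N_d$ and $1-P_d$ iterate to the minimal and maximal fixed points of $F$ respectively, so uniqueness gives $N_\infty + P_\infty = 1$ and $D_\infty = 0$. You instead note that $G(x^*)$ is itself a fixed point of $F$, so uniqueness forces the one-level collapse $G(x^*) = x^*$, whence the in-core bound $G(x^*) - x^*$ vanishes. These are equivalent: $G$ is an anti-monotone involution on the fixed-point set of $F$, so it swaps the minimal and maximal fixed points, and your collapse $G(x^*)=x^*$ is exactly the statement that min and max coincide. Your route is slightly more self-contained (no detour through combinatorial game theory), while the paper's buys a cleaner connection to the existing literature and a pre-packaged lemma (``root in $\mathcal{N}_d \cup \mathcal{P}_d$ implies removable at depth $d$'') for the step you flag as delicate. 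Your sandwich handling of the cascading issue is exactly right: the recursion only lower-bounds what Karp--Sipser removes at depth $d$, which is all you need once the bound is shown to reach $1$.
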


The proof of \cref{lem:fixed_points_are_what_we_want} is given in \cref{appendix:games}, where it is derived as a corollary of results about winning probabilities of games on multitype branching processes (those results are an extension of work of Holroyd and Martin on Galton--Watson trees \cite{games}, and may be of independent interest). We now prove the forward (true) direction of our claim.

\begin{proof}[Proof of \cref{thm:subcritical}]
By the above analysis, it suffices to show that whenever $\sum_j c_{ij} \bar{S_j} < e$ for all classes $i$, the function 
\[\begin{bmatrix} x_1 \\ \dots \\ x_q \end{bmatrix} \mapsto \begin{bmatrix} e^{- \left( \sum_j c_{1j} \bar{S}_j e^{- \left(\sum_k c_{jk} \bar{S}_k x_k\right)}\right)} \\ \dots \\ e^{- \left( \sum_j c_{qj} \bar{S}_j e^{- \left(\sum_k c_{jk} \bar{S}_k x_k\right)}\right)} \end{bmatrix} - \begin{bmatrix} x_1 \\ \dots \\ x_q \end{bmatrix}\]
has only one root on $[0,1]^q$. Denoting $e^{- \left( \sum_j c_{ij} \bar{S}_j e^{- \left(\sum_k c_{jk} \bar{S}_k x_k\right)}\right)}$ as $f_i$, the Jacobian of this function looks like 
\[
\begin{bmatrix}
f_1 \left(\sum_j \left(c_{1j} \bar{S}_j\right)\left(c_{j1}\bar{S}_1 \right) e^{-\left(\sum_k c_{jk} \bar{S}_k x_k\right)} \right) - 1 & \dots & f_1 \left(\sum_j \left(c_{1j} \bar{S}_j\right)\left(c_{jq}\bar{S}_q \right) e^{-\left(\sum_k c_{jk} \bar{S}_k x_k\right)} \right) \\
\dots & \dots & \dots \\
f_q \left(\sum_j \left(c_{qj} \bar{S}_j\right)\left(c_{j1}\bar{S}_1 \right) e^{-\left(\sum_k c_{jk} \bar{S}_k x_k\right)} \right)& \dots & f_q\left(\sum_j \left(c_{qj} \bar{S}_j\right)\left(c_{jq}\bar{S}_q \right) e^{-\left(\sum_k c_{jk} \bar{S}_k x_k\right)} \right) -1\\
\end{bmatrix}.
\]
The sum of the entries in the $i$th row of this matrix is 
\[f_i \left( \sum_j \left( c_{ij} \bar{S}_j e^{-\left(\sum_k c_{jk} \bar{S}_k x_k\right)} \cdot \sum_l \left(c_{jl} \bar{S}_l \right) \right)\right) - 1.\]
By assumption, we know $\sum_l \left(c_{jl} \bar{S}_l \right) < e$. So, the above expression is strictly less than 
\[e f_i \log f_i - 1.\]

For any value of $f_i$, $e f_i \log f_i$ is at most 1 (taking the derivative, we find a unique maximum at $f_i = {1}/{e}$). So, we have shown that the sum of every row of the Jacobian is negative everywhere. Now, suppose that this function has two distinct roots, $x = (x_1, \dots, x_q)$ and $y = (y_1, \dots, y_q)$. Let $i$ be the index where $y_i - x_i$ is maximal. We have increased $x_i$ by $(y_i - x_i)$, and increased all the other coordinates of $x$ by at most $(y_i - x_i)$. We know that the directional derivative of the $i$th coordinate in the $[1, \dots, 1]^\top$ direction is negative, and that the partial derivative with respect to every $j \neq i$ is positive; this implies that the $i$th coordinate of the function at $y$ must be smaller than the $i$th coordinate of the function at $x$, so they cannot both be roots. 
\end{proof}

As a consequence of this, since the Karp--Sipser algorithm is guaranteed to be optimal in Phase 1, we know the Karp--Sipser algorithm gives a near-optimal matching whenever $\sum_j c_{ij} \bar{S_j} < e$ for all $i$. By examining the system given in \cref{lem:fixed_points_are_what_we_want}, however, we find that this is not a necessary condition for subcriticality. In fact, it is possible to achieve subcriticality even when every class has expected degree more than $e$. In Figure 1, we give a plot demonstrating this for the two-class model where one class has internal edge probability $\frac{5.6}{n}$ and external edge probability $\frac{5.6}{n}$, and the other class has no internal edge probability. Note that in this case, the expected degree of a vertex in the first class is $5.6 > 2 e$, and the expected degree of a vertex in the second class is $2.8 > e$. If we removed either the across edges or the internal edges, the model would be supercritical; however, both edge sets together cancel out and become subcritical again.

\begin{figure}[H]
\centering
\includegraphics[width=8cm]{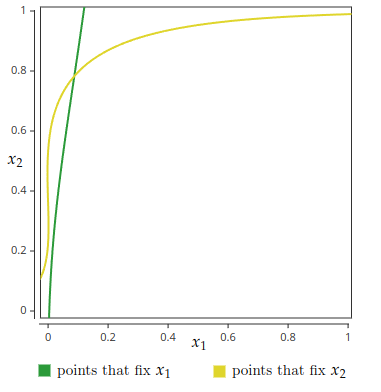}
\caption{A plot showing that for $c_{11} = 5.6$, $c_{12} = 5.6$, $c_{22} = 0$, the system has only 1 fixed point, and so we are subcritical. Here the horizontal axis is $x_1$ and the vertical axis is $x_2$; the green line is the set of points fixing $x_1$ and the yellow line is the set of points fixing $x_2$. Note that they have only a single intersection.}
\end{figure}

The values of $5.6$, $5.6$, $0$ were chosen above to make the single fixed point easily visible in the plot; it is possible to make these parameters larger. When there are 2 classes, we find that it's possible to make $c_{11}$ as high as $33.5$ while still maintaining criticality (by setting $c_{12} = 2 e+0.0514$, $c_{22} = 0$), but it is not possible to make it higher than $34$.

Qualitatively, the reason for this behaviour is that, when there are no internal edges in the second class, there is a back-and-forth thinning that can happen. That is, once we strip off all the degree-1 vertices lying entirely within the first class, the number of vertices in the first class becomes lower, so the average degree of vertices in the second class has now been decreased. Removing all the vertices of degree 1 in the second class in turn thins the first class, and it becomes possible to remove almost all the vertices in the graph, even though either of the two edge sets on their own wouldn't have been possible to remove.

\subsection{Failure of the Karp--Sipser algorithm}

In the previous sections we gave two instances where we can guarantee that the Karp--Sipser algorithm achieves a near-optimal matching, both which use essentially the same framework as the Erd\H{o}s--R\'enyi case (i.e., showing that it can achieve within $o(n)$ of a perfect matching during Phase 2, either because all degrees are close to 2 in the equitable case, or because the entire remaining graph has $o(n)$ vertices in the subcritical case). However, the algorithm does \textbf{not} return a near-optimal matching in general, and even when it does, this analytical framework does not always work. 

For example, consider a stochastic block model with 4 classes, all of size $n/4$, and the following probability matrix:
\[
\begin{bmatrix}
0 & \frac{100}{n} & 0 & 0 \\
\frac{100}{n} & 0 & \frac{10000}{n} & 0 \\
0 & \frac{10000}{n} & 0 & \frac{100}{n} \\
0 & 0 & \frac{100}{n} & 0
\end{bmatrix},
\]
see Figure~\ref{fig:example} for an illustration. 
Here, we expect the true matching size to be very close to perfect. Even if we ignore all of the edges between classes 2 and 3 entirely, we are left with two copies of $G(n/4,n/4,p)$, for which we know the matching number (as it is equitable) -- asymptotically, we can match 

However, when we analyze the performance of the Karp--Sipser algorithm, we find a different story. Phase 1 finishes very quickly, because the graph is dense enough that very few degree 1 vertices are created. Then, in Phase 2, for a long time we are in the regime of short runs as described in our analysis of the equitable case; in this regime, the algorithm chooses many of its edges uniformly at random, and so likely choose many of them from between classes 2 and 3. Every edge choosen between classes 2 and 3, however, effectively decreases the matching number by 1. Formalizing this argument reveals that the Karp--Sipser algorithm on this graph finds a matching containing only slightly more than 50\% of the vertices.

\begin{figure}[H]%
    \centering
    \subfloat[\centering Karp--Sipser]{{\includegraphics[width=5cm]{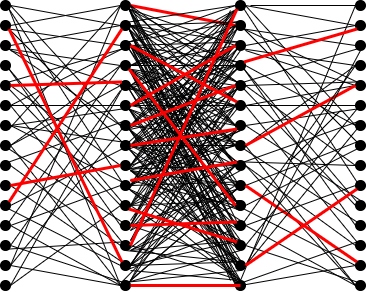} }}%
    \qquad
    \subfloat[\centering Optimal]{{\includegraphics[width=5cm]{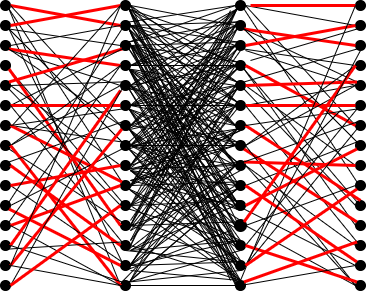} }}%
    \caption{An example on which the Karp--Sipser algorithm performs suboptimally.}%
    \label{fig:example}%
\end{figure}

This suggests the need for a modified version of the Karp--Sipser algorithm that is given the stochastic block model parameters, and takes into account the label classes of the vertices. We propose one such algorithm here:

\begin{algorithm}[H]
\caption{Label-Aware Karp--Sipser}
\begin{algorithmic}[1]
  \State $M \gets \emptyset$
  \State $V \gets V(G)$
  \While{$E(G \cap V) \neq \emptyset$}
    \If{exists $v \in V$ of degree 1 -- that is, such that exactly one $u \in V$ has $(uv) \in E(G)$}
        \State Choose such a $v$ uniformly at random over all degree 1 vertices
        \State $M \gets M \cup \{(uv)\}$
        \State $V \gets V \setminus \{u, v\}$
    \Else 
        \State $\text{Badness}(i) \gets \sum_j \theta_j \delta_{ij}$ for each class $i$
        \State $i, j \gets \min_{i, j \text{ such that } \exists u \in S_i, v \in S_j, (uv) \in E(G \cap V)} \text{Badness}(i) + \text{Badness}(j)$
        \State Choose an edge $(uv)$ uniformly randomly over all $u \in S_i \cap V$, $v \in S_j \cap V$, $(uv) \in E(G)$
        \State $M \gets M \cup \{(uv)\}$
        \State $V \gets V \setminus \{u, v\}$
    \EndIf
  \EndWhile
  \State Return $M$
\end{algorithmic}
\end{algorithm}

Among all edge types that exist in the current graph, when this algorithm has to make a random choice, it chooses from the edge type that we estimate will create the fewest degree-1 vertices on this step. This algorithm performed well in our experimental simulations for a range of block model parameters; it remains open whether asymptotically, it performs near-optimally. Part of the reason for the difficulty of analysis lies in the fact that we no longer expect a perfect matching to be possible in Phase 2 -- the following section illustrates this with a simple example.

\subsection{Bipartite Erd\H{o}s--R\'enyi case}\label{bipartite-er}
The bipartite graph $G(n,n,c/n)$ with equal part sizes and i.i.d.\ edges corresponds to an equitable stochastic block model: each vertex on the left and each vertex on the right has $c$ expected neighbours. So, with high probability, the Karp--Sipser algorithm performs optimally and returns a matching the same size as that of $G(2n, {c}/{2n})$. However, the asymmetric case $G(kn, n, c/n)$ where $k \neq 1$ is not equitable, because left vertices have $c$ expected neighbours while right vertices have $kc$ of them. It is this case that we analyze in this section. Note that, as before, the algorithm is guaranteed to be optimal for Phase 1; we need to show that it's also near optimal for Phase 2. However, now even the optimal algorithm on Phase 2 is not guaranteed to find a near-perfect matching. Consider a graph with very unequal part sizes, and very high $c$; for example, $G(n, 10n, {1000}/{n})$. This graph is dense enough that we expect at least $5n$ vertices to remain on the right after Phase 1, however we know that the matching number of this graph is at most $n$ -- so, we can't actually expect Phase 2 to find a near-perfect matching. Thus, this is an example where the typical analytical framework does not apply \cite{ks, ks-revisited}. However, the usual Karp--Sipser algorithm does behave optimally in this case, and there exists a simple proof. We sketch the argument.

\begin{lemma}\label{lem:bipartite-er-offline}
    Let $F_l$, $F_r$ and $E$ denote the number of non-isolated unmatched vertices on the left, non-isolated unmatched vertices on the right, and edges in the graph at the start of Phase 2. With high probability the Karp--Sipser algorithm finds a matching of size $\min(F_l, F_r) - o(n)$ on this graph. As a consequence, the asymptotic behavior of the matching numbers bipartite Erd\H{o}s--R\'enyi graphs can be determined from solutions to the ODEs described in \cref{phase1ode}.
\end{lemma}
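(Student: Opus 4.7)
The plan is to analyze Phase 2 in two regimes mirroring the structure of \cref{lem:equitable-phase2}, exploiting the bipartite structure to reduce everything to a bound on the number of isolations on the minority side.

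Without loss of generality, assume $F_l \geq F_r$ at the start of Phase 2. Because the graph is bipartite, every iteration of Phase 2 (whether a degree-1 strip or a random-edge pick) matches and removes exactly one vertex from each side. Consequently, when Phase 2 terminates with no remaining edges, the total number of matches produced is exactly $F_r - I_r$, where $I_r$ counts the right vertices that become isolated during Phase 2 before being matched. So it suffices to show $I_r = o(n)$ with high probability.

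First, I would specialize the Wormald ODE from \cref{phase1ode} to the two-class description tuple $(E, F_l, F_r, T_l, T_r)$ and track its evolution during Phase 2. While the right-side average degree $h_r/F_r$ stays above $2 + \epsilon$, a Galton--Watson-style argument parallel to the one in \cref{lem:equitable-phase2} shows that each ``run'' of degree-1 strippings has bounded expected length: when a degree-1 vertex on one side is stripped, the expected number of new degree-1 vertices produced (on either side) is bounded by a product of $\delta$ and $\theta$ terms from the two sides, which is strictly below $1 - \eta$ for some $\eta = \eta(\epsilon) > 0$ by a continuity argument. Short runs combined with Wormald's theorem imply that isolations on the right accumulate at rate $o(1)$ per step in this regime, contributing $o(n)$ in total.

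Second, I would handle the regime after $h_r / F_r$ drops below $2 + \epsilon$. At that point, the count of right half-edges in excess of a degree-2 backbone is at most $O(\epsilon n)$, so the number of thin right vertices can never exceed $O(\epsilon n)$ for the remainder of Phase 2. Applying the integral bound from the second half of \cref{lem:equitable-phase2}, where the per-step isolation probability is proportional to the thin-fraction on the minority side, yields a total isolation contribution of the form $2 \epsilon n (1 - 2 \log(2 \epsilon))$, which can be made arbitrarily small by sending $\epsilon \to 0$.

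The hardest step will be the Galton--Watson offspring calculation in the asymmetric bipartite setting: since the two sides have different degree distributions, the ``offspring'' over a two-step stripping cycle is a product of $\delta_l \theta_r$ and $\delta_r \theta_l$, and one must verify that this alternating product stays strictly below 1 whenever both average degrees are bounded away from 2. A direct computation using \cref{lem:degreedist} confirms this, after which Wormald's theorem and the integral bound together give $I_r = o(n)$, proving the matching-size claim. The consequence for the matching number then follows because both the Phase 1 matching size and $\min(F_l, F_r)$ are continuous functions of the ODE state at Phase 1's termination.
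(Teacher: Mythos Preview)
Your two-regime plan parallels the equitable case, but there is a structural error in how you model the Phase~2 branching. In a bipartite graph, when you strip a thin vertex on one side, its matched neighbour lies on the \emph{opposite} side, and that neighbour's other edges all return to the \emph{original} side. Hence removing a thin left vertex creates new thin \emph{left} vertices (at rate $\delta_R\theta_L$), and removing a thin right vertex creates new thin \emph{right} vertices (at rate $\delta_L\theta_R$); the two populations do not feed into each other, and there is no ``alternating'' two-step cycle. The product $\delta_l\theta_r\cdot\delta_r\theta_l$ is therefore not the governing quantity.

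This matters because your Regime~1 argument needs runs to be short, which would require \emph{both} one-step rates to stay below $1$. But on the majority side (left, in your convention $F_l\geq F_r$) the rate is $\delta_R\theta_L$: since the minority side carries the higher average degree, $\delta_R$ can be large, while $\theta_L$ can be arbitrarily close to $1$ because the majority side's average degree $h_l/F_l\leq h_r/F_r$ may sit near $2$ even while $h_r/F_r>2+\epsilon$. Nothing in your regime split prevents $\delta_R\theta_L>1$, and once that happens a single thin-left cascade can consume a linear fraction of the graph, so the short-run Wormald framework collapses. Your later remark that the computation works ``whenever both average degrees are bounded away from $2$'' is not something your decomposition actually secures.

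The paper's argument sidesteps this entirely: it never tries to show runs are short. Instead it (i) first argues that the minority side \emph{remains} the minority side throughout Phase~2 (a step missing from your outline), and (ii) then observes that the minority-side branching rate alone is always subcritical --- in your notation, $\delta_L\theta_R \leq \delta_R\theta_R < 1$, since $\delta$ is increasing in the average degree and the same-side product $\delta\theta$ is always below $1$. This keeps the number of thin minority-side vertices at $o(n)$ throughout, regardless of run lengths or of what happens on the majority side, and that directly gives $I_r=o(n)$.
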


\begin{proof}
Assume without loss of generality that $F_l < K F_r$ for some $K > 0$ (if $F_l = F_r + o(n)$, we know the algorithm performs optimally by our equitable case analysis). First, we argue that with probability going to 1 in $n$, over the course of Phase 2 the number of non-isolated unmatched vertices on the right is always higher than the number on the left. To show this, note that at a given step of the algorithm, the action with the highest expected loss on the right compared to the left is removing a degree-1 vertex on the right. This removes 1 vertex on the left and 1 on the right, but creates $\delta_L \theta_R$ vertices of degree 1 on the right that might later be lost. Clearly, the expected loss would be bigger if instead of reducing those vertices to degree 1, we just isolated them immediately, so the right size can overall be decreasing at a rate of at most $\delta_L \theta_R$ faster than the left side. But now, note that whenever almost all vertices are of degree 2 and there are $K$ times as many vertices on the right, $\delta_L$ can be at most $K$ times as large as $\delta_R$; since we know $\delta_R \theta_R < 1$, this means $\delta_L \theta_R < K$. Continuing an argument along these lines by bounding with an ODE would allow us to show that with probability going to 1, the right side can never get meaningfully smaller than the left side.

From that point, we simply note that $\delta_R \theta_L \leq \delta_L \theta_L < 1$, so the rate of creation of degree 1 vertices on the left is always less than 1, meaning that with high probability there are never more than $o(n)$ degree 1 vertices on the left. That in turn means that with high probability we do not isolate more than $o(n)$ degree 1 vertices on the left; so, we find a matching of size $F_l - o(n)$. 
\end{proof}

The first phase of the algorithm is optimal, and then once the second phase is reached the matching number is clearly bounded by $\min(F_l, F_r)$, so this implies that the Karp--Sipser algorithm is near-optimal on $G(kn, n, c/n)$. This analysis works because, even though we the Karp--Sipser core does not admit a perfect matching, we have an upper bound on its true matching number that we can show is nearly achieved by the algorithm. This suggests a natural approach to analyzing Label-Aware Karp--Sipser in other block models would be to try to find similar bounds on the matching number of the core. (A simple case we note as an open direction is the bipartite setting where there is only one class on the left -- i.e., parameters where the probability matrix graph is ``star shaped". On such graphs, Label-Aware Karp--Sipser simplifies to ``prefer edges to the available right class with minimum average degree".) For now, however, we move on to analyze a more restrictive type of matching algorithm.  

\section{Analysis of Online Matching Heuristics}\label{sec:online}
In this section, we consider the problem of \defn{online matching} in bipartite stochastic block models. In the situation we're concerned with, the left and right parts each have $q$ classes. We assume that vertices are assigned classes uniformly and independently at random, with classes of the right vertices known to the algorithm ahead of time, and vertices on the left (along with all the edges incident to them) arriving one-at-a-time. Uniformity of the class distribution is roughly without loss of generality; variations could be approximated by further subdividing the classes.

We propose four different heuristic algorithms for this problem, presented in increasing order of complexity. The first, called \dumb, simply chooses an available edge uniformly at random at each step. We show \dumb is optimal in the equitable case. The second, \degreedy, prefers matching to vertex classes with low expected degree. The third, \shortsighted, maximizes the probability of being able to find a match on the subsequent step. Finally, \bruteforce finds the optimal online matching strategy by dynamic programming - at each step, it prefers the class which maximizes the expected size of the matching of the remaining graph. While \bruteforce is optimal, it's runtime is $\Theta(n^{q+1})$ while the first three algorithms require only constant time for each match. Unfortunately, we show that each of the others is asymptotically sub-optimal in some instances.

We identify for each of these algorithms a family of instances on which it outperforms the previous algorithm, demonstrating in particular that only \bruteforce is optimal in general. We use the following notation: 
\begin{itemize}
    \item $R^{(t)}$: set of remaining (i.e., unmatched) right vertices at time $t$
    \item $R_{i}^{(t)}$: set of remaining right vertices of class $i$ at time $t$, $i=1,2\ldots q$ 
    \item $\text{deg}(\cdot)$: expected degree of a vertex in given class
    \item $c(\cdot)$: class of a given vertex
    \item $v^{(t)}$: the left vertex revealed at time $t$ 
    \item $A^{(t)}=\{w\in R^{(t)} | (v^{(t)},w)\in E\}$: available right vertices at time $t$
    \item $C^{(t)}=\{l\in[q]|A^{(t)}\cap R_{l}^{(t)} \neq \emptyset\}$: available right classes at time $t$
\end{itemize}

Before considering these algorithms, we recall the simpler setting of $G(n,n,c/n)$. First note that any online algorithm that chooses not to match a left vertex when it is adjacent to at least one available right vertex is sub-optimal. Since, in this setting, all right vertices are indistinguishable, any online algorithm that always matches when possible is optimal. This gives the following:

\begin{fact}[Mastin and Jaillet~\cite{er-online}] 
    In the online setting, the expected size of a matching in $G(n,n,c/n)$ produced by an optimal online algorithm is given by $$\left(1-\frac{\ln(2-e^{-c})}{c}\right)n.$$
\end{fact}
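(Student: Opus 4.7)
The plan is to combine a simple optimality argument for the ``always match'' policy with a fluid-limit analysis of the resulting Markov chain on the number of surviving right vertices.

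First I would justify that any algorithm which matches $v^{(t)}$ to an arbitrary available neighbour whenever $A^{(t)} \neq \emptyset$ is optimal in expectation. The key observation is a symmetry/exchangeability argument: since every right vertex has the same class and the edges incident to future left arrivals are i.i.d.\ Bernoulli$(c/n)$ independent of the past, any two currently-unmatched right vertices are exchangeable in the remaining process. Hence a swap argument shows that (a) refusing to match when an edge is available is strictly dominated, and (b) the identity of the chosen neighbour is irrelevant. This reduces the problem to analyzing the single observable statistic $R^{(t)} \defeq |R^{(t)}|$.

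Next I would set up the Markov chain for $R^{(t)}$. Conditional on $R^{(t)} = r$, the probability that $v^{(t+1)}$ has at least one edge into $R^{(t)}$ equals $1 - (1 - c/n)^r$, so
\[
\E[R^{(t+1)} - R^{(t)} \mid R^{(t)} = r] \;=\; -\bigl(1 - (1-c/n)^r\bigr).
\]
Writing $\bar R(\tau) = R^{(\lfloor \tau n\rfloor)}/n$ with $\tau \in [0,1]$, the rescaled expected increment is $-\bigl(1 - e^{-c\bar R}\bigr)/n + O(1/n^2)$, which is exactly the Wormald-style setup. Applying Wormald's theorem (as in \cref{phase1ode}, with trivially-verified Lipschitz and boundedness conditions) gives that with probability $1 - o(1)$,
\[
\sup_{\tau \in [0,1]} \bigl|\bar R(\tau) - r(\tau)\bigr| = o(1),
\]
where $r$ solves
\[
r'(\tau) = -\bigl(1 - e^{-c\,r(\tau)}\bigr), \qquad r(0) = 1.
\]

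Finally I would solve the ODE in closed form. Separating variables and substituting $u = e^{c r} - 1$ gives $\frac{1}{c}\ln(e^{c r(\tau)} - 1) = -\tau + \frac{1}{c}\ln(e^c - 1)$, so
\[
r(\tau) = \frac{1}{c}\ln\bigl(1 + (e^c - 1)e^{-c\tau}\bigr),
\]
and in particular $r(1) = \frac{1}{c}\ln(2 - e^{-c})$. Since the matching size is $n - R^{(n)}$, this yields the expected matching size $\bigl(1 - \ln(2 - e^{-c})/c\bigr)n + o(n)$, matching the claimed formula. The only real obstacle in the argument is the rigorous fluid limit step, but here the state is a single scalar with uniformly Lipschitz drift on $[0,1]$, so Wormald's theorem applies with essentially no work; the optimality reduction and the ODE integration are both routine.
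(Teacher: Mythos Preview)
Your proposal is correct and matches the paper's treatment. The paper does not give its own proof of this statement---it records it as a cited Fact from Mastin and Jaillet---but the one-line justification it offers (``all right vertices are indistinguishable, so any online algorithm that always matches when possible is optimal'') is exactly your exchangeability reduction, and the ODE $x' = e^{-cx} - 1$ that you derive and solve is the same one the paper invokes later when analyzing \dumb in the equitable case.
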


The stochastic block model setting allows for more nuance; designing optimal algorithms is nontrivial in general. 

\subsection{\dumb}
First, consider the simplest possible matching heuristic. For each left vertex that arrives, \dumb chooses uniformly at random one of the available edges adjacent to that vertex to add to the matching. While \dumb is sub-optimal in general, we can show that it returns an asymptotically optimal matching in any equitable block model.

\begin{algorithm}[H]
\caption{\dumb}
\begin{algorithmic}[1]
  \State $M \gets \emptyset$
  \For{$t$ in $\{1,2\ldots n\}$}
    \Comment{vertex $v^{(t)}$ revealed}
    \If{$|A^{(t)}|>0$}
        \State Choose an element uniformly at random from $A^{(t)}$, denoted by $u$
        \State $M \gets M \cup \{(u,v)\}$
    \EndIf
  \EndFor
  \State Return $M$
\end{algorithmic}
\end{algorithm}

\begin{lemma}
    In the equitable case, i.e., when all of the vertices have the same average degree, \dumb returns an expected matching size equal to that in the bipartite Erd\H{o}s-R\'enyi case ($G(n, n, c/n)$) with $c=\frac{1}{n}\sum_j c_{ij}|R_j| = \frac{1}{q}\sum_i c_{ij}$ .
\end{lemma}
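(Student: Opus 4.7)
The plan is to track the vector of remaining right class sizes via the differential equation method, and show that under equitability the resulting ODE reduces to the same one-dimensional ODE governing $G(n,n,c/n)$, so the expected number of matches agrees up to $o(n)$.

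Set $s_j(t) = r_j(t)/n$, where $r_j(t)$ is the number of unmatched right vertices of class $j$ after $t$ left arrivals. When a class-$i$ left vertex arrives (probability $1/q$), its incident edge counts $N_k$ are asymptotically independent Poisson$(c_{ik} s_k)$, so the probability of at least one edge being present is $1 - e^{-\phi_i} + o(1)$, where $\phi_i \defeq \sum_k c_{ik} s_k$. Conditional on $\sum_k N_k \ge 1$, \dumb selects uniformly from the available edges, and the Poisson conditioning identity gives $\E\bigl[N_j/\sum_k N_k \,;\, \sum_k N_k \ge 1\bigr] = (c_{ij} s_j/\phi_i)(1 - e^{-\phi_i})$. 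Hence
\[
\E[\Delta s_j] = -\frac{1}{qn}\sum_i \frac{c_{ij} s_j}{\phi_i}\bigl(1 - e^{-\phi_i}\bigr) + o(1/n).
\]
As in \cref{phase1ode}, Wormald's theorem applies and yields $s_j(\tau n) = \tilde s_j(\tau) + o(1)$ w.h.p., where $\tilde s_j$ solves the corresponding autonomous ODE on $[0,1]$ with initial condition $\tilde s_j(0) = 1/q + o(1)$.

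The crucial step is to show this ODE preserves equitability. Differentiating $\phi_i(\tau) = \sum_k c_{ik}\tilde s_k(\tau)$ gives
\[
\dot\phi_i(\tau) = -\frac{1}{q}\sum_{i',j}\frac{c_{ij} c_{i'j}\, \tilde s_j(\tau)\,(1 - e^{-\phi_{i'}(\tau)})}{\phi_{i'}(\tau)}.
\]
Combining symmetry of $c_{ij}$ with the equitability hypothesis $\sum_j c_{ij}/q = c$ yields $\sum_{i'} c_{i'j} = qc$ for every $j$. Under the inductive assumption $\phi_{i'}(\tau) = \phi(\tau)$ for all $i'$ at time $\tau$, the right-hand side collapses to $-\frac{c(1-e^{-\phi})}{\phi}\phi_i$, a quantity depending on $i$ only through $\phi_i$. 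Since $\phi_i(0) = c$ uniformly in $i$, Picard uniqueness forces $\phi_i(\tau) \equiv \phi(\tau)$ for all $\tau \in [0,1]$, with $\dot\phi = -c(1 - e^{-\phi})$ and $\phi(0) = c$.

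The expected matching size is then $\E[|M|] = n\int_0^1 (1 - e^{-\phi(\tau)})\,d\tau + o(n)$, since the $\tau n$-th arrival contributes a match with probability $\sum_i (1 - e^{-\phi_i(\tau)})/q \to 1 - e^{-\phi(\tau)}$. The ER analysis of $G(n,n,c/n)$ produces exactly the same ODE $\dot\psi = -c(1-e^{-\psi})$, $\psi(0) = c$, for the scaled expected degree $\psi = cs$, and hence the same integral expression, so the two matching sizes agree up to $o(n)$. The main technical obstacle is verifying the Wormald hypotheses -- boundedness and Lipschitz continuity of the drift -- uniformly away from the singular locus $\phi_i = 0$ where the denominators vanish; this is handled by noting that $\phi(\tau)$ stays strictly positive on $[0,1]$ for any finite $c$, so the trajectory avoids the singularity, and the standard argument of \cref{appendix:wormwald} goes through.
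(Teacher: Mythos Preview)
Your approach is essentially the same as the paper's: derive the ODE system for the unmatched right vertices via Wormald, show that equitability collapses the system onto a one-dimensional trajectory governed by the Erd\H{o}s--R\'enyi equation, and read off the matching size. The paper phrases the preserved invariant as ``all $u_j = r_j/(|R_j|n)$ equal'' (proportional depletion of each right class) rather than your ``all $\phi_i$ equal'' (equal effective degree seen by each left class), but under equitability the former implies the latter, and only the latter is needed to compute the match probability, so the two reductions are interchangeable.

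One correction: your appeal to ``symmetry of $c_{ij}$'' to obtain $\sum_{i'} c_{i'j} = qc$ is misplaced, since in the bipartite model the matrix $c_{ij}$ is indexed by left classes against right classes and has no reason to be symmetric. The identity you need follows instead from the right-vertex half of the equitability hypothesis, namely $\frac{1}{q}\sum_i c_{ij} = c$, which is already stated in the lemma. With that fix the computation goes through unchanged.
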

\begin{proof}
    Let $X_j^{(t)}$ denote the number of unmatched vertices of class $R_j$ at time $t$. We have that $X_j^{(0)}=|R_j|$ and $$\mathbb{P}\left(X_{j}^{(t+1)}=x_j-1|X_{j}^{(t)}=x_j\right)=\sum_{i=1}^q\frac{1}{k}\left(1-\prod_{l=1}^q(1-p_{il})^{X_l^{(t)}}\right)\frac{p_{ij}x_j}{\sum_{l=1}^q p_{il}X_l^{(t)}},$$
where we approximate the probability of choosing a vertex of $R_j$ as $\frac{p_{ij}x_j}{\sum_{l=1}^q p_{il}x_l}$ given that there is at least one available edge. Therefore, if we let $u_j$ denote the evolution of the number of unmatched vertices of $R_j$ divided by $|R_j|n$, we have the following system of differential equations (from Wormald's Theorem) describing the evolution of unmatched vertices in each class: \begin{equation}\label{eq:ujprime}
u_j' = -\frac{1}{|R_j|}\sum_{i=1}^q\frac{1}{q}\left(1-e^{-\sum_{l=1}^q c_{il}|R_l|u_l}\right)\frac{p_{ij}|R_j|u_j}{\sum_{l=1}^q p_{il}|R_l|u_l}\qquad j\in\{1, 2,\ldots q\}.\end{equation}
Note that $u_j(0)=\frac{1}{n}$ for all $j$, i.e.,the $u_j$'s start out equal. Assume $u_j(t)=u(t)$ for some function $u$, for all $j$. Then we have
\begin{equation}\label{eq:uprime}
u'=-\sum_{i=1}^q\frac{1}{q}\left(1-e^{-\sum_{l=1}^q c_{il}|R_l|u}\right)\frac{p_{ij}}{\sum_{l=1}^qp_{il}|R_l|}=-\sum_{i=1}^q\frac{1}{q}\left(1-e^{-ncu}\right)\frac{c_{ij}}{nc}=-\frac{1}{n}(1-e^{-ncu}).\end{equation}
Recall that the evolution of the fraction of unmatched vertices $x$ in the bipartite Erd\H{o}s-R\'enyi case is the solution to by $x'=e^{-cx}-1$. Note that $u=\frac{x}{n}$ is a solution to \eqref{eq:uprime} (so $u_j=u$ is a solution to \eqref{eq:ujprime}), therefore the total number of unmatched vertices at time $t$ is $$\sum_{j=1}^n |R_j|n\cdot u(t)=n^2 u(t) = nx(t),$$ as desired. 
\end{proof}

We also note that this is tight; i.e., that no algorithm can do asymptotically better than \dumb on an equitable block model. The justification for this comes from the following fact:
\begin{lemma} \label{equitable-balance-good}
    In an equitable stochastic block model, if there are a total of $|R^{(t)}|$ unmatched vertices on the right, the probability of matching on the next step is maximized when those $|R^{(t)}|$ vertices are equally distributed among all classes (that is, each right type $r$ has $|R_r^{(t)}| = \frac{|R^{(t)}|}{q}$ unmatched vertices).
\end{lemma}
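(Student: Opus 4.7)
The plan is to write the per-step matching probability explicitly as a function of $(R_1,\ldots,R_q)$ (where $R_j := |R_j^{(t)}|$) and then to apply Jensen's inequality to the concave function $\phi(t) = 1 - e^{-t}$. In the sparse regime, the probability that the next left vertex $v^{(t+1)}$ has at least one available neighbor on the right is
\[
P(R_1,\ldots,R_q) \;=\; \frac{1}{q}\sum_{i=1}^q\Bigl(1 - \exp\bigl(-\textstyle\sum_{j=1}^q c_{ij}R_j/n\bigr)\Bigr) \;=\; \frac{1}{q}\sum_{i=1}^q \phi(y_i),
\]
where $y_i := \sum_j c_{ij}R_j/n$. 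Our goal is to maximize $P$ subject to $\sum_j R_j = |R^{(t)}|$.

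By Jensen's inequality applied to $\phi$, one has $\sum_i \phi(y_i) \le q\,\phi(\bar y)$, with $\bar y := \tfrac{1}{q}\sum_i y_i$, and equality iff all $y_i$ coincide. The crucial observation is that under the equitable assumption together with uniform class sizes $|L_i| = |R_j| = n/q$, the matrix $(c_{ij})$ has both row sums \emph{and} column sums equal to $cq$: the constraint that every left vertex has expected degree $c$ forces $\sum_j c_{ij} = cq$, and the constraint that every right vertex has expected degree $c$ forces $\sum_i c_{ij} = cq$. The column-sum condition gives
\[
\bar y \;=\; \frac{1}{qn}\sum_j R_j\Bigl(\sum_i c_{ij}\Bigr) \;=\; \frac{c\,|R^{(t)}|}{n},
\]
a quantity that depends only on $|R^{(t)}|$, not on how it is distributed across classes. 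Hence the Jensen upper bound $q\,\phi(c|R^{(t)}|/n)$ is the same for every admissible allocation.

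The final step is to verify that the symmetric allocation $R_j = |R^{(t)}|/q$ attains this bound: at that point the row-sum condition yields $y_i = \frac{|R^{(t)}|}{qn}\sum_j c_{ij} = c|R^{(t)}|/n = \bar y$ for every $i$, so Jensen is tight. Strict concavity of $\phi$ on $[0,\infty)$ then implies that any other distribution of $|R^{(t)}|$ among the classes yields strictly smaller $P$, which is the claimed maximization property.

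The only subtlety I foresee is keeping track of which half of the equitable condition is being used where: the column-sum condition (from right-side degrees) is what makes the Jensen upper bound invariant under redistributions of $|R^{(t)}|$, while the row-sum condition (from left-side degrees) is what ensures the uniform allocation actually realizes that bound. Both conditions follow immediately from equitability together with the (WLOG) assumption of uniform class sizes made at the start of Section~\ref{sec:online}, so no further machinery is needed.
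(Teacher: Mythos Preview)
Your proof is correct and takes essentially the same approach as the paper: the paper applies AM--GM to the terms $e^{-y_i}$, which is exactly Jensen's inequality for the concave function $\phi(t)=1-e^{-t}$, and then uses the equitability condition to evaluate the bound and to see it is attained at the balanced allocation. Your explicit separation of the column-sum condition (making $\bar y$ depend only on $|R^{(t)}|$) from the row-sum condition (making the uniform allocation attain the bound) is a cleaner articulation of what the paper does implicitly.
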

\begin{proof}
    Let $\rho_r = \frac{|R_r^{(t)}|}{|R^{(t)}|}$ denote the fraction of unmatched right vertices belonging to type $r$. The probability of there being an available edge in the next step of the algorithm is (as $n \to \infty$)
    \[\frac{1}{q} \sum_l e^{-\sum_r c_{lr} \rho_r}.\]
    By AM-GM inequality, this is at most 
    \[\left( \prod_l e^{-\sum_r c_{lr} \rho_r} \right)^{1/q} = e^{-\sum_l \frac{1}{q} \sum_r c_{lr}\rho_r} = e^{-\sum_r \frac{1}{q} \left(\sum_l c_{lr}\right)\rho_r} = e^{-c},\]
    which is precisely the value obtained by setting all $\rho_r = \frac{1}{q}$
\end{proof}

From this fact, we see that no algorithm can do asymptotically better than \dumb.

\begin{theorem}The optimal competitive ratio for any online algorithm in an equitable stochastic block model is 
\[
\frac{c-\ln(2-e^{-c})}{\left(2c - x + ce^{-x} + xce^{-x}\right) },
\]
where $x$ is the smallest solution of $x = ce^{-ce^{-x}}$. 
\end{theorem}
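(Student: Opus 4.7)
The plan is to combine three ingredients: an upper bound on the expected matching size of any online algorithm, the matching achievability already established by \dumb in the preceding lemma, and the offline matching number from \cref{equitable_offline}. Dividing the optimal online size by the offline size then yields the claimed competitive ratio.

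I would prove the upper bound using \cref{equitable-balance-good} together with a Jensen-plus-comparison argument. Fix any online algorithm; WLOG it matches whenever possible (refusing an available edge only shrinks the final matching). Conditional on $\mathcal{F}_t$, the probability that $v^{(t+1)}$ has an available neighbor is at most $1 - e^{-c|R^{(t)}|/n} + o(1)$ by \cref{equitable-balance-good}, so $\mathbb{E}[|R^{(t)}| - |R^{(t+1)}| \mid \mathcal{F}_t] \leq 1 - e^{-c|R^{(t)}|/n} + o(1)$. Taking total expectations and applying Jensen to the convex function $x \mapsto e^{-cx/n}$, we get $\mathbb{E}[|R^{(t+1)}|] \geq \mathbb{E}[|R^{(t)}|] - (1 - e^{-c\mathbb{E}[|R^{(t)}|]/n}) + o(1)$. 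Writing $\bar u(s) := \mathbb{E}[|R^{(\lfloor sn \rfloor)}|]/n$, this is a discrete-time version of the differential inequality $\dot{\bar u} \geq e^{-c\bar u} - 1$. The corresponding ODE $\dot u = e^{-cu} - 1$ with $u(0) = 1$ is exactly the one solved in the \dumb proof, giving $u^{*}(1) = \ln(2 - e^{-c})/c$. A standard discrete Gronwall comparison then gives $\bar u(1) \geq u^{*}(1) + o(1)$, so the expected matching size is at most $(1 - \ln(2 - e^{-c})/c)\,n + o(n)$. Since the preceding lemma shows that \dumb achieves this, the optimal online matching size is exactly $(1 - \ln(2 - e^{-c})/c)\,n + o(n)$.

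For the offline matching, I would apply \cref{equitable_offline} to the bipartite equitable SBM, viewed as a stochastic block model on $2n$ total vertices in which every vertex has expected degree $c$. This gives an offline matching number of $(1 - \frac{x + ce^{-x} + xce^{-x}}{2c})\cdot 2n + o(n)$, where $x$ is the smallest positive solution of $x = ce^{-ce^{-x}}$. Taking the ratio:
\[
\frac{(1 - \ln(2 - e^{-c})/c)\,n}{\left(1 - \frac{x + ce^{-x} + xce^{-x}}{2c}\right)\cdot 2n}
= \frac{c - \ln(2 - e^{-c})}{2c - x - ce^{-x} - xce^{-x}},
\]
which is the claimed competitive ratio.

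The main obstacle I anticipate is the comparison step in the upper bound. \cref{equitable-balance-good} gives only an inequality on the expected one-step decrease, and the algorithm is adaptive---it can in principle skew the class distribution of unmatched right vertices---so Wormald's theorem (which requires an equality in the drift) does not apply directly. Jensen combined with a Gronwall-type comparison handles the expected-value version that the competitive ratio requires; strengthening this to a high-probability statement would additionally require a concentration step, most cleanly through Azuma--Hoeffding on the supermartingale $|R^{(t)}|/n - u^{*}(t/n)$, which I expect to be routine given the $O(1/n)$ per-step increments. A smaller bookkeeping check is that \cref{equitable_offline} applies to the bipartite equitable SBM, but this is immediate since the bipartite model is a special SBM with $2q$ classes that inherits the common-expected-degree condition.
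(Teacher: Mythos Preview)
Your proposal is correct and follows the same overall structure as the paper: combine \cref{equitable-balance-good} for the upper bound, the preceding \dumb lemma for achievability, and \cref{equitable_offline} for the offline denominator, then take the ratio. The one point of difference is how the upper bound is extracted from \cref{equitable-balance-good}. The paper argues informally by case analysis on the class proportions $\rho_i$: if they stay balanced the algorithm is indistinguishable from \dumb, and if they become unbalanced the per-step match probability drops strictly below \dumb's and the algorithm can never recover. Your route via Jensen on $x\mapsto e^{-cx/n}$ followed by a discrete Gronwall comparison against the \dumb ODE is a cleaner and more rigorous way to reach the same conclusion, and it sidesteps the need to track the $\rho_i$ trajectory at all. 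Both arguments rest on exactly the same lemma, so this is a refinement rather than a genuinely different approach.

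One minor bookkeeping note: your final ratio computation yields denominator $2c - x - ce^{-x} - xce^{-x}$, which is indeed what falls out of \cref{equitable_offline}; the theorem as displayed has $+$ signs on the last two terms, which appears to be a typo in the statement rather than an error on your part.
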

\begin{proof}
First, note that this value is precisely what we've shown for the competitive ratio of \dumb; this can be found by dividing the asymptotic matching size we proved in that case by the offline matching size we proved in \cref{equitable_offline}. Therefore, it is left to establish that no algorithm can do better than \dumb. We proceed by contradiction - assume there is a better algorithm, and consider the expected values of $\rho_i$ as in \cref{equitable-balance-good} in this algorithm. If the $\rho_i$'s stay within $o(1)$ of equal over all time, then the algorithm looks asymptotically identical to \dumb. If at some point they become unequal, then at that time the algorithm must fall behind \dumb and never catch up, by \cref{equitable-balance-good}.
\end{proof}

Note that this value is exactly the same as the competitive ratio lower bound conjectured by Mastin and Jaillet to be tight for Erd\H{o}s--R\'enyi graphs; as a special case, we have shown that conjecture \cite{er-online}.

Throughout the rest of the paper, we examine non-equitable block models, providing specific instances where one algorithm beats another. Our pictorial notation includes nodes, which represent classes, and edges with a number $c$, which represents an edge probability between the corresponding classes of $\frac{c}{n}$. The absence of an edge represents $0$ probability of edges between classes.

\subsection{\degreedy}

\dumb does not perform optimally in some cases. Consider the following block model with one left class and two right classes:
\begin{figure}[H]
\centering
\begin{tikzpicture}
\begin{scope}[every node/.style={circle,thick,draw}]
    \node (L0) at (0,3) {$L_0$};
    \node (R0) at (3,3) {$R_0$};
    \node (R1) at (3,0) {$R_1$};
\end{scope}

\begin{scope}[>={Stealth[red]},
              every node/.style={fill=white,circle},
              every edge/.style={draw=red,very thick}]
    \path [->] (L0) edge node {$100$} (R0);
    \path [->] (L0) edge node {$1$} (R1);
\end{scope}
\end{tikzpicture}
\caption{\degreedy beats \dumb. $|L_0|=n$, $|R_0|=|R_1|=\frac{n}{2}$}
\end{figure}

In such a scenario, it seems sensible to match $R_1$ vertices as they are available, because it is much less likely they will be available again as opposed to $R_0$ vertices (due to the average degree of $R_1$ vertices being lower than that of $R_0$ vertices). However, \dumb will naturally miss out on available $R_1$ vertices as simply choosing available edges at random heavily favors $R_0$ (the edge probability is $100$ times higher). 
This leads us to consider the following algorithm: \degreedy, which prioritizes matching right vertices with the lowest average degrees. We introduce the algorithm below -- \degreedy and all following algorithms choose which class to match to and then randomly choose an available edge from that class at each step.

At each step, \degreedy identifies the available class (i.e., a right class with at least one available edge) with the lowest average degree - if there is a tie, choose at random one of the lowest-average-degree classes. It then uniformly chooses an available edge at random from this class. 
\begin{algorithm}[H]
\caption{\degreedy}

\begin{algorithmic}[1]
  \State $M \gets \emptyset$
  \For{$t$ in $\{1,2\ldots n\}$}
    \Comment{vertex $v^{(t)}$ revealed}
    \If{$|A^{(t)}|>0$}
        \State $C=\underset{j\in C^{(t)}}{\mathrm{argmin}}$ $\text{deg}(R_j)$
        \State Choose an element uniformly at random from $C$, denoted by $c$. 
        \State Choose a vertex uniformly at random from $R_c^{(t)}$, denoted by $u$ 
        \State $M \gets M \cup \{(u,v)\}$
    \EndIf
  \EndFor
  \State Return $M$
\end{algorithmic}
\end{algorithm}

While \degreedy often outperforms \dumb, it is prone to over-correct when there is a very slight difference in average degree between right classes. Consider the following block model:
\begin{figure}[H]
\centering
\begin{tikzpicture}
\begin{scope}[every node/.style={circle,thick,draw}]
    \node (L0) at (0,3) {$L_0$};
    \node (L1) at (0,0) {$L_1$};
    \node (R0) at (3,3) {$R_0$};
    \node (R1) at (3,0) {$R_1$};
\end{scope}

\begin{scope}[>={Stealth[red]},
              every node/.style={fill=white,circle},
              every edge/.style={draw=red,very thick}]
    \path [->] (L0) edge node {$100$} (R0);
    \path [->] (L1) edge node {$99$} (R0);
    \path [->] (L1) edge node {$200$} (R1);
\end{scope}
\end{tikzpicture}
\caption{\shortsighted beats \degreedy. All classes of size $\frac{n}{2}$}
\end{figure}

In this model, \degreedy will always match to $R_0$ while it is available. Simply matching $L_0$ vertices to $R_0$ vertices and $L_1$ vertices to $R_1$ vertices gives an expected matching size of \[\frac{n}{2}\left(1-\frac{\log(2-e^{-50})}{50}+1-\frac{\log(2-e^{-100})}{100}\right)\approx .9896n.\]
However, \degreedy will match $L_1$ vertices to an $R_0$ vertex if one is available - for each occurrence of this, we are ``blocking off" an $L_0$ vertex, and reducing the size of our potential matching by $1$. The number of times an $L_1$ vertex arrives and has an available edge to $R_0$ is much greater than $.02n$, and therefore \degreedy performs worse than simply always matching within class. We would like a heuristic that somewhat prefers vertices of low expected degree, but not to such an extreme level.

\subsection{\shortsighted}
The idea of \shortsighted is to, at each step, minimize the probability of being unable to match the next vertex (i.e., the probability that the next left vertex to arrive has no available edges). 
Let $u_{i}^{(t)}$ denote the number of unmatched vertices of $R_i$ at time $t$. The class(es) we choose to take an edge from at time $t$ is given by 
$$S^{(t)}=\underset{l\in C^{(t)}}{\mathrm{argmin}} \sum_{i=1}^q \frac{1}{q} \prod_{j=1}^q \left(1-p_{ij}\right)^{u_{j}^{(t)}-\mathbbm{1}\{l=j\}},$$
where the expression in the $\mathrm{argmin}$ is the probability that the next vertex has no available edges. 

\begin{algorithm}[H]
\caption{\shortsighted}

\begin{algorithmic}[1]
  \State $M \gets \emptyset$
  \State $u_{i}\gets |R_i|, i\in[q]$
  \For{$t$ in $\{1,2\ldots n\}$}
    \Comment{vertex $v^{(t)}$ revealed}
    \If{$|A^{(t)}|>0$}
        \State Compute $S^{(t)}$ per the formula above, using the current values of $u$
        \State Choose an element uniformly at random from $S^{(t)}$, denoted by $c$
        \State Choose a vertex uniformly at random from $R_c^{(t)}$, denoted by $w$ 
        \State $M \gets M \cup \{(w,v)\}$
        \State $u_c \gets u_c - 1$

    \EndIf
  \EndFor
  \State Return $M$
\end{algorithmic}
\end{algorithm}

In \cref{appendix:shortsighted}, we provide formal analysis of the behaviour of this algorithm in a broad range of settings. One might hope that looking a single step into the future is sufficient to determine the optimal class to match to, and so \shortsighted is an asymptotically optimal approach. However, this turns out not to be the case.

\begin{figure}[H]
\centering
\begin{tikzpicture}
\begin{scope}[every node/.style={circle,thick,draw}]
    \node (L0) at (0,3) {$L_0$};
    \node (L1) at (0,0) {$L_1$};
    \node (R0) at (3,3) {$R_0$};
    \node (R1) at (3,0) {$R_1$};
\end{scope}

\begin{scope}[>={Stealth[red]},
              every node/.style={fill=white,circle},
              every edge/.style={draw=red,very thick}]
    \path [->] (L0) edge node {$5$} (R0);
    \path [->] (L0) edge node {$1$} (R1);
    \path [->] (L1) edge node {$1$} (R1);
\end{scope}
\end{tikzpicture}
\caption{\bruteforce beats \shortsighted. All classes of size $\frac{n}{2}$}
\label{fig:kills-shortsighted}
\end{figure}

In the above model, \shortsighted prefers class $0$ when the number of unmatched vertices of class $R_0$ is at least $\frac{2\ln 2}{5}n$. Carrying out the analysis as described in \cref{appendix:shortsighted}, we obtain that the expected size of the matching is $\approx 0.574946n$. To show this is suboptimal, consider the following algorithm: prefer class $R_0$ until $.88n$ vertices have arrived, then thereafter prefer $R_1$. With essentially the same differential equation analysis as above, we find that the expected size of the matching is $\approx 0.575597n$. This algorithm out-performs \shortsighted by a linear factor in $n$ (albeit small, $\approx .0006n$), implying that \shortsighted is asymptotically sub-optimal.



\subsection{\bruteforce}


Note that it is not difficult to describe an optimal algorithm. \bruteforce precomputes, for every possible state of the algorithm (i.e., number of left vertices remaining and number of right vertices remaining in each class), the maximum expected matching size any strategy achieves from that state, and then always makes decisions to maximize this value. This gives the optimal expected matching size by definition. It is possible to do this precomputation in time $\mathcal{O}(n^{q+1})$ by dynamic programming.

\begin{algorithm}[H]
\caption{\bruteforce} 
\begin{algorithmic}[1]
    \State $D \gets $ empty table
    \State $D[0][i_1, \dots, i_q] \gets 0$ for all $i_1, \dots, i_q \in [n]$
    \For{$l \in [n]$}
        \For{$i_1, \dots, i_q \in [n]$}
            \State $p_j \gets \Pr\Big[\text{random left vertex matches class $j$, $\not\exists k$: $D[l-1][i_1, \dots, i_k - 1, \dots i_q] > D[l-1][i_1, \dots, i_j - 1, \dots i_q]$}\Big]$\\
            \Comment Can compute these probabilities explicitly and efficiently
            \State $D[l][i_1, \dots, i_q] \gets \Pr[\text{no matches}]\Big(D[l-1][i_1, \dots i_q]\Big) + \sum_{j \in [q]} p_j\Big(D[l-1][i_1, \dots, i_j - 1, \dots i_q] + 1\Big)$
        \EndFor
    \EndFor
    \State 
    \State $i_j \gets \text{number of right vertices in class $j$}$, $l \gets n$
    \For{$t \in [n]$, as vertices arrive}
        \Comment{vertex $v^{(t)}$ revealed}
        \If{available match}
            \State Match to available class $j$ that maximizes $D[l-1][i_1, \dots, i_j - 1, \dots i_q]$
            \State $i_j \gets i_j - 1$
        \EndIf
        \State $l \gets l - 1$
    \EndFor
\end{algorithmic}
\end{algorithm}

We conjecture based on simulations of \bruteforce that there should exist a simple, efficient algorithm that is also asymptotically optimal, however we have not found such a description. Plots of simulation of \bruteforce on the model in \cref{fig:kills-shortsighted} are included below.

\begin{figure}[H]
\centering
\includegraphics[width=8cm]{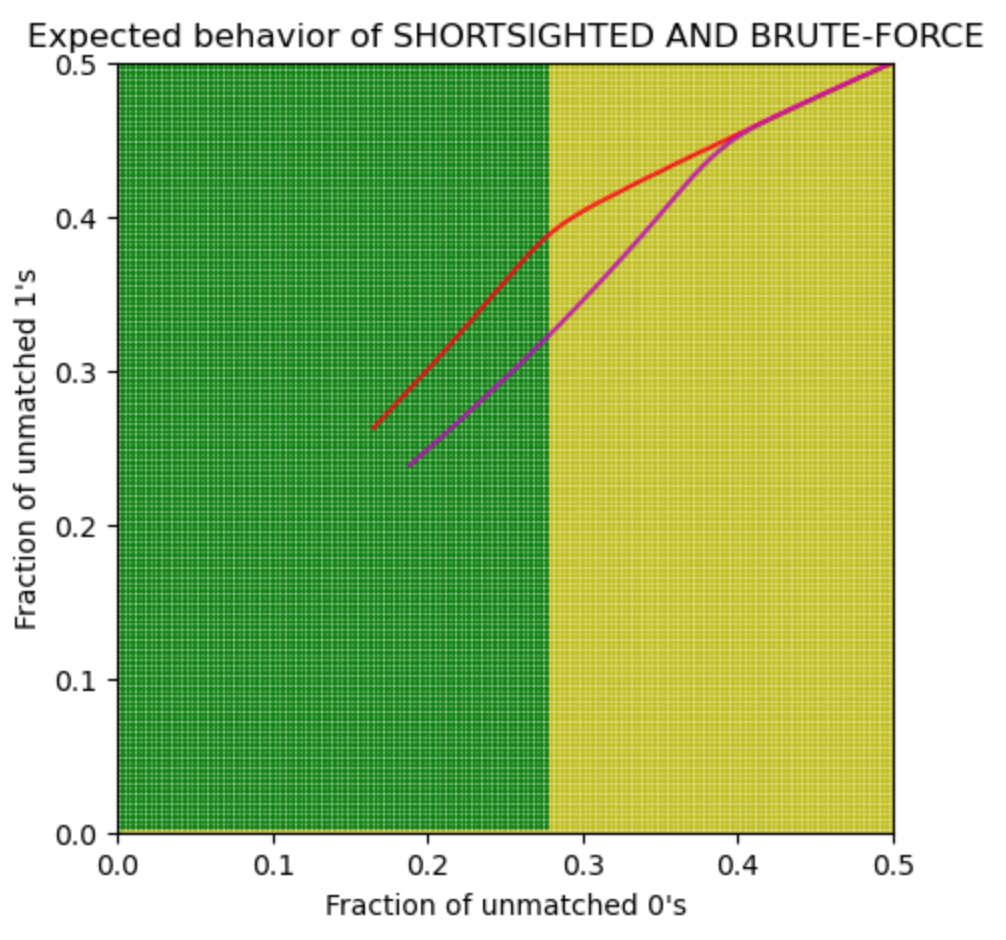}
\caption{Behavior of \shortsighted and \bruteforce for plot in \cref{fig:kills-shortsighted}, $n=300$. Yellow indicates \shortsighted's preference for class $0$, green for class $1$. The red curve is the average of the \shortsighted's path on $10,000$ instances of the model. Magenta shows \bruteforce's path.}
\end{figure}

\begin{figure}[H]
  \centering
  \includegraphics[width=.3\linewidth]{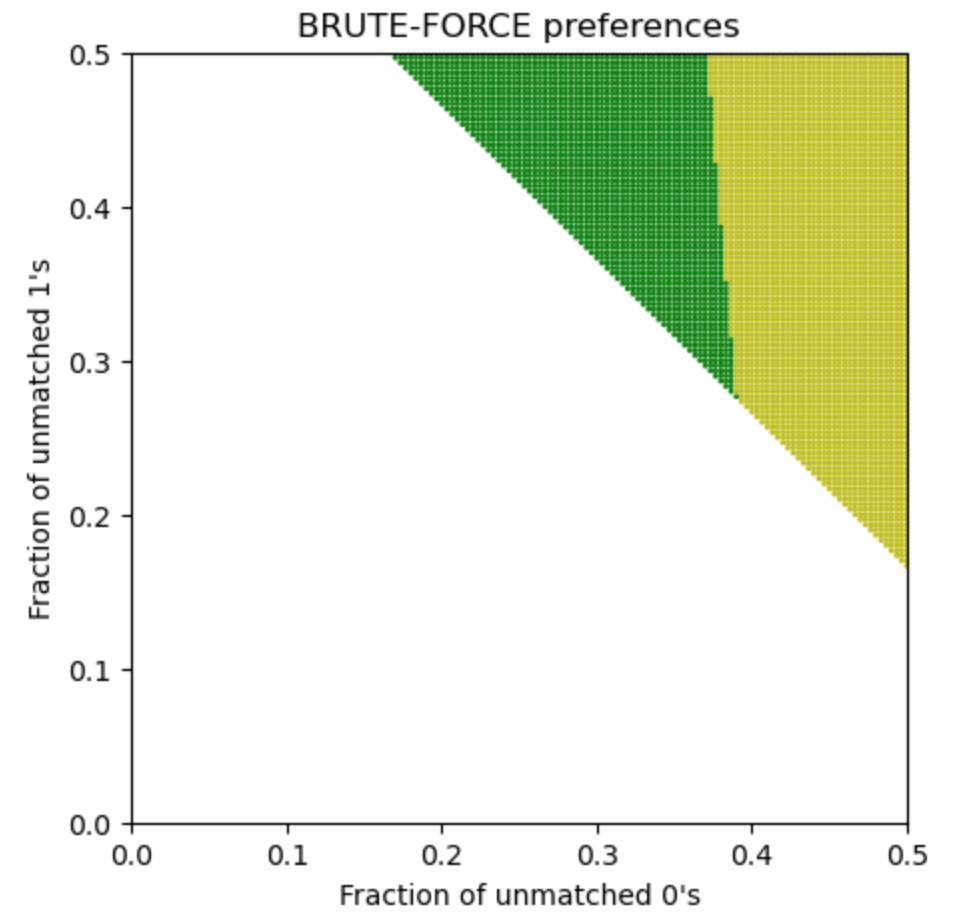}
  \includegraphics[width=.3\linewidth]{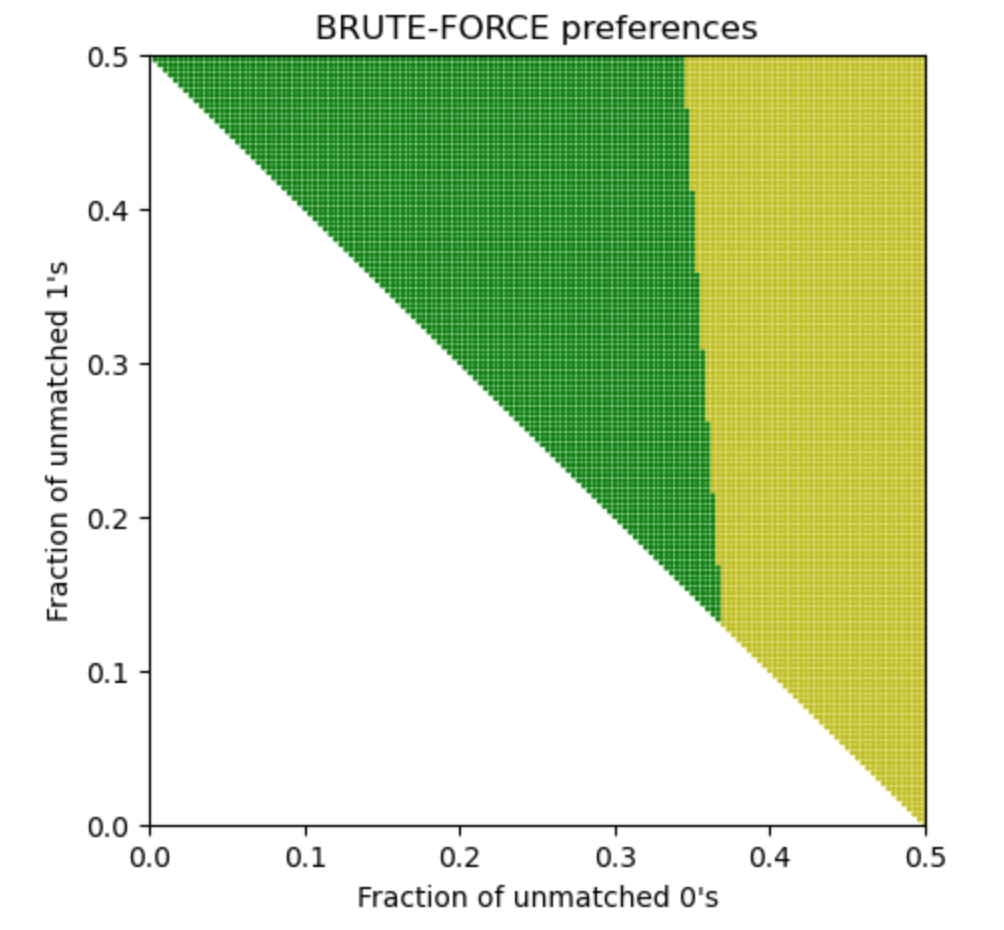}
   \includegraphics[width=.3\linewidth]{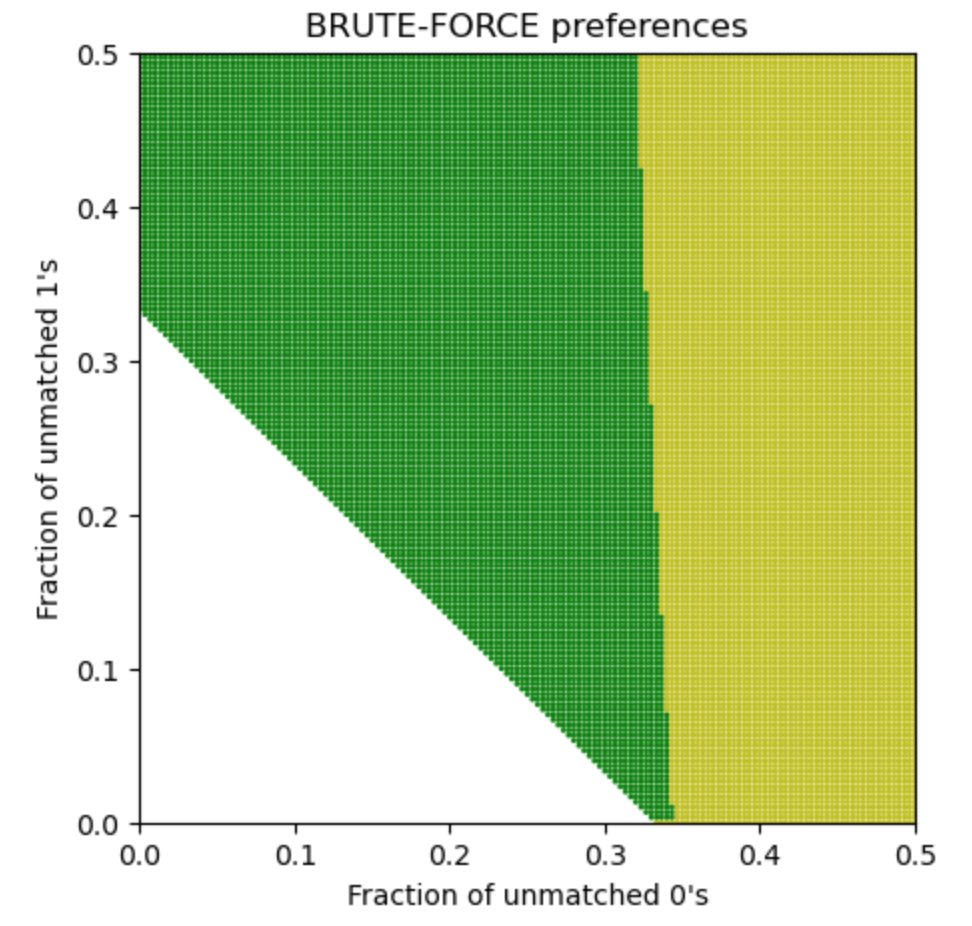}
   \includegraphics[width=.3\linewidth]{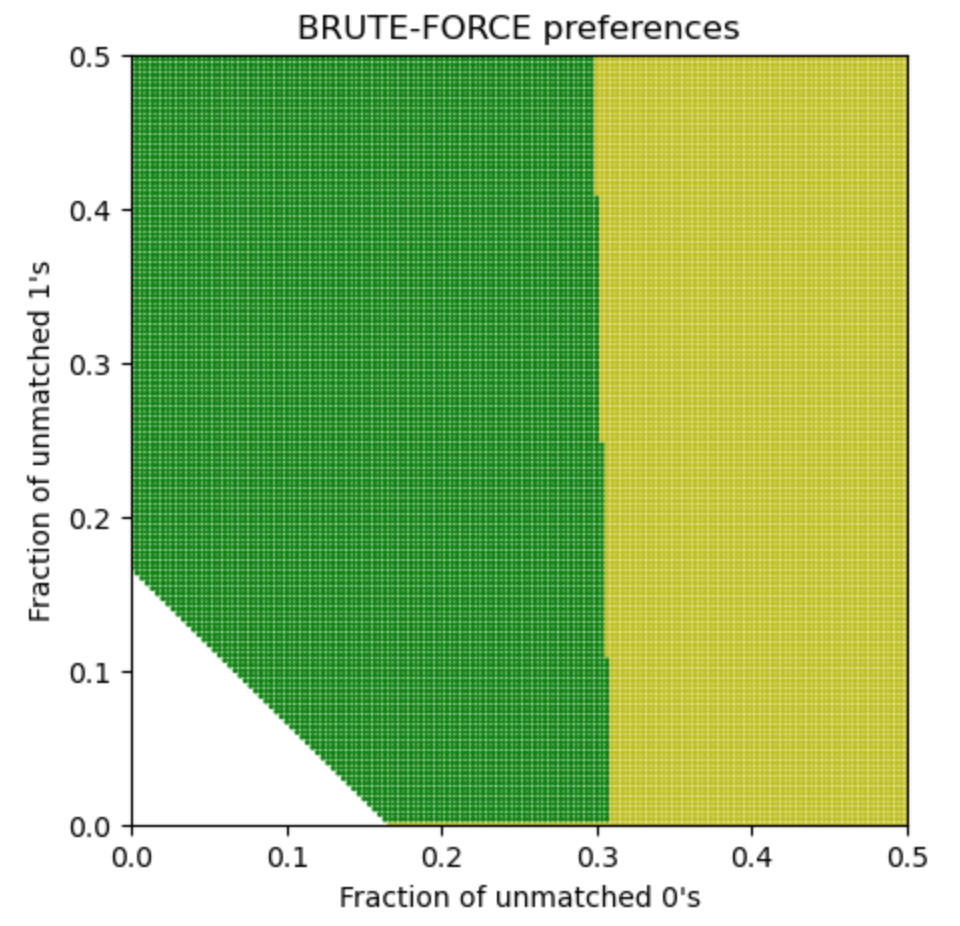}
   \includegraphics[width=.3\linewidth]{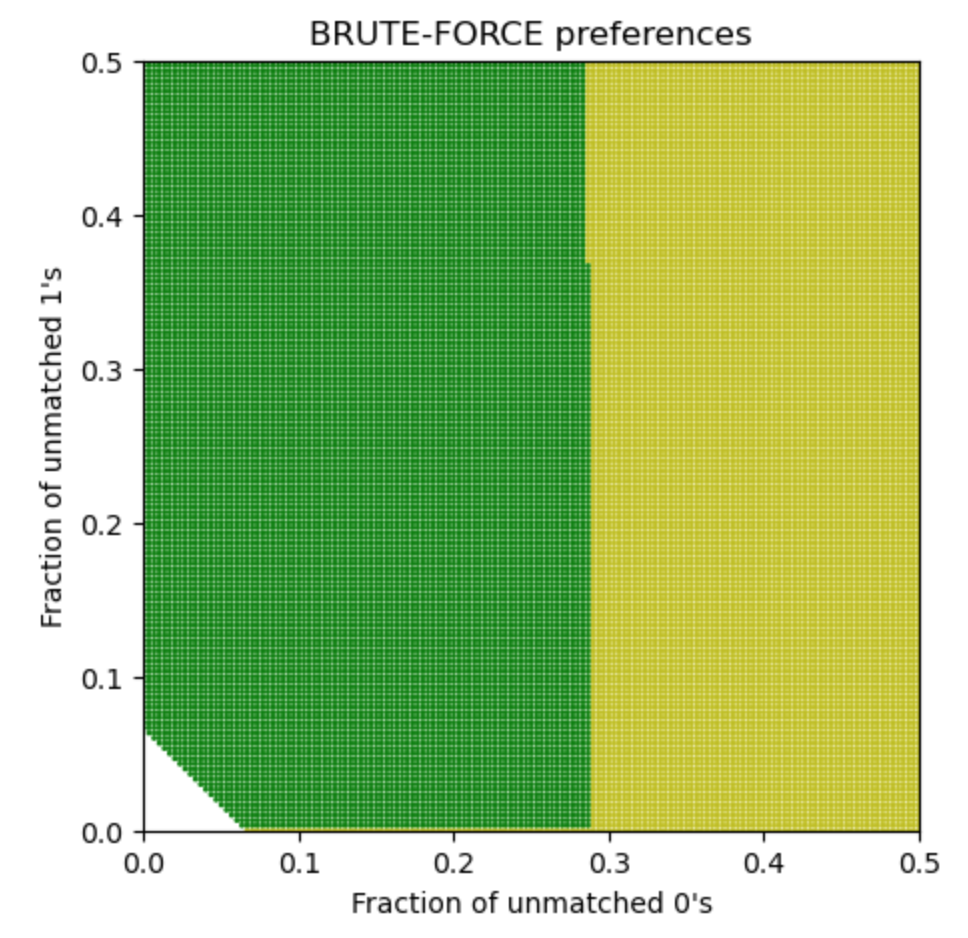}
  \includegraphics[width=.3\linewidth]{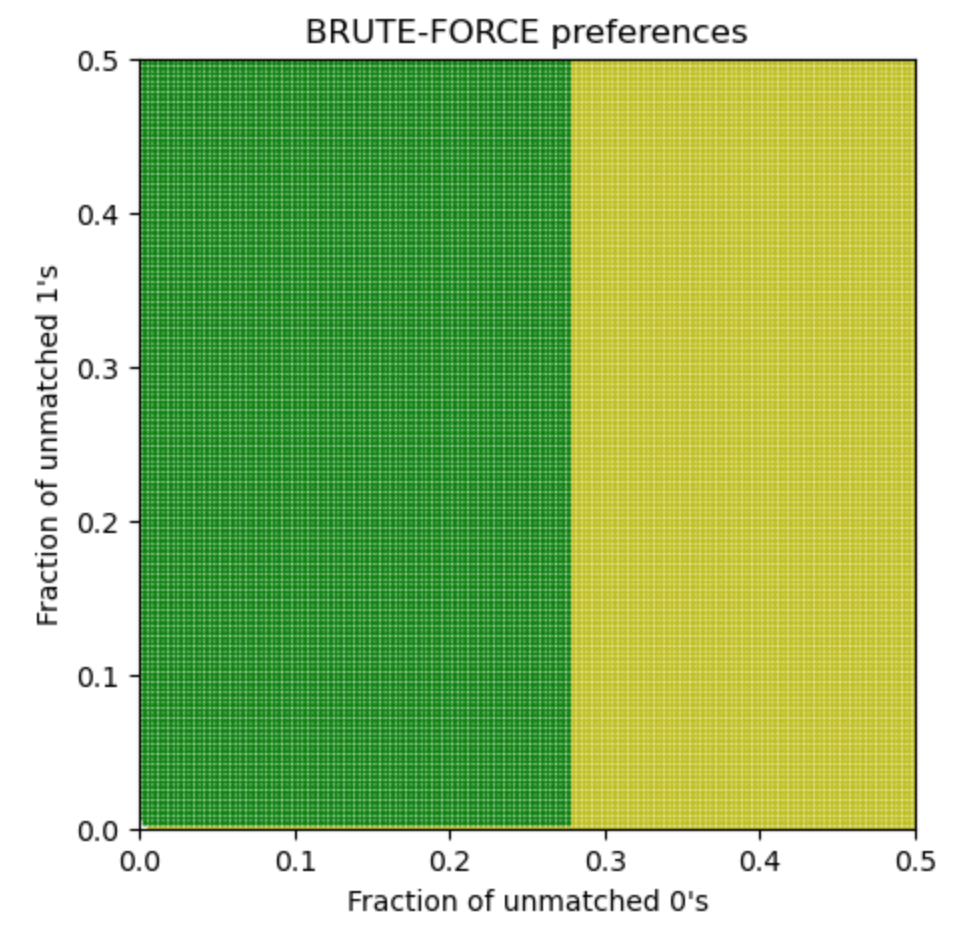}
  \caption{Preferences of \bruteforce for the model in \cref{fig:kills-shortsighted} varying over time. Yellow represents a preference for class $0$, green for class $1$. The number of vertices remaining for each plot are $200, 150, 100, 50, 20,$ and $2$ from left to right. Portions of the plot that are uncolored represent infeasible states at the given time.}
\end{figure}

\section{Conclusion}

In Erd\H{o}s--R\'enyi graphs, the structure looks uniform everywhere, and the performance of simple linear-time matching heuristics, both in the offline and online settings, is well understood. In this paper, we explored what happens when we reduce those uniformity assumptions, requiring edges to be indistinguishable on a local level but allowing large-scale global density differences (a model motivated by more realistic examples of graphs arising in nature). We showed that, so long as no class of vertices is expected to have noticeably different degrees from the others, these global density differences have no asymptotic effect on the performances of either those simple online or offline algorithms. When some vertices are allowed to be biased towards higher degree than others, however, we find that the situation becomes much more complex. There is ample ground for further work on this subject. In the offline setting, we would be interested to see classes of graphs on which the Label-Aware Karp--Sipser algorithm finds near optimal matchings even though oblivious Karp--Sipser does not. In the online setting, it would be valuable either to find a more efficient means of calculating the optimal that \bruteforce finds, or else to prove that \shortsighted (or some other reasonable alternative) always achieves a very similar competitive ratio.

\section{Acknowledgements}

This project was conducted as a part of the 2023 Summer Program for Undergraduate Research at MIT. We're grateful to all the organizers of the program for facilitating it, to Professor Elchanan Mossel for suggesting the topic, and to Professor David Jerison in particular for his helpful feedback. 

\bibliographystyle{plain}
\bibliography{citations}

\appendix
\section{Justifications of Transition Probabilities}
\subsection{Blocked Configuration Model}\label{appendix:configuration}

\begin{lemma}
Conditional on the output being simple (no multiple edges or self-loops), this corresponding blocked configuration model gives the same distribution as the original stochastic block model.
\end{lemma}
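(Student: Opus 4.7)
My plan is to reduce the claim to a standard symmetry argument for configuration models: condition on the block edge-count tuple $(m_{ij})$ and verify that both models produce the uniform distribution on simple graphs with those counts. Because the shuffle preserves $(m_{ij})$ exactly, this conditional equality captures the essential content of the lemma.

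First I would use the alternative description of the shuffle already noted in the paper: for $i \neq j$, the shuffle samples an ordered list of $m_{ij}$ i.i.d.\ uniform edges from $S_i \times S_j$; for $i = j$, analogously, an i.i.d.\ sample of $m_{ii}$ uniform unordered pairs from $S_i$ (self-loops allowed). For any simple $G^*$ with block-counts $(m_{ij}^*)$, I would count the ordered sequences producing $G^*$: for $i \neq j$ there are exactly $m_{ij}^*!$ such sequences (one per permutation of the edges of $G^*$), each of probability $(|S_i||S_j|)^{-m_{ij}^*}$, so $\Pr[\text{block output} = G^* \mid m_{ij}^*] = m_{ij}^*!/(|S_i||S_j|)^{m_{ij}^*}$. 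The $i = j$ case requires slightly more bookkeeping to handle unordered self-edge endpoints and to separate loops from ordinary edges, but the resulting per-block probability is still a function of $m_{ii}^*$ alone. Since shuffles across distinct $(i,j)$-blocks are independent, $\Pr[G_{\mathrm{config}} = G^* \mid (m_{ij})]$ factorizes and depends on $G^*$ only through its block-counts. Hence, conditional on the output being simple and on $(m_{ij})$, the shuffle gives the uniform distribution on simple graphs with those block-counts.

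On the SBM side, the edges in each block $(i,j)$ are i.i.d.\ Bernoulli with parameter $p_{ij}$, so conditional on the block containing exactly $m_{ij}$ edges, every placement is equally likely; thus $G_{\mathrm{SBM}} \mid (m_{ij})$ is uniform on the same finite set. Combined with the previous paragraph, the two conditional distributions coincide; and since the shuffle is $(m_{ij})$-preserving, the block-count marginals match before any conditioning, so the laws agree in the appropriate sense.

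The main technical subtlety I anticipate is that conditioning on ``simple'' does re-weight the marginal distribution of $(m_{ij})$ under the shuffle by the factor $\Pr[\text{simple} \mid (m_{ij})]$, which genuinely depends on $(m_{ij})$. The cleanest way to phrase the lemma so that it is literally correct -- and the form actually needed later to transfer whp statements from the configuration model to the SBM -- is the stronger conditional identity $G_{\mathrm{config}} \mid (m_{ij}),\text{simple} \stackrel{d}{=} G_{\mathrm{SBM}} \mid (m_{ij})$. In the sparse regime studied in this paper, one further checks that $\Pr[\text{simple} \mid (m_{ij})] = \Omega(1)$ uniformly, so conditioning on simple costs at most constant factors in event probabilities and therefore preserves the whp statements we care about.
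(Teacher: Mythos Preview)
Your proposal is correct and follows essentially the same approach as the paper: both argue via the $m_{ij}!$ count of ordered edge lists producing a given simple graph, together with independence across blocks, to conclude uniformity over simple graphs with given block-counts. Your version is in fact more careful than the paper's, which asserts without comment that the $m_{ij}$ marginals are unchanged after conditioning on simplicity; your observation that this is not literally true and that the operationally relevant statement is the conditional identity $G_{\mathrm{config}} \mid (m_{ij}),\text{simple} \stackrel{d}{=} G_{\mathrm{SBM}} \mid (m_{ij})$ (combined with the $\Omega(1)$ simplicity probability from the next lemma) is a genuine refinement rather than a departure.
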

\begin{proof}    
If the output is simple, then for each pair of classes the size $m_{ij}$ of the edge set is distributed the same as the stochastic block model, and all edge sets of that size are equally likely (since there are exactly $m_{ij}!$ possible ordered lists of edges that produce that edge set, and all of those are equally likely). As in the stochastic block model, the values of $m_{ij}$ and the distribution of edges within those are independent for all pairs $i$ and $j$. So, this is the same distribution. 
\end{proof}
\begin{lemma}
If all label classes are of size $\Theta(n)$ and each pair of labels has $\mathcal{O}(n)$ edges between them, then with probability bounded away from zero the blocked configuration model produces a simple graph.
\end{lemma}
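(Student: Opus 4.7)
The plan is to show that the number of ``defects'' (self-loops and parallel edges) has $O(1)$ expectation and converges in distribution to a vector of independent Poissons, so the probability of producing a simple graph converges to a strictly positive constant. This is the standard method of moments (Brun sieve) approach used for the ordinary configuration model, carried out block-by-block.

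First I would partition the defects into types: for each class $i$, a count $L_i$ of self-loops produced inside class $i$, a count $D_{ii}$ of within-class parallel edges, and for each $i<j$ a count $D_{ij}$ of parallel edges between classes $i$ and $j$. Because the pairings for different label pairs $(i,j)$ are chosen independently, these counts are functions of essentially independent sub-experiments (except that the two sides of a cross-pair share no structure with the within-pair experiment), so I can analyze each type separately and then combine. The next step is to compute each first moment. For $L_i$, the standard balls-in-bins calculation gives that the expected number of ordered pairs of half-edges sharing a vertex is $\sum_v \mathbb{E}[d_v(d_v-1)] = \Theta(m_{ii}^2/|S_i|)$, and each such pair is matched with probability $1/(2m_{ii}-1)$, so $\mathbb{E}[L_i]=\Theta(m_{ii}/|S_i|)=O(1)$ under the hypotheses. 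Analogous counts give $\mathbb{E}[D_{ii}] = O(m_{ii}^2/|S_i|^2) = O(1)$ and $\mathbb{E}[D_{ij}] = O(m_{ij}^2/(|S_i||S_j|)) = O(1)$.

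Next I would bound the higher factorial moments. For any fixed $r$ and any fixed multi-index of defect types, the $r$-th joint factorial moment counts ordered $r$-tuples of disjoint ``potential defect'' configurations. Each such configuration uses only $O(1)$ half-edges, so picking $r$ disjoint configurations uses only $O(r)$ of the $\Theta(n)$ half-edges on each side. Consequently the probability that all $r$ configurations are simultaneously realized by the uniform pairing factors, up to a multiplicative $1+O(1/n)$ error, as the product of the individual probabilities; and the combinatorial count of same-vertex half-edge pairs factors similarly (differing from the product of individual counts by a $O(n^{-1})$ fraction that accounts for reuse of a common vertex). Putting these together shows that each $r$-th factorial moment converges to $\prod \lambda_T^{r_T}$, which is exactly the condition required for joint convergence to independent Poissons with parameters $\lambda_T$ (the limiting values of the first moments above).

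Finally, the method of moments (or the Brun sieve directly) yields
\[
\Pr[\text{graph is simple}] = \Pr\big[L_i=0,\,D_{ii}=0\ \forall i,\ D_{ij}=0\ \forall i<j\big] \longrightarrow \prod_T e^{-\lambda_T} > 0,
\]
which is bounded away from zero as desired. The main obstacle is the factorial-moment factorization in the previous paragraph: one must carefully account for the cases where two ``potential defect'' configurations share a vertex (without sharing a half-edge), and show that such overlap contributions are suppressed by a factor of $1/n$ and hence vanish relative to the disjoint contribution. Once this is verified, conditioning on the defect types being independent sub-experiments makes the Poisson limit (and hence the positivity of the limit of $\Pr[\text{simple}]$) immediate.
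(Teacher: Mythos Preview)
Your proposal is correct but takes a heavier route than the paper's proof. The paper exploits the equivalent description of the blocked configuration model as ``for each pair $(i,j)$, choose an ordered list of $m_{ij}$ edges uniformly at random with possible repeats,'' with the experiments for distinct pairs being fully independent. Under that description one can write down (or lower-bound) the simplicity probability for each pair directly: between classes it is $\prod_{k=0}^{m_{ij}-1}\bigl(1 - k/(|S_i|\,|S_j|)\bigr)$, and within a class there is an analogous product handling self-loops and repeats; both products are bounded below by $\exp(-\Theta(1))$ when $|S_i|,|S_j|=\Theta(n)$ and $m_{ij}=O(n)$. Multiplying over the constantly many pairs finishes the argument in a couple of lines.

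Your method-of-moments/Poisson-convergence argument proves a strictly stronger statement (the joint limiting law of the defect counts) and is the standard approach for the ordinary configuration model, so it is perfectly acceptable here. Two minor remarks: first, since the hypotheses only give $|S_i|=\Theta(n)$ and $m_{ij}=O(n)$ without assuming the ratios converge, your Poisson parameters $\lambda_T$ need not have limits; you should either pass to subsequences or, more simply, note that the uniformly bounded first moments already give $\Pr[\text{simple}]\geq \exp\bigl(-\sum_T \mathbb{E}[\text{defects of type }T]\bigr)(1-o(1))$ via Bonferroni bounds. Second, the independence across label pairs (which the paper also invokes) means you never need to worry about interactions between, say, a within-$i$ loop and an $(i,j)$ multi-edge, so the factorial-moment factorization you flag as the main obstacle really only has to be done one pair at a time---exactly the standard single-configuration-model computation.
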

\begin{proof}
It suffices to show that with probability bounded away from zero the graph between any given pair of labels is simple, because the pairs are independent and there are only constantly many of them. Denote the set of vertices with label $i$ as $S_i$. Within a given label class $i$, the probability of being simple is~\cite{ks-revisited},
\[\left(1 - \frac{1}{|S_i|}\right)^{m_{ii}} \cdot \prod_{j=0}^{m_{ii}-1} \left(1 - \frac{j}{\binom{|S_i|}{2}}\right) 
\geq e^{-\frac{m_{ii}}{2|S_i|}-\frac{m_{ii}^2}{4|S_i|^2}} = \Theta(1).\]
Between two different label classes $i$ and $j$, the probability of being simple is 
\[\prod_{k=0}^{m_{ij}-1} \left(1 - \frac{k}{|S_i| \cdot |S_j|}\right) \geq e^{- \frac{m_{ij}}{|S_i||S_j|}} = \Theta(1).\]
So, the entire multigraph is simple with probability bounded away from 0.    
\end{proof}
Thus, any result which holds with high probability in the blocked configuration model also holds with high probability for the stochastic block model.

\subsection{Proof of Markov Property}\label{appendix:markov}

In this section we give the proof of \cref{lem:markov-holds}.

\begin{proof}
We proceed by induction. First, note that we start with a blocked configuration model with the given edge counts -- once we discard multigraphs where the number of thin and fat vertices don't fit $T_i$, $F_i$, we will still be uniform over the remaining multigraphs. Now, suppose that, conditional on the value of its tuple $Y_i$, the distribution of multigraphs after $i$ steps of the algorithm follows the blocked configuration model conditioned on number of thin and fat vertices. We would like to show that after $i+1$ steps, conditional on any fixed $Y_{i+1}$, the distribution still follows blocked configuration model. To do so, we let $Y_{i+1}$ be arbitrary, and let $G_1, G_2$ be outputs of the blocked configuration model (that is, ordered lists of edges for each pair of labels) that correspond to multigraphs with tuple $Y_{i+1}$ -- we will argue that the probability of reaching $G_1$ after $i+1$ steps is the same as the probability of reaching $G_2$, meaning that all such multigraphs are equally likely (this is what it means to follow blocked configuration model distribution). The way to show this is a straightforward counting argument: enumerate all the ways to reach $G_1$ or $G_2$ from a multigraph in the previous step, then appeal to the inductive hypothesis to argue that the equality of these counts implies equal probabilities. To lay out this argument explicitly, start with either $G_1$ or $G_2$ and imagine performing the following "undoing" process:
\begin{enumerate}
    \item First, choose a value for the tuple $Y_i$ of the multigraph preceding $G$. Of course, for some choices of $Y_i$ there are no ways to yield $G$ on the next step (for instance, if $Y_i$ has fewer total edges than $Y_{i+1}$, there is no way to choose a graph with tuple $Y_i$ and get a multigraph with tuple $Y_{i+1}$ after a step of Karp--Sipser) -- the goal is just to show that the number of ways to yield $G_1$ is always the same as the number of ways to yield $G_2$.
    \item Next, choose which two label classes $a$ and $b$ the $i+1$st step of the algorithm chose an edge between. (Again, it is possible that some of these choices won't correspond to any valid predecessor states.)
    \item When the algorithm chooses an edge for the matching, it removes all edges adjacent to the endpoints from the multigraph. So, the edge chosen on step $i + 1$ must have been between two vertices that are now degree 0 in $G$. The number of degree-0 vertices in a given class $j$ is equal to the total number of vertices in that class minus $F_j + T_j$, so the number of ways to choose one degree-0 $a$-vertex and one degree-0 $b$-vertex is the same for either $G_1$ or $G_2$. We choose one such pair $(x, y)$ to be the edge chosen in step $i+1$ of the algorithm.
    \item Once we fix which $(x, y)$ was chosen, determining the multigraph state prior to step $i+1$ just entails choosing what neighbour multisets $x$ and $y$ had. We have the following constraints on what vertices we can choose for these edges sets:
    \begin{itemize}
        \item $Y_i$ encodes the number of degree-1 vertices in the preceding graph -- if this number is greater than 0, then the edge chosen by the Karp--Sipser algorithm must involve such a thin vertex, and so either $x$ or $y$ must have no additional edges, so we now choose which one.
        \item Because we know $Y_i$ and $Y_{i+1}$, we know for each pair of classes exactly how many edges the $i+1$st step removed between that pair. So, we have to ensure that the number of edges we add of each type exactly equals these values. 
        \item Similarly, the difference between $Y_i$ and $Y_{i+1}$ tells us for each class $j$ the change in the number of thin and fat vertices in that class. A valid choice of edge set must, for each class $j$, add an edge to exactly $- \Delta F_j - \Delta T_j$ degree-0 vertices of class $j$; the number of ways to choose those vertices is $\binom{\text{initial size of class $j$} - (F_j)_i - (T_j)_i}{- \Delta F_j - \Delta T_j}$. Then, we have to add edges to exactly $ - \Delta F_j$ degree-1 vertices of class $j$; there are $\binom{(T_j)_i - \Delta F_j - \Delta T_j}{- \Delta F_j}$ ways to choose those. Any remaining edges must be distributed over vertices of degree at least 2. 
    \end{itemize}
    \item Finally, once we fix which edges to add, we can choose any indices within the ordered edge-lists to add them.
\end{enumerate}
The crucial point of this is that the number of options available as we progress from one step to the next in the procedure is always the same regardless of whether we're considering $G_1$ or $G_2$ -- the restrictions on edge sets depend only on the number of edges between each pair of classes, and the number of thin/fat vertices in each class, all of which is the same for both $G_1$ and $G_2$ because we're conditioning on them having the same tuple. So, we find that for any given predecessor tuple $Y_i$, the number of ways to choose a multigraph of tuple $Y_i$, remove a valid edge, and reach $G_1$ is exactly the same as the number of ways to choose a multigraph of tuple $Y_i$, remove a valid edge, and reach $G_2$. By the inductive hypothesis, all ordered edge-lists with the same $Y$-tuple are equally likely on step $i$. The number of valid edges to remove from an ordered edge-list just depends on how many edges there are and whether there are thin vertices, both of which are captured in the tuple. So, all ways to choose a multigraph of tuple $Y_i$ and remove an edge are equally likely to have occurred on step $i+1$ of the algorithm, meaning that the resulting multigraph is equally likely to be $G_1$ or $G_2$. 
\end{proof}

\subsection{Degree Distribution}\label{appendix:degrees}
Here we prove \cref{lem:degreedist}.
\begin{proof}First of all, we are conditioning on exactly how many degree 1 vertices there are in $G_n$, and so we know that the probability of a random half-edge being incident to one of them is exactly the number of degree-1 vertices divided by the total number of half-edges. Among the remaining vertices, note that in our model all half-edges are treated indistinguishably regardless of the label of the other half. So, the question of the degree of the vertex attached to a random half-edge, conditional on that vertex having degree at least 2, is equivalent to asking ``Suppose I throw ${\left( (\sum_l E_{il}) - T_i \right)}/ {F_i}$ balls (the total number of half-edges associated to the fat vertices) into $F_i$ bins. If I condition on all bins having at least 2 balls, how many balls are in the bin the first ball landed in?". Note that the bin the first ball ends up in is a uniform random bin independent of the other balls, so the probability that it ends up with $k$ balls in it converges to the probability that a randomly selected bin has $k - 1$ balls. We claim that this distribution converges to the above truncated Poisson distribution; a proof of this is given by Aronson, Frieze, and Pittel by showing that a sum of truncated Poissons is highly concentrated around its mean, which allows us to ignore the conditioning on all the other bins \cite{ks-revisited}. 
\end{proof}

\remark{Another point we note here is that the probability that a given vertex is attached to any self-loops or multiple edges is negligible, so long as the average degrees in each class remain bounded by a constant. In this case, we know that with high probability the maximum degree in the graph is $\mathcal{O}(\frac{\log n}{\log \log n})$. If a vertex $v$ has $\mathcal{O}(\frac{\log n}{\log \log n})$ edges to class $j$, and class $j$ has at least $n^{0.01}$ vertices (otherwise we can safely ignore all effects of class $j$), then the probability that two of those edges are to the same vertex tends to 0 in $n$. Since average degree will indeed stay bounded by a constant, this justifies not worrying about multiple edges. 
}

\section{Games on Multitype Branching Processes}\label{appendix:games}
This appendix concerns games played on multitype branching processes. The setting and results are the same as those considered by Holroyd and Martin with regards to the normal game on GW trees, but extended slightly to cover the multitype case \cite{games}. At the end of this section, we will explain the connection to our claim about criticality of the Karp--Sipser core.

\subsection{Definitions}

\begin{definition}
    A game on a (possibly infinite) game tree is played as follows: 
    Both players know the full structure of the tree. On a player's turn, they choose one child of the current node to walk down to, and the next player starts at that node. A player loses when there are no valid moves (i.e., when they start their turn on a leaf node). We denote by $\mathcal{N}_d$ the set of states from which an optimal player can force a win within $d$ turns of the game, $\mathcal{P}_d$ the set of states from which a player is guaranteed to lose within $d$ steps if their opponent plays optimally, and $\mathcal{D}_d$ to be the set of states in neither $\mathcal{N}_d$ nor $\mathcal{P}_d$ (i.e., the set of states in which either player can force the game to continue for $d$ steps if the other is unwilling to lose).
\end{definition}

We will consider games played on a random game tree drawn from a multitype branching process. Although for the purposes of the results in the paper we only need to understand the case where the offspring distributions are independent Poissons in each type, we will do our proofs here for arbitrary offspring distributions.

\begin{definition}
    Consider a multitype branching process with types $1 \dots q$, where the probability of a node of type $i$ having offspring set containing exactly $o_j$ offspring of each type $j$ is $p_i(o_1, \dots, o_q)$. The \textbf{offspring generating function} for this process is the polynomial function
    \[G: [0,1]^q \rightarrow [0,1]^q\]
\[
\begin{bmatrix}
    x_1 \\
    \dots \\
    x_q \\
\end{bmatrix}
\mapsto 
\sum_{o_1 = 0}^\infty \dots \sum_{o_q = 0}^\infty \left( \prod_{j} x_j^{o_j} \cdot 
\begin{bmatrix}
     p_1(o_1, \dots, o_q) \\
    \dots \\
     p_q(o_1, \dots, o_q) \\
\end{bmatrix}
\right)
\]
\end{definition}

For fixed offspring distributions, we are interested in the values of $N_d, P_d, D_d \in [0,1]^q$, where the $i$th entry of $N_d$ is the probability that the root of this branching process is an $\mathcal{N}_d$ node, given that it is of class $i$ (likewise $P_d$ and $D_d$ are the probabilities the root is in $\mathcal{P}_d$ or $\mathcal{D}_d$, respectively). 

\subsection{Convergence}

We can calculate the values for $N_d$, $P_d$, and $D_d$ recursively. Note the following:
\begin{itemize}
    \item A vertex is in $\mathcal{N}_d$ if and only if it has at least one child in $\mathcal{P}_{d-1}$
    \item A vertex is in $\mathcal{P}_d$ if and only if all of its children are in $\mathcal{N}_{d-1}$
    \item A vertex is in $\mathcal{D}_d$ when neither of the above conditions are true (and, in particular, when $d = 0$)
\end{itemize}

We can consider each of the root's children to be independent multitype branching processes, with roots of their given types. This yields the following recursions for these values in terms of the offspring generating function:
\begin{equation*}
    N_d  = \mathbf{1} - G(\mathbf{1} - P_{d-1}), \ P_d = G(N_{d-1}) \ \text{ and }\ D_d = \mathbf{1} - N_d - P_d.
\end{equation*}
Letting $F(x) = \mathbf{1} - G(\mathbf{1}-G(x))$, these recurrences become 
\begin{equation*}
    N_{d+2} = F(N_d) \ \text{ and } \ \mathbf{1} - P_{d+2} = F(\mathbf{1} - P_d).
\end{equation*}
Note that for any $k$, $N_{2k} = N_{2k+1}$, $P_{2k} = P_{2k+1}$, and $D_{2k} = D_{2k+1}$; this is because the player starting from the root will make all the odd moves of the game, so they can force the other player to get stuck in the first $2k + 1$ steps if and only if they can force a them to get stuck in the first $2k$ steps. So, to understand the behaviour of these values for large $d$, it suffices to show that the even values of both $N_d$ and $P_d$ converge to fixed points. That is, we want to describe the eventual behavior of $F \circ F \circ \dots \circ F (\mathbf{0})$ and $F \circ F \circ \dots \circ F(\mathbf{1})$ -- if we can show that these converge to fixed values, $N_d$ and $1 - P_d$ also converge to those values, respectively. This fact is implied by the following lemma:
\begin{lemma}
    If a function $F: [0,1]^q \rightarrow [0,1]^q$ has every output coordinate continuous and non-decreasing in every input coordinate, then there exists a minimal fixed point $x$ of $F$; that is, a fixed point of $F$ such that for all $i$, the $i$th coordinate of $x$ is minimal among all the $i$th coordinates of fixed points of $F$; iterating $F$ starting from $\mathbf{0}$ will reach this point. Similarly, there exists a maximal fixed point of $F$; iterating $F$ starting from $\mathbf{1}$ will reach that point.
\end{lemma}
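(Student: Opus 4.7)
The plan is to establish the minimal fixed point via the standard Knaster--Tarski-style monotone iteration argument, using the monotonicity and continuity of $F$ to show that iterates starting from $\mathbf{0}$ form a bounded monotone sequence whose limit is both a fixed point and minimal. The analogous argument starting from $\mathbf{1}$ gives the maximal fixed point.

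First, I would define $x_0 = \mathbf{0}$ and $x_{n+1} = F(x_n)$. Since $F$ maps $[0,1]^q$ into itself, we have $x_0 \le F(x_0) = x_1$ coordinate-wise (here and below, $\le$ denotes the coordinate-wise partial order on $[0,1]^q$). Applying monotonicity of $F$ (in each output coordinate with respect to each input coordinate) to the inequality $x_n \le x_{n+1}$, we get $x_{n+1} = F(x_n) \le F(x_{n+1}) = x_{n+2}$, so by induction the sequence is coordinate-wise non-decreasing. Since each coordinate lies in $[0,1]$, it is bounded above, so the sequence converges coordinate-wise to some limit $x^\star \in [0,1]^q$. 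By continuity of $F$, passing to the limit in $x_{n+1} = F(x_n)$ yields $x^\star = F(x^\star)$, so $x^\star$ is a fixed point.

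Next, I would verify minimality. Let $y \in [0,1]^q$ be any fixed point of $F$. Then $x_0 = \mathbf{0} \le y$, and monotonicity gives $x_1 = F(x_0) \le F(y) = y$. By induction, $x_n \le y$ for every $n$, and taking the coordinate-wise limit gives $x^\star \le y$. Thus $x^\star$ is below every fixed point in the coordinate-wise order, which in particular is stronger than the stated conclusion that each coordinate of $x^\star$ is minimal among corresponding coordinates of fixed points.

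The argument for the maximal fixed point is symmetric: set $y_0 = \mathbf{1}$ and $y_{n+1} = F(y_n)$. Then $y_1 = F(\mathbf{1}) \le \mathbf{1} = y_0$, and monotonicity propagates this downward, giving a non-increasing sequence bounded below by $\mathbf{0}$. Its coordinate-wise limit $y^\star$ is a fixed point by continuity, and for any other fixed point $y$, one shows $y \le y_n$ for all $n$ by induction starting from $y \le \mathbf{1}$, hence $y \le y^\star$.

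There is no real obstacle here; the only thing to be slightly careful about is that the monotonicity hypothesis is stated per-coordinate, but this is precisely equivalent to $F$ being monotone with respect to the product partial order on $[0,1]^q$, which is exactly what the induction uses. Continuity is used only in the final step of passing to the limit, and coordinate-wise convergence of the bounded monotone sequence is automatic from the one-dimensional monotone convergence theorem applied in each coordinate.
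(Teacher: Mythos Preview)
Your proof is correct. The paper's own proof is a one-line appeal to Tarski's fixed point theorem on the complete lattice $([0,1]^q,\le)$, whereas you unpack the standard monotone-iteration argument directly. The underlying idea is the same, but your version is more self-contained and, in one respect, more complete: Tarski's theorem by itself only guarantees the existence of least and greatest fixed points, not that the \emph{countable} iteration $F^n(\mathbf{0})$ actually converges to the least one (in a general complete lattice one may need transfinite iteration). Your explicit use of coordinate-wise monotone convergence together with continuity of $F$ is exactly what closes that gap, so your write-up in fact justifies the ``iterating $F$ reaches this point'' clause more carefully than the paper's citation does.
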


\begin{proof}
This statement is directly implied by Tarski's fixed point theorem, applied on the complete lattice obtained by coordinate-wise ordering on $[0,1]^q$ (that is, $x \leq y$ if and only if $x_i \leq y_i$ for all $i$) \cite{tarski}. 
\end{proof}

Thus, we have that $\lim_{d\rightarrow \infty} N_d$ is equal to the minimum fixed point of $F$, $\lim_{d\rightarrow \infty} P_d$ is equal to $\mathbf{1}$ minus the maximum fixed point of $F$ and  $\lim_{d \rightarrow \infty} D_d$ is equal to the maximum fixed point of $F$ minus the minimum fixed point of $F$.

In particular, the probability of neither player having a winning strategy within $d$ steps tends to 0 in the limit of $d$ for all root types if and only if $F$ has exactly one fixed point on $[0,1]^q$.

\subsection{Implications for Karp--Sipser core}

The connection between this analysis of games and our analysis of the Karp--Sipser algorithm comes from the following fact:

\begin{lemma}
    If the root of a tree is in $\mathcal{N}_d$ or $\mathcal{P}_d$, there exists a sequence of removals that the Karp--Sipser algorithm can make within depth $d$ that result in the removal of the root.
\end{lemma}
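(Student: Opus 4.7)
The plan is to proceed by strong induction on $d$, treating the $\mathcal{N}_d$ and $\mathcal{P}_d$ cases separately at each inductive step. The base case $d = 0$ is immediate: $\mathcal{N}_0$ is empty, and a root in $\mathcal{P}_0$ has no children, so already qualifies as ``removed'' (isolated) in the sense of the preceding fact.

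For the inductive step, suppose the claim holds for all $d' < d$. First consider the case $r \in \mathcal{N}_d$, so there exists a child $c$ of $r$ with $c \in \mathcal{P}_{d-1}$. If $c$ itself has no children, then $c$ has degree $1$ in $T$ and Karp--Sipser can immediately match $(r, c)$, removing $r$. Otherwise, every child $g$ of $c$ lies in $\mathcal{N}_{d-2}$, so the inductive hypothesis provides, for each such $g$, a sequence of Karp--Sipser moves confined to the subtree $T_g$ that removes $g$. Because these sequences act in pairwise disjoint subtrees, their concatenation is still a valid Karp--Sipser execution on $T$. Once every child of $c$ has been removed this way, $c$ has degree exactly $1$ in the remaining graph, with its sole surviving edge going to $r$, and Karp--Sipser can then match $(c, r)$, removing $r$. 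All of this happens strictly within depth $d$ of $r$.

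For the case $r \in \mathcal{P}_d$, every child $c$ of $r$ lies in $\mathcal{N}_{d-1}$. Applying the $\mathcal{N}$ argument just given inductively inside each subtree $T_c$, each $c$ can be removed by a sequence of Karp--Sipser moves confined to $T_c$ that culminates in matching $c$ to one of its own children. Concatenating these sequences across all children of $r$ removes every edge incident to $r$, so $r$ ends up isolated, which is again a valid form of removal. The depth used is at most $1 + (d-1) = d$ from $r$.

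The main obstacle is justifying that recursive removals in different subtrees compose cleanly. The key observation is that a Karp--Sipser move on a degree-$1$ vertex $v$ touches only $v$, its unique neighbor $u$, and the edges incident to $u$; when the inductive sequence is run entirely inside a subtree $T_g$, the only way it can affect anything outside $T_g$ is through the eventual removal of the single edge connecting $g$ to its parent. In particular, moves in disjoint subtrees never create or destroy degree-$1$ vertices in unrelated parts of $T$, so the sequences can be concatenated in any order while remaining a valid Karp--Sipser execution, and the promised depth bound $d$ is preserved. The remainder of the argument is a direct bookkeeping verification that the depth used in each recursive call is at most $d - 2$ or $d - 1$ as appropriate.
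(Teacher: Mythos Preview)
Your proof is correct and follows essentially the same approach as the paper's: the paper simply cites Karp and Sipser and notes that the argument is a straightforward induction in which $\mathcal{P}$ nodes are those that can (after their subtree is processed) match upward to remove their parent, and $\mathcal{N}$ nodes are those removed by one of their $\mathcal{P}$ children. You have written this induction out explicitly, including the verification that the recursive subtree sequences remain valid Karp--Sipser executions in the full tree, which the paper leaves implicit.
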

\begin{proof}
This fact is shown in Karp and Sipser's original paper \cite{ks}; the proof is a straightforward induction ($\mathcal{P}$ nodes correspond to vertices that can remove their parent, $\mathcal{N}$ nodes correspond to vertices that can be removed by one of their children). 
\end{proof}

So, we find that as long as $D_d \rightarrow 0$, the root of the tree is removed with probability approaching 1; by the above analysis, this occurs whenever $F$ has only one fixed point. Plugging in the case of the process where nodes of type $i$ have independently Pois($c_{ij} \bar{S_j}$) children of type $j$ yields the result we use as \cref{lem:fixed_points_are_what_we_want} in the paper.

\section{Analysis of \shortsighted with two classes}\label{appendix:shortsighted}
We restrict our attention to the case where $q=2$, where the classes are indexed by $0$ and $1$. In this case, $S^{(t)}$ can be computed generally, facilitating the analysis of \shortsighted. We consider states of the form $(i,j,r)$, analogous to $(u_{0}^{(t)}, u_{1}^{(t)}, n-t)$. The first coordinate represents the number of unmatched vertices in $R_0$, the second is the number of unmatched vertices in $R_1$, and $r$ indicates the number of vertices left to arrive. The initial state is $(|R_0|, |R_1|, n)$.


Below we compute for which states $(i,j,r)$ \shortsighted prefers class $R_0$ over $R_1$. If either $i$ or $j$ is $0$, there is no choice to be made, so we consider cases where $i,j>0$. \shortsighted prefers $R_0$ when
\[ \frac{1}{2}\left(1-p_{0,0}\right)^{i-1}\left(1-p_{0,1}\right)^j+\frac{1}{2}\left(1-p_{1,0}\right)^{i-1}\left(1-p_{1,1}\right)^j\leq \frac{1}{2}\left(1-p_{0,0}\right)^{i}\left(1-p_{0,1}\right)^{j-1}+\frac{1}{2}\left(1-p_{1,0}\right)^{i}\left(1-p_{1,1}\right)^{j-1}. \]
Rearranging, we have
\[
\left(1-p_{0,0}\right)^{i-1}\left(1-p_{0,1}\right)^{j-1}\left(p_{0,0}-p_{0,1}\right)\leq \left(p_{1,1}-p_{1,0}\right)\left(1-p_{1,0}\right)^{i-1}\left(1-p_{1,1}\right)^{j-1}. \]
Now we take the following cases:
\begin{enumerate}
    \item $p_{0,0}-p_{0,1}\leq 0$, $p_{1,1}-p_{1,0}\geq 0$: the inequality is always true, so \shortsighted prefers class $0$ where $i,j>0$,
    \item $p_{0,0}-p_{0,1}\geq 0$, $p_{1,1}-p_{1,0}\leq 0$: the inequality is always false, so \shortsighted prefers class $1$ where $i,j>0$,
    \item $p_{0,0}-p_{0,1}, p_{1,1}-p_{1,0} > 0$: the states where \shortsighted will prefer class $0$ are such that
    $$(i-1)\ln\left(\frac{1-p_{0,0}}{1-p_{1,0}}\right)+(j-1)\ln\left(\frac{1-p_{0,1}}{1-p_{1,1}}\right)\leq \ln\left(\frac{p_{1,1}-p_{1,0}}{p_{0,0}-p_{0,1}}\right),$$
    \item $p_{0,0}-p_{0,1}, p_{1,1}-p_{1,0} < 0$: the states where \shortsighted will prefer class $0$ are such that
    $$(i-1)\ln\left(\frac{1-p_{0,0}}{1-p_{1,0}}\right)+(j-1)\ln\left(\frac{1-p_{0,1}}{1-p_{1,1}}\right)\geq \ln\left(\frac{p_{1,1}-p_{1,0}}{p_{0,0}-p_{0,1}}\right).$$
\end{enumerate}
We define the boundary as the line on the $x-y$ plane that separates states where class $R_0$ is preferred and states where class $R_1$ is preferred. Taking $n\rightarrow\infty$, the boundary is given by 
$$
    y=\frac{c_{1,0}-c_{0,0}}{c_{0,1}-c_{1,1}}x+\frac{\ln\left(\frac{c_{0,0}-c_{0,1}}{c_{1,1}-c_{1,0}}\right)}{c_{0,1}-c_{1,1}}n
$$
In our analysis below, it proves useful to rescale $x$ and $y$ by a factor of $\frac{1}{n}$. Our new boundary is given by
\begin{equation}
    y=\frac{c_{1,0}-c_{0,0}}{c_{0,1}-c_{1,1}}x+\frac{\ln\left(\frac{c_{0,0}-c_{0,1}}{c_{1,1}-c_{1,0}}\right)}{c_{0,1}-c_{1,1}}
\end{equation}
Note that the condition for preferring class $R_0$ is given by a linear inequality and is independent of the number of remaining vertices. This independence in time is crucial in being able to analyze \shortsighted in some cases. We refer to $P_c$ the set of $(x,y)$ where class $c$ is preferred. 
\begin{lemma}
    If the boundary does not intersect the rectangle defined by the possible range of $x$ and $y$ ($x\in [0,|R_0|/n], y\in [0,|R_1|/n]$) or if the slope of the boundary is non-positive, then \shortsighted is of the following form for some $c$ and $T$ (where $T$ may be $n$): we prefer class $R_c$ up until there are $T$ vertices left to arrive; thereafter we prefer the opposite class, $R_{1-c}$. In these cases, we can numerically compute the expected size of matching that \shortsighted produces by the following process:
\begin{enumerate}
    \item Determine which class (i.e., $c$) is preferred at the point $(|R_0|, |R_1|)$, the starting state.
    \item If $c=0$, solve the following system of differential equations with initial conditions $(a(0), b(0))=\left(\frac{|R_0|}{n},\frac{|R_1|}{n}\right)$:
    $$a'(t)=-\frac{1}{2}\left(2-e^{-c_{0,0}a(t)}-e^{-c_{1,0}a(t)}\right)$$
    $$b'(t)=-\frac{1}{2}\left(e^{-c_{0,0}a(t)}(1-e^{-c_{0,1}b(t)})+e^{-c_{1,0}a(t)}(1-e^{-c_{1,1}b(t)})\right)$$
    Otherwise, if $c=1$, solve the system below:
    $$a'(t)=-\frac{1}{2}\left(e^{-c_{0,1}b(t)}(1-e^{-c_{0,0}a(t)})+e^{-c_{1,1}b(t)}(1-e^{-c_{1,0}a(t)})\right)$$
    $$b'(t)=-\frac{1}{2}\left(2-e^{-c_{0,1}b(t)}-e^{-c_{1,1}b(t)}\right)$$
    Let $(a(t),b(t))$ be the solution to this system. If $(a(1), b(1))\in P_c$, then the expected size of the matching is simply $n(1-a(1)-b(1))$.\\
    Otherwise, if $(a(1), b(1))\notin P_c$, solve for $T$ such that $(a(T), b(T))$ lies on the boundary. Then solve the other system of equations above (i.e., solve the system corresponding to $1-c$), with initial conditions at $t=0$ being $(a(T), b(T))$. Denote the solution $(f,g)$. The expected matching size is given by $n(1-f(1-T)-g(1-T))$.
\end{enumerate}
\end{lemma}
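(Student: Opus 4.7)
The plan is to combine a monotonicity argument bounding the number of times \shortsighted switches its preferred class with two Wormald-theorem ODE approximations stitched together at the switching time. First I would observe that the preference condition derived just above depends only on $(u_0^{(t)}, u_1^{(t)})$ and is independent of the remaining time $r$, so the scaled feasible rectangle $[0, |R_0|/n] \times [0, |R_1|/n]$ is partitioned by a fixed line into two half-planes $P_0$ and $P_1$ (possibly with one of them empty). Next I would check that both of the stated ODE systems satisfy $a'(t) \leq 0$ and $b'(t) \leq 0$, since in each case the right-hand side is (up to sign) a non-negative combination of probabilities; hence the continuous trajectory $(a(t), b(t))$ is componentwise non-increasing.

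Under the first hypothesis of the lemma the whole rectangle lies in one $P_c$, so the algorithm never switches and we may take $T = n$. Under the second hypothesis I write the boundary as $y = mx + k$ with $m \leq 0$ and consider $\phi(t) = b(t) - m a(t) - k$; then $\phi'(t) = b'(t) + (-m) a'(t)$ is a sum of two non-positive quantities, because $-m \geq 0$ and $a'(t) \leq 0$. Hence $\phi$ is monotone non-increasing, so the trajectory crosses the boundary at most once, which gives the claimed ``prefer $R_c$ until time $T$, then prefer $R_{1-c}$'' form.

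The ODE derivation itself parallels the calculation already carried out for \dumb. Conditional on $(u_0^{(t)}, u_1^{(t)}) = (na, nb)$ and on $c$ being the currently preferred class, the incoming left vertex belongs to each left class with probability $\tfrac12$, and independently has at least one edge to $R_c$ with probability $1 - e^{-c_{ic} a}$ (and similarly for $R_{1-c}$) up to $o(1)$ as $n \to \infty$. An $R_c$ vertex is removed whenever an edge to $R_c$ is present; an $R_{1-c}$ vertex is removed only when no edge to $R_c$ is present but an edge to $R_{1-c}$ is. Reading off the expected one-step changes and rescaling time by $1/n$ recovers both stated systems. Applying Wormald's theorem (as in \cref{phase1ode}) on any sub-interval during which the preferred class is constant shows that $(u_0^{(t)}/n, u_1^{(t)}/n)$ stays within $o(1)$ of the corresponding ODE solution with high probability, so the expected matching size equals the number of removed right vertices, namely $n(|R_0|/n + |R_1|/n) - n(a(1) + b(1)) = n(1 - a(1) - b(1))$.

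The main obstacle I expect is making the transition at the switching time $T$ rigorous, since near the boundary the preference is sensitive to $O(1/n)$ fluctuations in the empirical state. I would address this by observing that $\phi' < 0$ strictly in a neighborhood of the crossing (transversality), so the random preference agrees with the deterministic preference except in a window of $o(n)$ steps around $T$; inside that window both ODE vector fields remain uniformly bounded, contributing only an $o(n)$ correction to the matching size. A second application of Wormald's theorem starting from the near-boundary state then controls the interval $[T, 1]$ using the opposite ODE, completing the argument.
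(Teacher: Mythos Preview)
Your proposal is correct and follows essentially the same approach as the paper: time-independence of the preference region, componentwise monotonicity of the trajectory, a single-crossing argument for a non-positive-slope boundary, and Wormald's theorem applied phase by phase. One simplification the paper uses is worth adopting: rather than arguing $\phi'\le 0$ for the ODE and then handling $O(1/n)$ fluctuations near the boundary, observe directly that every step of the \emph{discrete} random process either fixes both coordinates or decrements exactly one of them, so the actual random path is coordinatewise non-increasing and therefore crosses any line of non-positive slope at most once \emph{pathwise}; this makes the single-switch form hold for every realization and removes the transversality concern you raise, leaving Wormald only the job of computing the matching size on each of the (at most) two constant-preference intervals.
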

\begin{proof}
    If the boundary does not intersect the feasible region for $(x,y)$ (i.e., $x\in[0,|R_0|/n]$, $y\in[0,|R_1|/n]$), then \shortsighted will simply prefer one class at all times. Note that any possible "path" of the algorithm may only move down and left (as the number of unmatched vertices in each class can only decreasing over time). Therefore, if the boundary intersects the feasible region but has negative slope, any possible "path" can only intersect the boundary at most once at some time, where before this time the path will live in $P_c$ and after it will live in $P_{1-c}$ for some $c$. In both cases, the number of unmatched vertices can be represented by a Markov Chain (or two), and we can use Wormald's theorem to describe its behavior through the systems of differential equations given above. We discuss the specifics of our use of Wormald's theorem in \cref{appendix:wormwald}.
\end{proof}

The case where the slope of the boundary is positive is more tricky because we cannot guarantee that the path of the algorithm only crosses the boundary at most once (though in practice, \shortsighted seems well-behaved). Nevertheless, the above lemma covers $2/3$ of the simplex of values of $p_{i,j}$.

\section{Conditions of Wormald's Theorem}\label{appendix:wormwald}
We on several occasions in this paper claim that particular Markov processes remain close to a limiting system of differential equations. In this section, we step through for each of those instances the justification of those claims. The key tool is a theorem of Wormald, restated in its general form below. Here, $n$ indexes a family of discrete time random processes, each of which has ``history" sequence $H_n \in S_n^+$. The notation $Y_t$ is shorthand for $y(H_t)$.

\begin{theorem}[Wormald \cite{wormald}]
Let $a$ be fixed. For $1 \leq l \leq a$, let $y^{(l)}: \bigcup_n S_n^+ \rightarrow \mathbb{R}$ and $f_l : \mathbb{R}^{a+1} \rightarrow \mathbb{R}$, such that for some constant $C$ and all $l$, $|y^{(l)}(h_t)| < Cn$ for all $h_t \in S_n^+$ for all $n$. Suppose also that for some function $m = m(n)$:
\begin{enumerate}
    \item for some functions $w = w(n)$ and $\lambda = \lambda(n)$ with $\lambda^4 \log n < w < n^{2/3}/\lambda$ and $\lambda \rightarrow \infty$ as $n \rightarrow \infty$, for all $l$ and uniformly for all $t < m$,
    \[\mathbb{P}\left[ |Y^{(l)}_{t+1} - Y^{(l)}_t| > \frac{\sqrt{w}}{\lambda^2 \sqrt{\log n}} \Big| H_t\right] = o(n^{-3});\]
    \item for all $l$ and uniformly over all $t < m$, we always have
    \[\mathbb{E}(Y^{(l)}_{t+1} - Y^{(l)}_t | H_t) = f_l(t/n, Y_t^{(1)} / n, \dots, Y^{(a)}_t/n) + o(1);\]
    \item for each $l$, the function $f_l$ is continuous and satisfies a Lipschitz condition on $D$, where $D$ is some bounded connected open set containing the intersection of $\{(t, z^{(1)}, \dots, z^{(a)}): t \geq 0\}$ with some neighbourhood of $\{(0, z^{(1)}, \dots, z^{(a)}): \mathbb{P}(Y^{(l)}_0 = z^{(l)}n, 1 \leq l \leq a) \neq 0$ for some $n\}$.
\end{enumerate}
Then,
\begin{enumerate}
\item For $(0, \hat{z}^{(1)}, \dots, \hat{z}^{(a)}) \in D$, the system of differential equations
\[\frac{dz_l}{ds} = f_l(s, z_1, \dots, z_a), \hspace{40 pt} l = 1,\dots, a,\]
has a unique solution in $D$ for $z_l: \mathbb{R} \rightarrow \mathbb{R}$ passing through
\[z_l(0) = \hat{z}^{(l)}, \hspace{40 pt} 1\leq l \leq a\]
and which extends to points arbitrarily close to the boundary of $D$.
\item Almost surely,
\[Y^{l}_t = n z_l(t/n) + o(n)\]
uniformly for $0 \leq t \leq \min\{\sigma n, m\}$ and for each $l$, where $z_l(t)$ is the solution in (i) with $\hat{z}^{(l)} = Y^{(l)}_0/n$, and $\sigma = \sigma(n)$ is the supremum of those $s$ to which a solution can be extended.
\end{enumerate}
\end{theorem}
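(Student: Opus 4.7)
The proof splits into two essentially independent parts. Conclusion (i) is purely deterministic: hypothesis (3) is exactly what the Picard--Lindel\"of theorem requires, giving a unique local solution through $(0,\hat z^{(1)},\dots,\hat z^{(a)})$, and a standard continuation argument extends it until the trajectory approaches $\partial D$. So the substantive work is in (ii).

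The plan for (ii) is a two-stage comparison: first compare $Y_t$ to a discrete Euler scheme, then compare that scheme to the ODE solution. For the first stage, define the Doob martingale
\[
M^{(l)}_t \;=\; Y^{(l)}_t - Y^{(l)}_0 - \sum_{s=0}^{t-1}\mathbb{E}\bigl[Y^{(l)}_{s+1}-Y^{(l)}_s \,\big|\, H_s\bigr],
\]
and set $\hat z^{(l)}_t = Y^{(l)}_0 + \sum_{s<t} f_l(s/n, Y_s/n)$. Hypothesis (2) immediately gives $Y^{(l)}_t - \hat z^{(l)}_t = M^{(l)}_t + o(n)$. To bound $M^{(l)}_t$, first truncate: let $B$ be the event that some increment $|Y^{(l)}_{s+1}-Y^{(l)}_s|$ exceeds $\sqrt w/(\lambda^2\sqrt{\log n})$. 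Hypothesis (1) together with a union bound over the $m = O(n)$ steps gives $\mathbb{P}(B) = o(n^{-2})$. On $B^c$ one can apply Azuma--Hoeffding to $M^{(l)}$ with this increment bound, yielding for each fixed $\varepsilon > 0$
\[
\mathbb{P}\bigl(|M^{(l)}_t| > \varepsilon n\bigr) \;\le\; 2\exp\!\left(-\frac{\varepsilon^2 n^2 \lambda^4 \log n}{2 m w}\right),
\]
which by the constraint $\lambda^4 \log n < w < n^{2/3}/\lambda$ is $o(1/n)$; a further union bound over $t \le m$ then gives uniform control $|M^{(l)}_t| = o(n)$ with probability $1-o(1)$.

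For the second stage, one interprets $\hat z_t$ as a Riemann-type discretization of $\int_0^{t/n} f_l(s, z(s))\,ds$. The Lipschitz hypothesis, combined with the bound just obtained on $|Y_s - n z(s/n)|$, lets a discrete Gronwall inequality propagate errors forward in $t$: the difference $|\hat z^{(l)}_t/n - z_l(t/n)|$ stays $o(1)$, and hence $Y^{(l)}_t = n z_l(t/n) + o(n)$ uniformly up to $\min\{\sigma n, m\}$.

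The main obstacle is the careful orchestration of the parameter scalings: the lower bound $\lambda^4 \log n < w$ is needed so that Azuma's Gaussian tail dominates the Lipschitz-amplified error accumulated from earlier steps, while the upper bound $w < n^{2/3}/\lambda$ ensures the per-step increment bound is small enough that the total fluctuation over $O(n)$ steps stays $o(n)$. A secondary subtlety is that all the probability bounds must hold \emph{uniformly} over $t$ up to the random stopping time $\min\{\sigma n, m\}$; this is handled by applying the concentration inequality at every deterministic time in the grid $\{0,1,\dots,m\}$ and observing that the relevant bad events are nested as $t$ grows, so a single union bound suffices.
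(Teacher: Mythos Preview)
The paper does not prove this statement at all: it is quoted verbatim as a black-box tool from Wormald~\cite{wormald}, and the surrounding appendix merely verifies that the hypotheses hold in the three specific applications the authors need. So there is no ``paper's own proof'' to compare against.

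That said, your sketch is a faithful outline of the standard argument underlying Wormald's theorem---the Doob decomposition plus Azuma--Hoeffding for the martingale part, followed by a discrete Gronwall comparison to the ODE---and the roles you assign to the two parameter constraints $\lambda^4\log n < w$ and $w < n^{2/3}/\lambda$ are essentially right. One point to tighten: the truncation-then-Azuma step needs a little more care, since conditioning on $B^c$ destroys the martingale property; the usual fix is to replace the increments by their bounded versions and control the discrepancy via the $o(n^{-3})$ tail bound, rather than literally conditioning. But this is a routine technicality in the standard proof, not a gap in your overall strategy.
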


\subsection{Phase 1 of the Karp--Sipser algorithm}

The first time we make use of this differential equations method is in the analysis of the first phase of the Karp--Sipser algorithm. We will outline how to apply Wormald's theorem in this case. Here, we take $Y$ as we defined it, $f$ as the derivative we wrote down for the corresponding ODE, and note that with high probability, $y^{(l)}(h_t) < 100 (\max_{ij} c_{ij})n$ for all sufficiently large $n$ (the $F_i$ and $T_i$ components of $y^{(l)}(h_t)$ are clearly bounded by $n$; whp the edge counts are all initially within a factor of 100 of their expectations, and once they start that way they never increase).  We take the stopping time $m$ to be the first time all of the $T_i$ entries of $Y$ drop below $n^{0.01}$.  Now,
\begin{enumerate}
    \item  Take for instance $w = n^{0.5}$ and $\lambda = \log n$. The probability that any vertex in the initial graph has degree polynomial in $n$ decays exponentially in $n$, and it can be observed that the magnitude of a transition is bounded by twice the maximum degree, so we certainly have the desired condition.
    \item This convergence of expected transition size of the Markov process to $f$ is precisely what is guaranteed by the convergence of our degree estimates. Note that this convergence holds as the total number of thin vertices, fat vertices, and edges is going to infinity, so taking our stopping time to be $m$ prevents border cases once these values drop down to constant sizes.
    \item To show that $f$ is Lipschitz in the neighbourhood of solutions, it suffices to show that solutions to $z'(t) = f(t, z)$ always maintain constant average degree in each label class.  To do so, note that with high probability the initial average degrees are at most $100 (\max_{ij} c_{ij})$ in each class, and then observe from the equations that whenever the average degree in a given class $i$ is more than 100 at some time $t$, $-\sum_j \mathcal{\bar{E}}_{ij}'(t) > \mathcal{\bar{F}}_{i}'(t) + \mathcal{\bar{T}}_{i}'(t)$. 
\end{enumerate}
Note that we have only shown that the conditions of the theorem hold with high probability over initial graph configurations; clearly, this is sufficient for our desired result.

\subsection{Phase 2 of the Karp--Sipser algorithm (equitable case)}
We also use Wormald's theorem to justify our analysis of the second phase of the algorithm in the equitable case. Here, $Y_i$ is the state of the tuple after the $i$th run of the algorithm -- i.e., the $i$th time where there are no thin vertices. We define $f_l$ to be the expected change in the tuple that one of these runs would incur if the transition probability estimates from Section 2.5 held exactly and did not change throughout the run. 

In order for this process to have the desired Lipschitz properties, we will need to define a more restricted domain than the entire possible space of tuples. In particular, we will fix some constants $\gamma$ and $\epsilon$, and consider the domain of $Y$ to consist only of tuples where the average degree into each class differs by at most $\gamma$, and is at least $2 + \epsilon$. The value of $\epsilon$ is chosen in the analysis; to determine $\gamma$, we observe the following:
\begin{itemize}
    \item $\lambda_i$ is defined such that $\frac{\lambda_i (e^{\lambda_i} - 1)}{e^\lambda_i - 1 - \lambda_i}$ is equal to the average degree of class $i$. This is monotonically increasing in the average degree of class $i$; so there is some constant $\lambda > 0$ such that if the average degree in class $i$ is at least $2 + \epsilon$, then $\lambda_i > \lambda$.
    \item The function $\delta_i\theta_i = \frac{\lambda_i}{1 - e^{-\lambda_i}} \cdot \frac{\lambda_i}{e^{\lambda_i} - 1}$ is bounded above by 1 and monotonically decreasing for $\lambda_i > 0$. So, there exists some constant $\eta > 0$ such that $\frac{\lambda}{1 - e^{-\lambda}} \cdot \frac{\lambda}{e^{\lambda} - 1} < 1 - 2 \eta$.
    \item $\lambda_i, \lambda_j \mapsto \frac{\lambda_i}{1 - e^{-\lambda_i}} \cdot \frac{\lambda_j}{e^{\lambda_j} - 1}$ is uniformly continuous, so there exists some constant $\kappa$ such that $|\lambda_i - \lambda_j| < \kappa$ implies that $|\left( \frac{\lambda_i}{1 - e^{-\lambda_i}} \cdot \frac{\lambda_j}{e^{\lambda_j} - 1}\right) - \left(\frac{\lambda_i}{1 - e^{-\lambda_i}} \cdot \frac{\lambda_i}{e^{\lambda_i} - 1}\right)| < \eta$ -- which, if $\lambda_i > 2 + \epsilon$, implies $\left( \frac{\lambda_i}{1 - e^{-\lambda_i}} \cdot \frac{\lambda_j}{e^{\lambda_j} - 1}\right) < 1 - \eta$.
    \item When defined on the domain $[2 + \epsilon, \infty)$, $\lambda_i$ is uniformly continuous as a function of the average degree of $i$. So, we can define $\gamma$ such that $|\text{average degree in class $i$} - \text{average degree in class $j$}| < \gamma$ implies $|\lambda_i - \lambda_j| < \kappa$ whenever average degrees are greater than $2 + \epsilon$.
\end{itemize}

We will define the stopping time $m$ of the process to be the first time it leaves this domain. Note that, as in Phase 1, the value of $|Y|$ is bounded by $100 c_{\max} n$ with probability $1 - o(1)$, since it is initially and can only decrease. We're now ready to verify the criteria of Wormald's theorem.

\begin{enumerate}
    \item Take again $w = n^{0.5}$ and $\lambda = \log n$. We can describe a single run of the algorithm as a branching process: each degree-1 vertex created by the algorithm corresponds to a node that has children according to the number of degree-2 vertices adjacent to its neighbour; this corresponds to expected offspring number of $\delta_i\theta_j < 1 - \eta$. After each removal we are uniform over graphs with the remaining statistics; so, as long as the branching process size is $o(n)$, we can treat the treat these offspring distributions roughly independently for each node (in particular, as long as the branching process is $o(n)$ with high probability, all offspring distributions have expectation at most $1 - \frac{\eta}{2}$ regardless of the values for the other nodes). To prove that $\mathbb{P}\left[ |Y^{(l)}_{t+1} - Y^{(l)}_t| > \frac{\sqrt{w}}{\lambda^2 \sqrt{\log n}} \Big| H_t\right] = o(n^{-3})$, we therefore just need to show that the probability that a Galton--Watson tree with $\mu < 1 - \frac{\eta}{2}$ reaches size $\frac{n^{0.25}}{(\log n)^{5/2}}$ is $o\left( \frac{1}{n^3} \right)$; this follows from a standard Chernoff bound. 
    \item The fact that $\mathbb{E}(Y^{(l)}_{t+1} - Y^{(l)}_t | H_t) = f_l(t/n, Y_t^{(1)} / n, \dots, Y^{(a)}_t/n) + o(1)$ holds is because we expect a constant run duration, and we know our initial degree estimates will hold with small error terms when we remove a constant portion of the graph.
    \item Continuity of $f_l$ is clear from continuity of our degree estimates and the fact that small changes in the edge densities of the graph can't bias the branching process too heavily. Similar justification that it's Lipschitz on the given domain can be found by examining the degree estimate functions.
\end{enumerate}

Since $Y_0/n$ is with high probability $o(1)$ from having equal average degrees, and this ODE keeps equal average degrees equal, with high probability the degrees stay within $o(1)$ of equal until the stopping time.

\subsection{Analysis of \shortsighted}
Define the 2-D Markov Chain $Z_t$, where the first coordinate represented the number of unmatched $R_0$ vertices while the second coordinate represents the number of unmatched $R_1$ vertices at time $t$ during a run of \shortsighted. We have that $Z_0=(|R_0|, |R_1|)$. If we are in the regime where we prefer class $0$, we have transition probabilities as follows:

\begin{align*}\mathbb{P}(Z_{t+1}=(x-1,y)|Z_t=(x,y))= \frac{1}{2}\left(1-\left(1-p_{0,0}\right)^{x}\right)+\frac{1}{2}\left(1-\left(1-p_{1,0}\right)^{x}\right)\rightarrow \frac{1}{2}\left(1-e^{-c_{0,0}x/n}\right)+\frac{1}{2}\left(1-e^{-c_{1,0}x/n}\right)\end{align*}
\begin{align*}\mathbb{P}(Z_{t+1}=(x,y-1)|Z_t=(x,y))= \frac{1}{2}\left(1-p_{0,0}\right)^{x}\left(1-\left(1-p_{0,1}\right)^{y}\right)+\frac{1}{2}\left(1-p_{1,0}\right)^{x}\left(1-\left(1-p_{1,1}\right)^{y}\right) \\ \rightarrow\frac{1}{2}e^{-c_{0,0}x/n}\left(1-e^{-c_{0,1}y/n}\right)+\frac{1}{2}e^{-c_{1,0}x/n}\left(1-e^{-c_{1,1}y/n}\right)\end{align*}
\begin{align*}\mathbb{P}(Z_{t+1}=(x,y)|Z_t=(x,y)) &= \frac{1}{2}\left(1-p_{0,0}\right)^{x}\left(1-p_{0,1}\right)^{y}+\frac{1}{2}\left(1-p_{1,0}\right)^{x}\left(1-p_{1,1}\right)^{y}\\
&\rightarrow\frac{1}{2}e^{-c_{0,0}x/n}e^{-c_{0,1}y/n}+\frac{1}{2}e^{-c_{1,0}x/n}e^{-c_{1,1}y/n}\end{align*}
We now verify the three conditions of Wormald's:
\begin{enumerate}
    \item This simply comes from the fact that at each time step, each coordinate of $Z_t$ can change by at most $1$.
    \item We use the following fact from \cite{er-online}: for $n>0$, $c\leq n/2$, and $x\in[0,1]$, we have 
    $$\biggr\rvert e^{-cx}-\left(1-\frac{c}{n}\right)^{nx}\biggr\rvert\leq \frac{c}{ne}$$
    We can apply this term-wise to each of our probabilities, giving us the desired condition.
    \item $e^x$ is Lipschitz continuous on $[0,1]$, therefore we have the third condition as well. 
\end{enumerate}
An essentially identical proof follows for the Markov chain where we prefer class $1$. Therefore we may apply Wormald's theorem to obtain the differential equations stated in the lemma. The second system of differential equations follows as well, because the initial conditions hold almost surely (within $o(n)$). \\
We have also applied Wormald's in the analysis of \dumb in the equitable case. Due to the similar structure of transition probabilities between \dumb and \shortsighted, the verification of the conditions of Wormald's is also very similar.
\end{document}